\definecolor{darkblue}{rgb}{0,0,0.4}
\newcommand {\mm}[1] {\ifmmode{#1}\else{\mbox{\(#1\)}}\fi}
\newcommand{\Xspace}        {\mm{{X}}}
\newcommand{\Sspace}        {\mm{{S}}}
\newcommand{\Yspace}        {\mm{{Y}}}
\newcommand{\Fcal}        {\mm{{\mathcal F}}}
\newcommand{\Lcal}        {\mm{{\mathcal L}}}
\newcommand{\Xstrata}  {\mm{{\mathfrak X}}} 
\newcommand{\bdr}  {\mm{{\partial}}} 
\newcommand{\str}  {\mbox{{St}}} 
\newcommand{\lk}  {\mbox{{Lk}}}
\newcommand{\closure}  {\mbox{Cl}}
\newcommand{\dime}  {\mbox{dim}}
\newcommand{\Top}{\mathbf{Top}}
\newcommand{\id}{\mm{\text{id}}}
\newtheorem{theorem}{Theorem}[section]
\theoremstyle{definition}
\newtheorem{definition}{Definition}[section]
\newtheorem{proposition}{Proposition}[section]
\newcommand{\para}[1]        {\vspace{2mm}\noindent{\textbf{#1}}}
\newcommand{\proofsketch}        {\noindent{\emph{Proof Sketch.~}}}
\title{Sheaf-Theoretic Stratification Learning \\from Geometric and Topological Perspectives}
\author{Adam Brown\thanks{E-mail: abrown@math.utah.edu.} \\ University of Utah 
\and Bei Wang\thanks{E-mail: beiwang@sci.utah.edu.} \\ University of Utah}
\date{}
\begin{document}

\begin{titlepage}

\maketitle 
\begin{abstract}
In this paper, we investigate a sheaf-theoretic interpretation of stratification learning from geometric and topological perspectives. 
Our main result is the construction of stratification learning algorithms framed in terms of a sheaf on a partially ordered set with the Alexandroff topology. We prove that the resulting decomposition is the unique minimal stratification for which the strata are homogeneous and the given sheaf is constructible. In particular, when we choose to work with the local homology sheaf, our algorithm gives an alternative to the local homology transfer algorithm given in Bendich et al. (2012), and the cohomology stratification algorithm given in Nanda (2017). Additionally, we give examples of stratifications based on the geometric techniques of Breiding et al. (2018), illustrating how the sheaf-theoretic approach can be used to study stratifications from both topological and geometric perspectives. This approach also points toward future applications of sheaf theory in the study of topological data analysis by illustrating the utility of the language of sheaf theory in generalizing existing algorithms.

\end{abstract}
\thispagestyle{empty}
\end{titlepage}

\pagestyle{plain}

\newpage
\setcounter{page}{1}
\section{Introduction}
\label{sec:introduction}
Our work is motivated by the following question: Given potentially high-dimensional point cloud samples, can we infer the structures of the underlying data? 
In the classic setting of \emph{manifold learning}, we often assume the support for the data is from a low-dimensional space with manifold structure. 
However, in practice, a significant amount of interesting data contains mixed dimensionality and singularities. 
To deal with this more general scenario, we assume the data are sampled from a mixture of possibly intersecting manifolds; the objective is to recover the different pieces, often treated as clusters, of the data associated with different manifolds of varying dimensions.
Such an objective gives rise to a problem of particular interest in the field of \emph{stratification learning}. 
Here, we use the word ``stratification learning'' loosely to mean an unsupervised, exploratory, clustering process that infers a decomposition of data into disjoint subsets that capture recognizable and meaningful structural information. 

Previous work in mathematics has focused on the study of stratified spaces under smooth and continuous settings~\cite{GoreskyMacPherson1988,Weinberger1994} without computational considerations of noisy and discrete datasets. 
Statistical approaches that rely on inferences of mixture models and local dimension estimation require strict geometric assumptions such as linearity~\cite{HaroRandallSapiro2005, LermanZhang2010, VidalMaSastry2005}, and may not handle general scenarios with complex singularities. 
Recently, approaches from topological data analysis~\cite{BendichCohen-SteinerEdelsbrunner2007, BendichWangMukherjee2012, SkrabaWang2014}, which rely heavily on ingredients from computational~\cite{EdelsbrunnerHarer2010} and intersection homology~\cite{Bendich2008, BendichHarer2011, GoreskyMacPherson1982}, are gaining momentum in stratification learning. 

Topological approaches transform the smooth and continuous setting favored by topologists to the noisy and discrete setting familiar to computational topologists in practice. 
In particular, the local structure of a point cloud (sampled from a stratified space) can be described by a multi-scale notion of local homology~\cite{BendichCohen-SteinerEdelsbrunner2007}; and the point cloud data could be clustered based on how the local homology of nearby sampled points map into one another~\cite{BendichWangMukherjee2012}.
Philosophically, our main goal is to find a stratification where any two points in the same strata (or cluster) can not be distinguished by homological methods, and any two points in different strata (different clusters) can be distinguished by homological methods. The majority of the paper will be spent developing a rigorous and computable interpretation of the purposely vague statement ``distinguished by homological methods''. Furthermore, we will see that our approach to computing the above stratification applies equally well to sheaves other than those based on local homology. As examples, we describe stratification learning with the combinatorially defined sheaf of maximal elements, and the geometrically motivated pre-sheaf of vanishing polynomials. 
This paper includes the full version of an extended abstract~\cite{BrownWang2018}, and further extends these results by exploring alternatives to homological stratifications which lie in the sheaf-theoretic framework (Sections \ref{sec:maximal-element} and \ref{sec:geometric-sheaf}).
As our work is an interplay between sheaf theory and stratification, we briefly review various notions of stratification before describing our results.  
\subsection{Stratifications}
Given a topological space $\Xspace$, a \emph{topological stratification} of $\Xspace$ is a finite filtration, that is, an increasing sequence of closed subspaces 
$ \emptyset = \Xspace_{-1} \subset \Xspace_0 \subset \cdots \subset \Xspace_d = \Xspace, $
such that for each $i$, $\Xspace_i - \Xspace_{i-1}$ is a (possibly empty) open $i$-dimensional topological manifold. See Figure~\ref{fig:pinchedtorus} for an example of a \emph{pinched torus}, that is, a torus with points along a geodesic with fixed longitude identified, and a spanning disc glued along the equator.

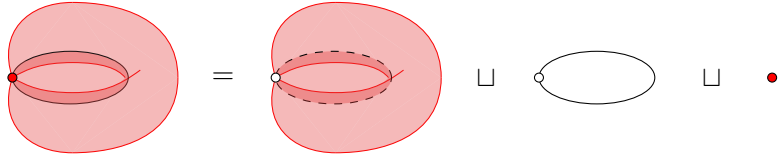
\begin{figure}[t!]

\begin{center}
\begin{tikzpicture}[circ/.style={
    circle,
    fill=white,
    draw,
    outer sep=0pt,
    inner sep=1.2pt
  },circ_red/.style={
    circle,
    fill=red,
    draw,
    outer sep=0pt,
    inner sep=1.2pt
  }]
\draw[red] (-.8,0) .. controls (-1,.6) and (-.6,1) .. (0,1);
\draw[red,xscale=-1] (-1.4,0) .. controls (-1.4,.8) and (-.6,1) .. (0,1);
\draw[red,rotate=180] (-1.4,0) .. controls (-1.4,.8) and (-.6,1) .. (0,1);
\draw[red,yscale=-1] (-.8,0) .. controls (-1,.6) and (-.6,1) .. (0,1);
\draw[red] (-.8,0) .. controls (-.6,-0.12) and (-.4,-0.2) .. (0,-.2) .. controls (.4,-0.2) and (.6,-0.12) .. (.9,0.1);
\draw[red] (-.8,0) .. controls (-.6,0.12) and (-.4,0.2) .. (0,.2) .. controls (.4,0.2) and (.6,0.12) .. (.7,0);
\fill[fill={rgb:brown,20;red,20;magenta,5}, opacity=.35] (0,1)--(-.8,0) .. controls (-.6,0.12) and (-.4,0.2) .. (0,.2) .. controls (.4,0.2) and (.6,0.12) .. (.7,0)--(0,1) -- cycle;
\fill[fill={rgb:brown,20;red,20;magenta,5}, opacity=.35] (-.03,0) ellipse (.77 and .35);
\draw[black, opacity=.9](-.03,0) ellipse (.77 and .35);
\fill[fill={rgb:brown,20;red,20;magenta,5}, opacity=.35] (0,-1)--(-.8,0) .. controls (-.6,-0.12) and (-.4,-0.2) .. (0,-.2) .. controls (.4,-0.2) and (.6,-0.12) .. (.7,0)--(0,-1) -- cycle;
\fill[fill={rgb:brown,20;red,20;magenta,5}, opacity=.35] (0,-1)--(.7,0)--(0,1)--(1.4,0)--(0,-1);

\fill[fill={rgb:brown,20;red,20;magenta,5}, opacity=.35] (-.8,0) .. controls (-1,.6) and (-.6,1) .. (0,1)--cycle;
\fill[fill={rgb:brown,20;red,20;magenta,5}, opacity=.35,xscale=-1] (-1.4,0) .. controls (-1.4,.8) and (-.6,1) .. (0,1)--cycle;
\fill[fill={rgb:brown,20;red,20;magenta,5}, opacity=.35,rotate=180] (-1.4,0) .. controls (-1.4,.8) and (-.6,1) .. (0,1)--cycle;
\fill[fill={rgb:brown,20;red,20;magenta,5}, opacity=.35,yscale=-1] (-.8,0) .. controls (-1,.6) and (-.6,1) .. (0,1)--cycle;

\draw (-.8,0) node[circ_red] {};

\draw (2,0) node {$=$};

\begin{scope}[shift={(3.5,0)}]
\draw[red] (-.8,0) .. controls (-1,.6) and (-.6,1) .. (0,1);
\draw[red,xscale=-1] (-1.4,0) .. controls (-1.4,.8) and (-.6,1) .. (0,1);
\draw[red,rotate=180] (-1.4,0) .. controls (-1.4,.8) and (-.6,1) .. (0,1);
\draw[red,yscale=-1] (-.8,0) .. controls (-1,.6) and (-.6,1) .. (0,1);
\draw[red] (-.8,0) .. controls (-.6,-0.12) and (-.4,-0.2) .. (0,-.2) .. controls (.4,-0.2) and (.6,-0.12) .. (.9,0.1);
\draw[red] (-.8,0) .. controls (-.6,0.12) and (-.4,0.2) .. (0,.2) .. controls (.4,0.2) and (.6,0.12) .. (.7,0);
\fill[fill={rgb:brown,20;red,20;magenta,5}, opacity=.35] (0,1)--(-.8,0) .. controls (-.6,0.12) and (-.4,0.2) .. (0,.2) .. controls (.4,0.2) and (.6,0.12) .. (.7,0)--(0,1) -- cycle;
\fill[fill={rgb:brown,20;red,20;magenta,5}, opacity=.35] (-.03,0) ellipse (.77 and .35);
\draw[dashed,black](-.03,0) ellipse (.77 and .35);
\fill[fill={rgb:brown,20;red,20;magenta,5}, opacity=.35] (0,-1)--(-.8,0) .. controls (-.6,-0.12) and (-.4,-0.2) .. (0,-.2) .. controls (.4,-0.2) and (.6,-0.12) .. (.7,0)--(0,-1) -- cycle;
\fill[fill={rgb:brown,20;red,20;magenta,5}, opacity=.35] (0,-1)--(.7,0)--(0,1)--(1.4,0)--(0,-1);

\fill[fill={rgb:brown,20;red,20;magenta,5}, opacity=.35] (-.8,0) .. controls (-1,.6) and (-.6,1) .. (0,1)--cycle;
\fill[fill={rgb:brown,20;red,20;magenta,5}, opacity=.35,xscale=-1] (-1.4,0) .. controls (-1.4,.8) and (-.6,1) .. (0,1)--cycle;
\fill[fill={rgb:brown,20;red,20;magenta,5}, opacity=.35,rotate=180] (-1.4,0) .. controls (-1.4,.8) and (-.6,1) .. (0,1)--cycle;
\fill[fill={rgb:brown,20;red,20;magenta,5}, opacity=.35,yscale=-1] (-.8,0) .. controls (-1,.6) and (-.6,1) .. (0,1)--cycle;

\draw (-.8,0) node[circ] {};
\end{scope}
\draw (5.5,0) node {$\sqcup$};
\begin{scope}[shift={(7,0)}]
\draw[black](-.03,0) ellipse (.77 and .35);
\draw (-.8,0) node[circ] {};
\end{scope}
\draw (8.5,0) node {$\sqcup$};
\draw (9.3,0) node[circ_red] {};
\end{tikzpicture}
\caption{Example of a topological stratification of a \emph{pinched torus}.}
\label{fig:pinchedtorus}
\end{center}
\end{figure}

\begin{figure}[t!]
\begin{center}
\begin{tikzpicture}
\fill[fill=red!85!white,nearly transparent](0,0) ellipse (1 and .5);
\fill[fill=red!85!white,nearly transparent] (0,0)--(1,0)--(1,1)--cycle;
\draw(0,0) ellipse (1 and .5);
\draw[black] (0,0)--(1,0)--(1,1)--cycle;
\node[circle,fill=black,inner sep=0pt,minimum size=3.5pt] (a) at (0,0) {};
\node[circle,fill=red,inner sep=0pt,minimum size=3pt] (a) at (0,0) {};
\node[circle,fill=black,inner sep=0pt,minimum size=3.5pt] (a) at (1,0) {};
\node[circle,fill=red,inner sep=0pt,minimum size=3pt] (a) at (1,0) {};
\draw (1.5,0) node {$=$};
\begin{scope}[shift={(3,0)}]
\fill[fill=red!85!white,nearly transparent](0,0) ellipse (1 and .5);
\fill[fill=red!85!white,nearly transparent] (0,0)--(1,0)--(1,1)--cycle;
\draw[dashed] (0,0) ellipse (1 and .5);
\draw[dashed] (0,0)--(1,0)--(1,1)--cycle;
\node[circle,fill=black,inner sep=0pt,minimum size=3.5pt] (a) at (0,0) {};
\node[circle,fill=white,inner sep=0pt,minimum size=3pt] (a) at (0,0) {};
\node[circle,fill=black,inner sep=0pt,minimum size=3.5pt] (a) at (1,0) {};
\node[circle,fill=white,inner sep=0pt,minimum size=3pt] (a) at (1,0) {};
\draw (1.5,0) node {$\sqcup$};
\end{scope}
\begin{scope}[shift={(6,0)}]
\draw(0,0) ellipse (1 and .5);
\draw[black] (0,0)--(1,0)--(1,1)--cycle;
\node[circle,fill=black,inner sep=0pt,minimum size=3.5pt] (a) at (0,0) {};
\node[circle,fill=white,inner sep=0pt,minimum size=3pt] (a) at (0,0) {};
\node[circle,fill=black,inner sep=0pt,minimum size=3.5pt] (a) at (1,0) {};
\node[circle,fill=white,inner sep=0pt,minimum size=3pt] (a) at (1,0) {};
\draw (1.5,0) node {$\sqcup$};
\end{scope}
\begin{scope}[shift={(8,0)}]
\node[circle,fill=black,inner sep=0pt,minimum size=3.5pt] (a) at (0,0) {};
\node[circle,fill=red,inner sep=0pt,minimum size=3pt] (a) at (0,0) {};
\node[circle,fill=black,inner sep=0pt,minimum size=3.5pt] (a) at (1,0) {};
\node[circle,fill=red,inner sep=0pt,minimum size=3pt] (a) at (1,0) {};

\end{scope}
\end{tikzpicture}
\caption{Example of a homological stratification of a \emph{sundial}.}
\label{fig:sundial}
\end{center}
\end{figure}
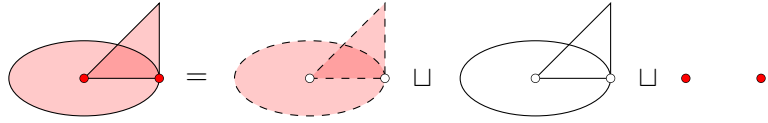
Ideally, we would compute a topological stratification for a given space. However, if we are restricted to using only homological methods, this is a dubious task. Topological invariants like homology are too rough to detect when a space such as $\Xspace_i-\Xspace_{i-1}$ is an open $i$-manifold. A well known example of a homological manifold which is not a topological manifold can be constructed from a homology 3-sphere with nontrivial fundamental group. The suspension of such a space is a homological manifold, but not a topological manifold, since the links of the suspension points have nontrivial fundamental groups~\cite[page 326]{Mio2000}. Specifically, the suspension of the Poincare's homology sphere is a homological manifold but not a topological manifold. In this paper, we will avoid the difficult problem of computing topological stratifications, and instead aim to investigate stratifications which can be computed using homological methods. Therefore, we must first consider a definition of stratification which does not rely on topological conditions which are not distinguished by homology. 
We begin with an extremely loose definition of stratification (Definition \ref{def:stratification}) which only requires the properties necessary to discuss the constructibility of sheaves (defined in Section \ref{subsec:sheaves}). We will then refine our definition of stratification by placing requirements on the constructibility of certain sheaves (Definition \ref{fstrat}).

\begin{definition}{\label{def:stratification}}

Given a topological space $\Xspace$, a \emph{stratification} $\mathfrak{X}$ of $\Xspace$ is a finite filtration of $\Xspace$ by closed subsets $\Xspace_i$:
$$ \emptyset = \Xspace_{-1} \subset \Xspace_0 \subset \cdots \subset \Xspace_d = \Xspace. $$
We refer to the space $X_i-X_{i-1}$ as \emph{stratum}, denoted by $S_i$, and a connected component of $S_i$ as a \emph{stratum piece}. 
\end{definition}

Suppose we have two stratifications of the topological space $\Xspace$, denoted $\Xstrata$  and $\Xstrata'$. 
We say that $\Xstrata$ is equivalent to $\Xstrata'$ if each stratum piece of $\Xstrata$ is equal to a stratum piece of $\Xstrata'$. 

\begin{definition}\label{def:coarser}
Given two inequivalent stratifications of $\Xspace$, $\Xstrata$ and $\Xstrata'$, we say $\Xstrata$ is \emph{coarser} than $\Xstrata'$, or $\Xstrata'$ \emph{refines} $\Xstrata$, if each stratum piece of $\Xstrata'$ is contained in a stratum piece of $\Xstrata$. 
\end{definition}

Figure~\ref{fig:coarser} illustrates some examples of stratifications which are coarser than those in Figure~\ref{fig:pinchedtorus} and Figure~\ref{fig:sundial}, as well as a different partition based on local homology transfer algorithm in~\cite{BendichWangMukherjee2012} for the sundial (bottom).

\para{Homological stratification.} 
There have been several approaches in topology literature to define homological stratifications. While proving the topological invariance of intersection homology, Goresky and MacPherson defined a type of homological stratification which they call a $\bar{p}$-stratification \cite[Section 4]{GoreskyMacPherson1983}. 
There have been several approaches for building on the ideas of Goresky and MacPherson, with applications to computational geometry and topology in mind (\cite{BendichHarer2011}, \cite{Nanda2017}). In this paper, we choose to adopt the perspective of homological stratifications found in~\cite{RourkeSanderson1999}, with a view toward sheaf theoretic generalizations and applications in topological data analysis. 

Consider a filtration $\emptyset= X_{-1}\subset X_0\subset\cdots\subset X_n=X$ of a topological space $X$. We can use relative homology to define a sheaf (the local homology sheaf) which associates to each an open set $U$ in $X_i$, the relative homology groups $H_\bullet(X_i,X_i-U)$, for some $i$.  Let $\mathcal{L}_i$ denote the local homology sheaf on the space $X_i$ (see Section \ref{subsec:LHS} for the definition of the local homology sheaf). 
\begin{definition}
We say that a stratification is a \emph{homological stratification} if $\mathcal{L}_n$ is locally constant (see Section \ref{subsec:sheaves} for the definition of locally constant)  when restricted to $X_i-X_{i-1}$, for each $i$. A stratification is a \emph{strong homological stratification} if for each $i$ both $\mathcal{L}_i$ and $\mathcal{L}_n$ are locally constant when restricted to $X_i-X_{i-1}$. Finally, a stratification is a \emph{very strong homological stratification} if for each $i$ and every $k\ge i$, $\mathcal{L}_k$ is locally constant when restricted to $X_i-X_{i-1}$.
\end{definition}  
As mentioned in \cite{RourkeSanderson1999}, it would be interesting to study the relationship between these definitions of homological stratification. We plan to pursue this in future work. The cohomological stratification given in~\cite{Nanda2017} can be considered as a cohomological anologue of the very strong homological stratification defined above. The utility of this definition is the extent to which it lends itself to the study of topological properties of individual strata. For example, it can be easily shown that the strata of such a stratification are $R$-(co)homology manifolds ($R$ being the ring with which the local cohomology is computed). The trade off for using the very strong homological stratification is in the number of local (co)homology groups which need to be computed. This is by far the most computationally expensive aspect of the algorithm, and the very strong homological stratification requires one to compute new homology groups for each sheaf $\mathcal{L}_i$. By contrast, the homological stratification only requires the computation of local homology groups corresponding to the sheaf $\mathcal{L}_n$. In this paper, we choose to study the less rigid (and more computable) notion of homological stratification (see Section \ref{sec:clustering} for more details on computing homological stratifications).

\begin{figure}[t!]
\begin{center}
\begin{center}
\begin{tikzpicture}[circ/.style={
    circle,
    fill=white,
    draw,
    outer sep=0pt,
    inner sep=1.2pt
  },circ_red/.style={
    circle,
    fill=red,
    draw,
    outer sep=0pt,
    inner sep=1.2pt
  }]
\draw[red] (-.8,0) .. controls (-1,.6) and (-.6,1) .. (0,1);
\draw[red,xscale=-1] (-1.4,0) .. controls (-1.4,.8) and (-.6,1) .. (0,1);
\draw[red,rotate=180] (-1.4,0) .. controls (-1.4,.8) and (-.6,1) .. (0,1);
\draw[red,yscale=-1] (-.8,0) .. controls (-1,.6) and (-.6,1) .. (0,1);
\draw[red] (-.8,0) .. controls (-.6,-0.12) and (-.4,-0.2) .. (0,-.2) .. controls (.4,-0.2) and (.6,-0.12) .. (.9,0.1);
\draw[red] (-.8,0) .. controls (-.6,0.12) and (-.4,0.2) .. (0,.2) .. controls (.4,0.2) and (.6,0.12) .. (.7,0);
\fill[fill={rgb:brown,20;red,20;magenta,5}, opacity=.35] (0,1)--(-.8,0) .. controls (-.6,0.12) and (-.4,0.2) .. (0,.2) .. controls (.4,0.2) and (.6,0.12) .. (.7,0)--(0,1) -- cycle;
\fill[fill={rgb:brown,20;red,20;magenta,5}, opacity=.35] (-.03,0) ellipse (.77 and .35);
\draw[black, opacity=.9](-.03,0) ellipse (.77 and .35);
\fill[fill={rgb:brown,20;red,20;magenta,5}, opacity=.35] (0,-1)--(-.8,0) .. controls (-.6,-0.12) and (-.4,-0.2) .. (0,-.2) .. controls (.4,-0.2) and (.6,-0.12) .. (.7,0)--(0,-1) -- cycle;
\fill[fill={rgb:brown,20;red,20;magenta,5}, opacity=.35] (0,-1)--(.7,0)--(0,1)--(1.4,0)--(0,-1);

\fill[fill={rgb:brown,20;red,20;magenta,5}, opacity=.35] (-.8,0) .. controls (-1,.6) and (-.6,1) .. (0,1)--cycle;
\fill[fill={rgb:brown,20;red,20;magenta,5}, opacity=.35,xscale=-1] (-1.4,0) .. controls (-1.4,.8) and (-.6,1) .. (0,1)--cycle;
\fill[fill={rgb:brown,20;red,20;magenta,5}, opacity=.35,rotate=180] (-1.4,0) .. controls (-1.4,.8) and (-.6,1) .. (0,1)--cycle;
\fill[fill={rgb:brown,20;red,20;magenta,5}, opacity=.35,yscale=-1] (-.8,0) .. controls (-1,.6) and (-.6,1) .. (0,1)--cycle;

\draw (-.8,0) node[circ_red] {};

\draw (2,0) node {$=$};

\begin{scope}[shift={(3.5,0)}]
\draw[red] (-.8,0) .. controls (-1,.6) and (-.6,1) .. (0,1);
\draw[red,xscale=-1] (-1.4,0) .. controls (-1.4,.8) and (-.6,1) .. (0,1);
\draw[red,rotate=180] (-1.4,0) .. controls (-1.4,.8) and (-.6,1) .. (0,1);
\draw[red,yscale=-1] (-.8,0) .. controls (-1,.6) and (-.6,1) .. (0,1);
\draw[red] (-.8,0) .. controls (-.6,-0.12) and (-.4,-0.2) .. (0,-.2) .. controls (.4,-0.2) and (.6,-0.12) .. (.9,0.1);
\draw[red] (-.8,0) .. controls (-.6,0.12) and (-.4,0.2) .. (0,.2) .. controls (.4,0.2) and (.6,0.12) .. (.7,0);
\fill[fill={rgb:brown,20;red,20;magenta,5}, opacity=.35] (0,1)--(-.8,0) .. controls (-.6,0.12) and (-.4,0.2) .. (0,.2) .. controls (.4,0.2) and (.6,0.12) .. (.7,0)--(0,1) -- cycle;
\fill[fill={rgb:brown,20;red,20;magenta,5}, opacity=.35] (-.03,0) ellipse (.77 and .35);
\draw[dashed,black](-.03,0) ellipse (.77 and .35);
\fill[fill={rgb:brown,20;red,20;magenta,5}, opacity=.35] (0,-1)--(-.8,0) .. controls (-.6,-0.12) and (-.4,-0.2) .. (0,-.2) .. controls (.4,-0.2) and (.6,-0.12) .. (.7,0)--(0,-1) -- cycle;
\fill[fill={rgb:brown,20;red,20;magenta,5}, opacity=.35] (0,-1)--(.7,0)--(0,1)--(1.4,0)--(0,-1);

\fill[fill={rgb:brown,20;red,20;magenta,5}, opacity=.35] (-.8,0) .. controls (-1,.6) and (-.6,1) .. (0,1)--cycle;
\fill[fill={rgb:brown,20;red,20;magenta,5}, opacity=.35,xscale=-1] (-1.4,0) .. controls (-1.4,.8) and (-.6,1) .. (0,1)--cycle;
\fill[fill={rgb:brown,20;red,20;magenta,5}, opacity=.35,rotate=180] (-1.4,0) .. controls (-1.4,.8) and (-.6,1) .. (0,1)--cycle;
\fill[fill={rgb:brown,20;red,20;magenta,5}, opacity=.35,yscale=-1] (-.8,0) .. controls (-1,.6) and (-.6,1) .. (0,1)--cycle;

\draw (-.8,0) node[circ] {};
\end{scope}
\draw (5.5,0) node {$\sqcup$};
\begin{scope}[shift={(7,0)}]
\draw[black](-.03,0) ellipse (.77 and .35);

\end{scope}
\begin{scope}[shift={(0,-2.2)}]
\fill[fill=red!85!white,nearly transparent](0,0) ellipse (1 and .5);
\fill[fill=red!85!white,nearly transparent] (0,0)--(1,0)--(1,1)--cycle;
\draw(0,0) ellipse (1 and .5);
\draw[black] (0,0)--(1,0)--(1,1)--cycle;
\node[circle,fill=black,inner sep=0pt,minimum size=3.5pt] (a) at (0,0) {};
\node[circle,fill=red,inner sep=0pt,minimum size=3pt] (a) at (0,0) {};
\node[circle,fill=black,inner sep=0pt,minimum size=3.5pt] (a) at (1,0) {};
\node[circle,fill=red,inner sep=0pt,minimum size=3pt] (a) at (1,0) {};
\draw (1.5,0) node {$=$};
\begin{scope}[shift={(3,0)}]
\fill[fill=red!85!white,nearly transparent](0,0) ellipse (1 and .5);
\fill[fill=red!85!white,nearly transparent] (0,0)--(1,0)--(1,1)--cycle;
\draw[dashed] (0,0) ellipse (1 and .5);
\draw[dashed] (0,0)--(1,0)--(1,1)--cycle;
\node[circle,fill=black,inner sep=0pt,minimum size=3.5pt] (a) at (0,0) {};
\node[circle,fill=white,inner sep=0pt,minimum size=3pt] (a) at (0,0) {};
\node[circle,fill=black,inner sep=0pt,minimum size=3.5pt] (a) at (1,0) {};
\node[circle,fill=white,inner sep=0pt,minimum size=3pt] (a) at (1,0) {};
\draw (1.5,0) node {$\sqcup$};
\end{scope}
\begin{scope}[shift={(6,0)}]
\draw(0,0) ellipse (1 and .5);
\draw[black] (0,0)--(1,0)--(1,1)--cycle;
\node[circle,fill=black,inner sep=0pt,minimum size=3.5pt] (a) at (0,0) {};
\node[circle,fill=red,inner sep=0pt,minimum size=3pt] (a) at (0,0) {};
\node[circle,fill=black,inner sep=0pt,minimum size=3.5pt] (a) at (1,0) {};
\node[circle,fill=white,inner sep=0pt,minimum size=3pt] (a) at (1,0) {};
\draw (1.5,0) node {$\sqcup$};
\end{scope}
\begin{scope}[shift={(9,0)}]

\node[circle,fill=black,inner sep=0pt,minimum size=3.5pt] (a) at (1,0) {};
\node[circle,fill=red,inner sep=0pt,minimum size=3pt] (a) at (1,0) {};

\end{scope}
\end{scope}
\begin{scope}[shift={(0,-4)}]
\fill[fill=red!85!white,nearly transparent](0,0) ellipse (1 and .5);
\fill[fill=red!85!white,nearly transparent] (0,0)--(1,0)--(1,1)--cycle;
\draw(0,0) ellipse (1 and .5);
\draw[black] (0,0)--(1,0)--(1,1)--cycle;
\node[circle,fill=black,inner sep=0pt,minimum size=3.5pt] (a) at (0,0) {};
\node[circle,fill=red,inner sep=0pt,minimum size=3pt] (a) at (0,0) {};
\node[circle,fill=black,inner sep=0pt,minimum size=3.5pt] (a) at (1,0) {};
\node[circle,fill=red,inner sep=0pt,minimum size=3pt] (a) at (1,0) {};
\draw (1.5,0) node {$=$};
\begin{scope}[shift={(3,0)}]
\fill[fill=red!85!white,nearly transparent](0,0) ellipse (1 and .5);
\draw[dashed] (0,0)--(1,0);
\draw[dashed] (0,0) ellipse (1 and .5);
\node[circle,fill=black,inner sep=0pt,minimum size=3.5pt] (a) at (0,0) {};
\node[circle,fill=red,inner sep=0pt,minimum size=3pt] (a) at (0,0) {};
\draw (1.5,0) node {$\sqcup$};

\begin{scope}[shift={(3,0)}]
\draw[dashed] (0,0)--(1,0)--(1,1)--cycle;
\fill[fill=red!85!white,nearly transparent] (0,0)--(1,0)--(1,1)--cycle;
\node[circle,fill=black,inner sep=0pt,minimum size=3.5pt] (a) at (1,0) {};
\node[circle,fill=white,inner sep=0pt,minimum size=3pt] (a) at (1,0) {};
\node[circle,fill=black,inner sep=0pt,minimum size=3.5pt] (a) at (0,0) {};
\node[circle,fill=white,inner sep=0pt,minimum size=3pt] (a) at (0,0) {};
\draw (1.5,0) node {$\sqcup$};
\end{scope}
\end{scope}
\begin{scope}[shift={(9,0)}]
\draw(0,0) ellipse (1 and .5);
\draw[black] (0,0)--(1,0)--(1,1)--cycle;
\node[circle,fill=black,inner sep=0pt,minimum size=3.5pt] (a) at (0,0) {};
\node[circle,fill=white,inner sep=0pt,minimum size=3pt] (a) at (0,0) {};
\node[circle,fill=black,inner sep=0pt,minimum size=3.5pt] (a) at (1,0) {};
\node[circle,fill=red,inner sep=0pt,minimum size=3pt] (a) at (1,0) {};

\end{scope}

\end{scope}
\end{tikzpicture}
\end{center}
\vspace{4mm}
\caption{Top: A topological stratification of a \emph{pinched torus} which is coarser than the stratification shown in Figure~\ref{fig:pinchedtorus}. Readers familiar with stratification theory will notice that while this is a topological stratification, it is not a locally conelike topological stratified space, also called a cs-space or cs-stratification~\cite[Definition 1.1]{HabeggerSaper1991}.
Middle: An example of a topological stratification of a \emph{sundial} which is coarser than the stratification shown in Figure~\ref{fig:sundial}. One can check that the coarser stratification is no longer a homological stratification.
Bottom: A partition of a \emph{sundial} into stratum pieces, based on local homology transfer in~\cite{BendichWangMukherjee2012}, where we assume an arbitrarily dense sampling of points from the sundial. Notice that the resulting decomposition is not a stratification by our definition, since it is not induced by a filtration by closed subspaces.}
\label{fig:coarser}
\end{center}
\end{figure}

\para{Sheaf theoretic stratification.}
The definition of homological stratification naturally lends itself to generalizations, which we now introduce (while delaying formal definition of constructible sheaves to Section~\ref{subsec:sheaves}). The primary observation which illuminates this generalization is that the fundamental mathematical structures (the association of a group or set to an open set and the restriction maps induced by inclusions of open sets) used to construct homological stratifications in \cite{GoreskyMacPherson1983,RourkeSanderson1999,Nanda2017} are exactly the formal properties which define a presheaf. Therefore, we will explore analogous constructive stratifications where local homology is replaced with an arbitrary sheaf or presheaf. The key notion used in the following definition is that of a \emph{constructible} sheaf. Intuitively, constructibility means that the sheaf can be decomposed into sheaves which are locally constant on various pieces, or stratum, of the given topological space.  
\begin{definition}
\label{fstrat}
Suppose $\Fcal$ is a sheaf on a topological space $\Xspace$. An \emph{$\Fcal$-stratification} (``sheaf-stratification'') of $\Xspace$ is a stratification such that $\Fcal$ is constructible with respect to $\Xspace=\amalg S_i$. A \emph{coarsest $\Fcal$-stratification} is an $\Fcal$-stratification such that $\Fcal$ is not constructible with respect to any coarser stratification.
\end{definition}
For general topological spaces, a coarsest $\Fcal$-stratification may not exist, and may not be unique if it does exist. The main focus of this paper will be proving existence and uniqueness results for certain coarsest $\Fcal$-stratifications. The extra structure needed to prove uniqueness is homogeneity of strata, and minimality of stratifications. By working with homogeneous stratifications, we are requiring strata to have a good notion of dimension. By defining a preorder on the set of homogeneous stratification (Section \ref{sec:results}), we are able to identify the output of the stratification algorithm described in Section \ref{sec:clustering} as the unique minimal homogeneous stratification with respect to the preorder.

\subsection{Our Contribution} 
In this paper, we study stratification learning using the tool of constructible sheaves. 
As a sheaf is designed to systematically track locally defined data attached to the open sets of a topological space, it seems to be a natural tool in the study of stratification based on local structure of the data. 
Our contributions are four-fold:
\begin{enumerate}
\item We prove the existence of coarsest $\Fcal$-stratifications and the existence and uniqueness of the minimal homogeneous $\Fcal$-stratification for finite $T_0$-spaces (Section~\ref{sec:results}). 
\item We give an algorithm for computing each of the above stratifications of a finite $T_0$-spaces based on a sheaf-theoretic language (Section~\ref{sec:clustering}).
\item In particular, when applying the local homology sheaf in our algorithm, we obtain a coarsest homological stratification (Section \ref{subsec:localhomology}). 
\item We give detailed examples of sheaf-theoretic stratifications based on combinatorial techniques (Section \ref{sec:maximal-element}) and geometric techniques (Section \ref{sec:geometric-sheaf}). 
\end{enumerate}
We envision that our abstraction could give rise to a larger class of stratification beyond homological stratification.
For instance, we give examples of a ``maximal element-stratification'' when the sheaf is defined by considering maximal elements of an open set (Section \ref{sec:maximal-element}) and a ``vanishing polynomial'' stratification when the (pre)-sheaf is defined with sets of vanishing polynomials (Section \ref{sec:geometric-sheaf}). Moreover, we see the geometric stratifications based on vanishing sets of polynomials as having natural applications to the mapper algorithm of~\cite{CarriereOudot17,MunchWang16,SinghMemoliCarlsson2007} (see Section \ref{sec:geometric-sheaf}).

\para{Comparison to prior work.}
This paper can be viewed as a continuation of previous works which adapt the stratification and homology theory of Goresky and MacPherson to the realm of topological data analysis. In \cite{RourkeSanderson1999}, Rourke and Sanderson give a proof of the topological invariance of intersection homology on PL homology stratifications, and give an recursive process for identifying a homological stratification (defined in Section 5 of \cite{RourkeSanderson1999}). In \cite{BendichHarer2011}, Bendich and Harer introduce a persistent version of intersection homology that can be applied to simplicial complexes. In \cite{BendichWangMukherjee2012}, Bendich, Wang, and Mukherjee provide computational approach that yields a stratification of point clouds by computing transfer maps between local homology groups of an open covering of the point cloud. In \cite{Nanda2017}, Nanda uses the machinery of derived categories to study cohomological stratifications based on local cohomology. 

Motivated by the results of \cite{Nanda2017} and \cite{BendichWangMukherjee2012}, we aim to develop a computational approach to the stratifications studied in \cite{RourkeSanderson1999}. Our main results can be summarized as the generalization of homological stratifications of \cite{RourkeSanderson1999} to $\Fcal$-stratifications, and a proof of existence and uniqueness of the minimal homogeneous $\Fcal$-stratification of a finite simplicial complex. When $\Fcal$ is the local homology sheaf, we recover the homological stratification described by \cite{RourkeSanderson1999}. While admitting a similar flavor as \cite{Nanda2017}, our work differs from \cite{Nanda2017} in several important ways. The most obvious difference is our choice to work with homology and sheaves rather than cohomology and cosheaves. More importantly however, we reduce the number of local homology groups which need to be computed.  
We will investigate the differences between homological, strong homological, and very strong homological stratifications in future work. 

In \cite{BendichWangMukherjee2012}, stratifications of point clouds are defined using persistent local homology and an equivalence based on local homology transfer. This method can easily be adapted to the setting of simplicial complexes by defining two simplicies $\tau$ and $\sigma$ to be in the same stratum if there exists a sequence of face/coface relations $\tau \le \gamma_1\ge \cdots \le \gamma_n \ge \sigma$, such that each face/coface relation induces an isomorphism of local homology groups. We choose not to take this approach here because the local homology transfer algorithm fails to produce reasonable stratifications for certain topological spaces. For example, the stratification of the \emph{sundial} example (i.e.~a stratified space with boundary) given by local homology transfer (see Figure \ref{fig:coarser}) is not technically a stratification by our definition, since the 1-dimensional stratum (which in this case is equal to $X_1$ in the induced filtration) is not closed. In comparison, the current algorithm correctly gives a coarsest homological stratification of this space. The algorithms described in this paper differ from local homology transfer in that we inductively define strata by requiring all restriction maps (transfer maps in a small neighborhood of a point/simplex) of a point/simplex in said stratum to induce isomorphisms of local homology groups. In this sense, our algorithm assigns a point/simplex to the top stratum if the local homology is unchanged as we move small amounts in ``any direction'' of the point, while the homology transfer algorithm assigns two points to the same stratum if there exists at least one path connecting them which induces an isomorphism of local homology.
Furthermore, the work in~\cite{BendichWangMukherjee2012} uses persistent homology in an essential way so that it is
amenable to point cloud data. The current work only brushes with the concept of  persistence in Section~\ref{sec:discussion}.
We plan to build on the results of \cite{BendichWangMukherjee2012} and \cite{SkrabaWang2014}, and extend the sheaf-theoretic stratification learning perspective described in this paper to the study of stratifications of point cloud data using persistent local homology.

\section{Preliminaries}
\label{sec:background}

\subsection{Compact Polyhedra, Finite $T_0$-spaces and Posets}
\label{subsec:t0}

Our broader aim is to compute a clustering of a finite set of points sampled from a compact polyhedron, based on the coarsest $\Fcal$-stratification of a finite $T_0$-space built from the point set. In this paper, we avoid discussion of sampling theory, and assume the finite point set forms the vertex set of a triangulated compact polyhedron. The finite $T_0$-space is the set of simplices of the triangulation, with the corresponding partial order given by the face/coface relation ($\tau\le\sigma$ if simplex $\tau$ is a face of simplex $\sigma$). To describe this correspondence in more detail, we first consider the connection between compact polyhedra and finite simplicial complexes. We then consider the correspondence between simplicial complexes and $T_0$-topological spaces.

\para{Compact polyhedra and triangulations.}
 A \emph{compact polyhedron} is a topological space which is homeomorphic to a finite simplicial complex. 
A \emph{triangulation} of a compact polyhedron is a finite simplicial complex $K$ and a homeomorphism from $K$ to the polyhedron.  

\para{$T_0$-spaces.}
 A \emph{$T_0$-space} is a topological space such that for each pair of distinct points, there exists an open set containing one but not the other. 
The correspondence between finite $T_0$-spaces and simplicial complexes is detailed in~\cite{McCord1978}: 
\begin{enumerate}
\item For each finite $T_0$-space $\Xspace$ there exists a (finite) simplicial complex $K$ and a weak homotopy equivalence $f: |K| \rightarrow \Xspace$. 
\item For each finite simplicial complex $K$ there exists a finite $T_0$-space $\Xspace$ and a weak homotopy equivalence $f: |K| \rightarrow \Xspace$. 
\end{enumerate}
Here, weak homotopy equivalence is a continuous map which induces isomorphisms on all homotopy groups. 

\para{$T_0$-spaces have a natural partial order.}
In this paper, we study certain topological properties of a compact polyhedron by considering its corresponding finite $T_0$-space. 
The last ingredient, developed in~\cite{Alexandroff1937}, is a natural partial order defined on a given finite $T_0$-space. We can define this partial ordering on a finite $T_0$-space $X$ by considering minimal open neighborhoods of each point (i.e. element) $x\in X$. Let $\Xspace$ be a finite $T_0$-space. 
Each point $x \in \Xspace$ has a minimal open neighborhood, denoted $B_x$, which is equal to the intersection of all open sets containing $x$.
$$
B_x=\bigcap_{U\in \mathcal{N}_x}U, 
$$
where $\mathcal{N}_x$ denotes the set of open sets containing $x$. Since $X$ is a finite space, there are only finitely many open sets. In particular, $\mathcal{N}_x$ is a finite set. So $B_x$ is defined to be the intersection of finitely many open sets, which implies that $B_x$ is an open neighborhood of $x$. Moreover, any other open neighborhood $V$ of $x$ must contain $B_x$ as a subset. We can define the partial ordering on $\Xspace$ by setting $x \le y$ if $B_y\subseteq B_x$. 

Conversely, we can endow any poset $\Xspace$ with the Alexandroff topology as follows.
For each element $\tau \in \Xspace$, we define a minimal open neighborhood containing $\tau$ by $B_\tau:=\{\gamma\in \Xspace: \gamma\ge \tau\}$. The collection of minimal open neighborhoods for each $\tau \in \Xspace$ forms a basis for a topology on $\Xspace$. We call this topology the Alexandroff topology. Moreover, a finite $T_0$-space $X$ is naturally equal (as topological spaces) to $X$ viewed as a poset with the Alexandroff topology. Therefore, we see that each partially ordered set is naturally a $T_0$-space, and each finite $T_0$-space is naturally a partially ordered set. The purpose for reviewing this correspondence here is to give the abstractly defined finite $T_0$-spaces a concrete and familiar realization.

As a concrete example, let $K$ to denote the finite $T_0$-space consisting of elements which are open simplices in a simplicial complex. 
In this setting, we can describe the $T_0$-space $K$ using the more familiar language of simplicial complexes. 
This is also the setting that applies to most of the subsequent examples in this paper. 
For a simplex $\sigma \in K$, its minimal open neighborhood $B_{\sigma}$ is its \emph{star} consisting of all cofaces of $\sigma$, 
$\str(\sigma) = \{\tau \in K \mid \sigma \leq \tau\}$. 
 Using the partial order induced by inclusions of minimal open neighborhoods, we set $\tau\le \sigma$ if $B_\sigma\subseteq B_\tau$. By describing minimal open neighborhoods as open stars of simplices, we can state the partial order as $\tau\le \sigma$ if $\str(\sigma)\subseteq \str(\tau)$. 
 On the other hand, $K$ is equipped with a partial order based on face relations, where $\tau \leq \sigma$ if simplex $\tau$ is a face of simplex $\sigma$. 
 Therefore, the two partial orders (the face partial order and the open neighborhood inclusion partial order) coincide.

Given a finite $T_0$-space $X$ with the above partial order, we say $x_0\le x_1\le \cdots\le x_n$ (where $x_i\in X$) is a maximal chain in $X$ if there is no totally ordered subset $Y\subset X$ consisting of elements $y_j\in Y$ such that $y_0\le \cdots\le y_j\le \cdots\le y_k$ and $\cup_{i=0}^n\{x_i\}\subsetneq Y$. The cardinality of a chain $x_0\le x_1\le \cdots\le x_n$ is $n+1$. We say that a finite $T_0$-space has dimension $m$ if the maximal cardinality of maximal chains is $m+1$. 
\begin{definition}
An $m$-dimensional simplicial complex is called \emph{homogeneous} if each simplex of dimension less than $m$ is a face of a simplex of dimension $m$. Motivated by the correspondence between simplicial complexes and $T_0$-spaces, we say an $m$-dimensional finite $T_0$-space is \emph{homogeneous} if each maximal chain has cardinality $m+1$.
\end{definition}

\begin{figure}[ht!]
\begin{center}
\begin{tikzpicture}[scale=.65,circ/.style={
    circle,
    fill=white,
    draw,
    outer sep=0pt,
    inner sep=1.5pt
  }]

\draw (0,0) node {$\bullet$};
\draw (1,1.75) node{$\bullet$};
\draw (2,0) node{$\bullet$};
\draw (1,-1.75) node{$\bullet$};
\draw (-1,-1.75) node{$\bullet$};
\draw (0,0)--(2,0);
\draw (0,0)--(1,1.75);
\draw (1,1.75)--(2,0);
\draw (0,0)--(1,-1.75);
\draw (1,-1.75)--(-1,-1.75);
\draw (-1,-1.75)--(0,0);
\draw[draw=black,fill=red, nearly transparent] (0,0)--(1,-1.75)--(-1,-1.75);
\draw[draw=black,fill=red, nearly transparent] (0,0)--(1,1.75)--(2,0);

\draw (2.75,0) node{$:$};

\begin{scope}[shift={(3.5,0)}]

\draw[dashed] (0,0)--(2,0);
\draw[dashed] (0,0)--(1,1.75);
\draw[dashed] (1,1.75)--(2,0);
\draw[dashed] (0,0)--(1,-1.75);
\draw[dashed] (1,-1.75)--(-1,-1.75);
\draw[dashed] (-1,-1.75)--(0,0);
\fill[fill=red, nearly transparent] (0,0)--(1,1.75)--(2,0);
\draw[draw=black,fill=red, nearly transparent] (0,0)--(1,-1.75)--(-1,-1.75);
\node[circle,fill=black,inner sep=0pt,minimum size=4pt]  at (0,0) {};
\node[circle,fill=white,inner sep=0pt,minimum size=3pt]  at (0,0) {};
\node[circle,fill=black,inner sep=0pt,minimum size=4pt]  at (2,0) {};
\node[circle,fill=white,inner sep=0pt,minimum size=3pt]  at (2,0) {};
\node[circle,fill=black,inner sep=0pt,minimum size=4pt]  at (1,1.75) {};
\node[circle,fill=white,inner sep=0pt,minimum size=3pt]  at (1,1.75) {};
\node[circle,fill=black,inner sep=0pt,minimum size=4pt]  at (-1,-1.75) {};
\node[circle,fill=white,inner sep=0pt,minimum size=3pt]  at (-1,-1.75) {};
\node[circle,fill=black,inner sep=0pt,minimum size=4pt]  at (1,-1.75) {};
\node[circle,fill=white,inner sep=0pt,minimum size=3pt]  at (1,-1.75) {};

\draw (2.75,0) node{$\sqcup$};
\end{scope}

\begin{scope}[shift={(7,0)}]
\draw (0,0) node {$\bullet$};
\draw (1,1.75) node{$\bullet$};
\draw (2,0) node{$\bullet$};
\draw (1,-1.75) node{$\bullet$};
\draw (-1,-1.75) node{$\bullet$};
\draw (0,0)--(2,0);
\draw (0,0)--(1,1.75);
\draw (1,1.75)--(2,0);
\draw (0,0)--(1,-1.75);
\draw (1,-1.75)--(-1,-1.75);
\draw (-1,-1.75)--(0,0);

\end{scope}
\draw (9.75,0) node{$,$};

\begin{scope}[shift={(10.5,0)}]

\draw[dashed] (0,0)--(2,0);
\draw[dashed] (0,0)--(1,1.75);
\draw[dashed] (1,1.75)--(2,0);
\fill[fill=red, nearly transparent] (0,0)--(1,1.75)--(2,0);
\node[circle,fill=black,inner sep=0pt,minimum size=4pt]  at (0,0) {};
\node[circle,fill=white,inner sep=0pt,minimum size=3pt]  at (0,0) {};
\node[circle,fill=black,inner sep=0pt,minimum size=4pt]  at (2,0) {};
\node[circle,fill=white,inner sep=0pt,minimum size=3pt]  at (2,0) {};
\node[circle,fill=black,inner sep=0pt,minimum size=4pt]  at (1,1.75) {};
\node[circle,fill=white,inner sep=0pt,minimum size=3pt]  at (1,1.75) {};

\draw (2.75,0) node{$\sqcup$};
\end{scope}

\begin{scope}[shift={(14,0)}]
\draw (0,0)--(2,0);
\draw (0,0)--(1,1.75);
\draw (1,1.75)--(2,0);
\draw (0,0)--(1,-1.75);
\draw (1,-1.75)--(-1,-1.75);
\draw (-1,-1.75)--(0,0);
\draw[draw=black,fill=red, nearly transparent] (0,0)--(1,-1.75)--(-1,-1.75);
\node[circle,fill=black,inner sep=0pt,minimum size=4pt]  at (0,0) {};
\node[circle,fill=black,inner sep=0pt,minimum size=4pt]  at (2,0) {};
\node[circle,fill=black,inner sep=0pt,minimum size=4pt]  at (1,1.75) {};
\node[circle,fill=black,inner sep=0pt,minimum size=4pt]  at (-1,-1.75) {};
\node[circle,fill=black,inner sep=0pt,minimum size=4pt]  at (1,-1.75) {};
\end{scope}
\end{tikzpicture}
\caption{An example of a homogeneous simplicial complex (left), a homogeneous stratification (middle), and a stratification which is not homogeneous (right).} 
\label{fig:homogeneous-example}
\end{center}
\end{figure}

See Figure~\ref{fig:homogeneous-example} for an example. 
The correspondences allow us to study certain topological properties of compact polyhedra by using the combinatorial theory of partially ordered sets. In particular, instead of using the more complicated theory of sheaves on the geometric realization $\vert K\vert$ of a simplicial complex $K$, we will continue by studying sheaves on the corresponding finite $T_0$-space, denoted by $\Xspace$. 


\subsection{Constructible Sheaves}
\label{subsec:sheaves}

Intuitively, a sheaf assigns some piece of data to each open set in a topological space $\Xspace$, in a way that allows us to glue together data to recover some information about the larger space. This process can be described as the mathematics behind understanding global structure by studying local properties of a space. In this paper, we are primarily interested in sheaves on finite $T_0$-spaces, which are closely related to the cellular sheaves studied in \cite{Shepard1985}, \cite{Curry2014}, and \cite{Nanda2017}. 

\para{Sheaves.}
Suppose $\Xspace$ is a topological space. Let $\Top(X)$ denote the category consisting of objects which are open sets in $X$ with morphisms given by inclusion. Let $\Fcal$ be a contravariant functor from $\Top(X)$ to $\mathcal{S}$, the category of sets. For open sets $U\subset V$ in $X$, we refer to the morphism $\Fcal(U\subset V): \Fcal(V)\rightarrow \Fcal(U)$ induced by $\Fcal$ and the inclusion $U\subset V$, as a \emph{restriction} map from $V$ to $U$. We say that $\Fcal$ is a \emph{sheaf} \footnote{See \cite[Chapter 2]{Curry2014}, \cite[Chapter 3]{Kirwan2006} for various introductions to sheaf theory. In~\cite{Curry2014}, Curry gives a very general definition of pre-sheaves which take values in a general category (possibly differing from the category of sets). Curry then discusses how certain properties of the chosen category (such as the existence of direct and inverse limits) imply properties for the corresponding pre-sheaves and sheaves. \cite{Kirwan2006} gives a standard introduction to sheaves, constructible sheaves, and intersection homology.} on $\Xspace$ if $\Fcal$ satisfies the following conditions 1.-3.; a \emph{presheaf} is a functor $\mathcal{E}$ (as above) which satisfies conditions 1.-2.: 
\begin{enumerate}
\item $\Fcal(U\subset U)=\id_U.$
\item If $U\subset V\subset W$, then $\Fcal(U\subset W)=\Fcal(U\subset V)\circ\Fcal(V\subset W)$.
\item If $\{V_i\}$ is an open cover of $U$, and $s_i\in\Fcal(V_i)$ has the property that $\forall i, j$, 
$\Fcal((V_i\cap V_j)\subset V_i)(s_i)=\Fcal((V_j\cap V_i)\subset V_j)(s_j)$, 
then there exists a unique $s \in \Fcal(U)$ such that $\forall i$, $\Fcal(V_i\subset U)(s)=s_i$. 
\end{enumerate}
There is a useful process known as \emph{sheafification}, which allows us to transform any presheaf into a sheaf. In the setting of finite $T_0$-spaces, sheafification takes on a relatively simple form. Let $\mathcal{E}$ be a presheaf on a finite $T_0$-space $X$. Then the sheafification of $\mathcal{E}$, denoted $\mathcal{E}^+$, is given by 
\begin{eqnarray*}
\mathcal{E}^+(U)=\bigg\{f:U\rightarrow \coprod_{x\in U} \mathcal{E}(B_x)\bigg\vert f(x)\in \mathcal{E}(B_x)&\text{ and }&f(y)=\Fcal(B_y\subset B_x)(f(x))\\
&&\text{ for all }y\ge x\bigg\}
\end{eqnarray*}
For any presheaf $\mathcal{E}$, it can be seen that $\mathcal{E}^+$ is necessarily a sheaf. We only need to know the values $\mathcal{E}(B_x)$ for minimal open neighborhoods $B_x$, and the corresponding restriction maps between minimal open neighborhoods $\mathcal{E}(B_x\subset B_y)$, in order to define the sheafification of $\mathcal{E}$. The result is that two presheaves will sheafify to the same sheaf if they agree on all minimal open neighborhoods. We will use this fact several times in Section \ref{sec:results}. Unless otherwise specified, for the remaining of this paper, we use $\Xspace$ to denote a $T_0$-space. 

\para{Pull back of a sheaf.}
For notational convenience, define for each subset $\Yspace \subset \Xspace$ the \emph{star} of $\Yspace$ by 
$\str(\Yspace) :=\cup_{y\in \Yspace}B_y$, 
where $B_y$ is the minimal open neighborhood of $y \in \Xspace$. We can think of the star of $Y$ as the smallest open set containing $Y$. 
Let $\Xspace$ and $\Yspace$ be two finite $T_0$-spaces. 
The following property can be thought of as a way to transfer a sheaf on 
$\Yspace$ to a sheaf on $X$ through a continuous map $f: \Xspace \rightarrow \Yspace$.
Let $\Fcal$ be a sheaf on $\Yspace$. Then the \emph{pull back} of $\Fcal$, denoted $f^{-1}\Fcal$, is defined to be the sheafification of the presheaf $\mathcal{E}$ which maps an open set $U\subset \Xspace$ to $\mathcal{E}(U) := \Fcal(\str(f(U)))$. We can avoid using direct limits in our definition of pull back because each point in a finite $T_0$-space has a minimal open neighborhood~\cite[Chapter 5]{Curry2014}. The pull back of $\Fcal$ along an inclusion map $\iota:U\hookrightarrow X$ is called the \emph{restriction} of $\Fcal$ to $U$, and is denoted $\Fcal\vert_U$.

\para{Constant and locally constant sheaves.}
Now we can define classes of well-behaved sheaves, constant and locally constant ones, which we can think of intuitively as analogues of constant functions based on definitions common to algebraic geometry and topology~\cite{Kirwan2006}. A sheaf $\Fcal$ is a \emph{constant sheaf} if $\Fcal$ is isomorphic to the pull back of a sheaf $\mathcal{G}$ on a single point space $\{x\}$, along the projection map $p:X\rightarrow x$. A sheaf $\Fcal$ is \emph{locally constant} if for all $x\in X$, there is a neighborhood $U$ of $x$ such that $\Fcal\vert_U$ (the restriction of $\Fcal$ to $U$), is a constant sheaf.

\begin{definition}
\label{const}
A sheaf $\Fcal$ on a finite $T_0$-space $X$ is \emph{constructible} with respect to the decomposition $X=\coprod S_i$ of $X$ into finitely many disjoint locally closed subsets, if $\Fcal\vert_{S_i}$ is locally constant for each $i$. 
\end{definition}

\section{Main Results}
\label{sec:results}

In this section we state three of our main theorems, namely, the existence of $\Fcal$-stratifications (Definition \ref{fstrat}, Proposition \ref{theorem:exists}), the existence of coarsest $\Fcal$-stratifications (Theorem \ref{theorem:coarsest}), and the existence and uniqueness of minimal homogeneous $\Fcal$-stratifications (Theorem \ref{theorem:uniqueness}). Of course, Theorem \ref{theorem:coarsest} immediately implies Proposition \ref{theorem:exists}. We choose to include a separate statement of Proposition \ref{theorem:exists} however, as we wish to illustrate the existence of $\Fcal$-stratifications which are not necessarily the coarsest. We include proof sketches here and refer to Section~\ref{sec:proofs} for technical details. 
 \begin{proposition}
 {\label{theorem:exists}}
 Let $\Fcal$ be a sheaf on a finite $T_0$-space $X$. There exists an $\Fcal$-stratification of $X$ (see Definition \ref{fstrat} and Definition \ref{const}).
 \end{proposition}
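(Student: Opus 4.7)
The plan is to exhibit an explicit $\Fcal$-stratification by filtering $X$ according to the height function of the associated poset. Since $X$ is a finite $T_0$-space, it carries its Alexandroff topology, in which the open sets are precisely the up-sets for the associated partial order; consequently, the closed subsets of $X$ are exactly the down-sets.

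For each $x \in X$, let $h(x)$ denote the length of the longest strictly ascending chain $x_0 < x_1 < \cdots < x_k = x$ ending at $x$, and define
\[
X_i := \{x \in X : h(x) \le i\}, \qquad i = -1, 0, 1, \ldots, d,
\]
where $d$ is the dimension of $X$ in the sense of Section~\ref{subsec:t0}. First I would check that each $X_i$ is closed: if $y < x$ and $h(x) \le i$, then $h(y) < h(x) \le i$, so $y \in X_i$, i.e.\ $X_i$ is a down-set. This produces a filtration $\emptyset = X_{-1} \subset X_0 \subset \cdots \subset X_d = X$ by closed subsets, hence a stratification in the sense of Definition~\ref{def:stratification}.

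Next I would observe that $S_i = X_i \setminus X_{i-1} = \{x \in X : h(x) = i\}$ is an antichain, since $x \le y$ in $X$ implies $h(x) \le h(y)$, with equality forcing $x = y$ (any strictly ascending chain ending at $x$ can be extended by $y$ if $x < y$). Consequently, for any $x \in S_i$, the trace on $S_i$ of the basic open set $B_x = \{z : z \ge x\}$ satisfies $B_x \cap S_i = \{x\}$, so $S_i$ inherits the \emph{discrete} topology from $X$.

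Finally, I would argue that $\Fcal|_{S_i}$ is locally constant: for each $x \in S_i$ the singleton $\{x\}$ is open in $S_i$, and any sheaf on a one-point space is tautologically the pull-back of itself along the identity, hence constant in the sense of Section~\ref{subsec:sheaves}. Therefore $\Fcal|_{S_i}$ is locally constant for every $i$, which shows that the above filtration is an $\Fcal$-stratification. The only step requiring care is confirming that the restriction $\Fcal|_{\{x\}}$ obtained through the pull-back construction really is constant; this reduces to the sheafification formula for finite $T_0$-spaces recalled in Section~\ref{subsec:sheaves}, which characterises a sheaf by its values on minimal open neighbourhoods, so the remaining obstacle is essentially cosmetic.
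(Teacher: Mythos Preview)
Your proof is correct and rests on the same core observation as the paper's: once each stratum carries the discrete topology, any sheaf restricted to it is locally constant for the trivial reason that a sheaf on a one-point space is constant. The paper pushes this to the extreme by taking the \emph{finest} possible filtration, ordering the points of $X$ so that each $X_i - X_{i-1}$ is a single element (choosing $x_i$ minimal in $X \setminus X_{i-1}$ to keep every $X_i$ closed), and then invoking the one-point argument directly. Your version instead groups points by poset height, so that each stratum $S_i$ is an antichain rather than a singleton; the discreteness of $S_i$ then does the same work. Your filtration has the pleasant feature of having length exactly $\dim X + 1$, matching the indexing convention in Definition~\ref{def:stratification}, whereas the paper's filtration has length $|X|$; on the other hand, the paper's argument avoids even the mild verification that height-level sets are antichains. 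Either way, the substance is identical.
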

 
\proofsketch
 $\Fcal$ is constructible with respect to the decomposition $X=\coprod_{x\in X}x$. $\qed$

 
\begin{theorem}
  {\label{theorem:coarsest}}
 Let $\Fcal$ be a sheaf on a finite $T_0$-space $X$. There exists a coarsest $\Fcal$-stratification of $X$.
 \end{theorem}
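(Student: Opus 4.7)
The cleanest existence argument is purely combinatorial: since $X$ is finite, it has only finitely many filtrations by closed subsets, hence only finitely many stratifications in the sense of Definition \ref{def:stratification}. The coarsening relation therefore restricts to a partial order on the finite set $\mathcal{S}$ of $\Fcal$-stratifications of $X$, and $\mathcal{S}$ is nonempty by Proposition \ref{theorem:exists}. Any nonempty finite poset has a maximal element, and by Definition \ref{fstrat} a maximal element of $\mathcal{S}$ under coarsening is exactly a coarsest $\Fcal$-stratification. This already finishes the proof, but it is not constructive.

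For the algorithmic purposes later in the paper, I would also give a constructive proof via top-down greedy peeling. The key lemma is that local constancy is a pointwise-local property: if $\{U_\alpha\}$ is a family of open sets with $\Fcal|_{U_\alpha}$ locally constant, then $\Fcal|_{\bigcup_\alpha U_\alpha}$ is locally constant, because any point of the union inherits a constant-sheaf neighborhood from some $U_\alpha$ and that neighborhood is still open in the union. Hence there is a unique maximal open $U_d \subseteq X$ on which $\Fcal$ is locally constant. Setting $X_{d-1} := X \setminus U_d$ (closed in $X$), iterate the construction on the pair $(X_{d-1}, \Fcal|_{X_{d-1}})$ to obtain $U_{d-1}, U_{d-2}, \ldots$; finiteness of $X$ forces termination and yields a filtration $\emptyset = X_{-1} \subset X_0 \subset \cdots \subset X_d = X$ whose strata $S_i = U_i$ satisfy $\Fcal|_{S_i}$ locally constant, giving an $\Fcal$-stratification $\Xstrata$.

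To verify $\Xstrata$ is coarsest, suppose for contradiction that $\Xstrata''$ is a strictly coarser $\Fcal$-stratification. Then some stratum piece $P''$ of $\Xstrata''$ properly contains a stratum piece $P$ of $\Xstrata$, where $P$ is a connected component of some $U_i$. Since distinct connected components of $U_i$ cannot be joined inside $U_i$, the extra points $P''\setminus P$ must include a point $q$ lying outside $U_i$; using local constancy of $\Fcal|_{P''}$ at $q$, together with the behavior of minimal open neighborhoods in the Alexandroff topology, one transports the constant behavior from a neighborhood of $q$ in $P''$ to a locally constant open neighborhood of $q$ inside $X_i$, producing a locally constant open subset of $X_i$ strictly larger than $U_i$ and contradicting the maximality that defined $U_i$.

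The main obstacle is precisely this transport step. Uniqueness of the coarsest $\Fcal$-stratification genuinely fails in general (one can construct small examples where two incomparable maximal elements of $\mathcal{S}$ coexist, which is why Theorem \ref{theorem:uniqueness} needs the extra homogeneity hypothesis to obtain uniqueness), so the contradiction has to be derived directly from \emph{strict} coarseness rather than from any uniqueness claim. The technical subtlety is that $P''$ is only locally closed in $X$ and that the filtration-level index of $P''$ in $\Xstrata''$ need not align with the peeling index of $\Xstrata$; both issues are handled via the Alexandroff-topological fact that for a closed subspace $Z \subseteq X$ the minimal neighborhood in $Z$ of a point is simply $B_z \cap Z$, so that a constant neighborhood of $q$ inside $P''$ can be enlarged to an honest open neighborhood of $q$ in $X_i$ on which $\Fcal$ remains locally constant.
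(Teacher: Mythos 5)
Your first paragraph already proves the theorem as stated, and it is essentially the paper's own argument: the paper also notes there are only finitely many stratifications of $X$ and picks an $\Fcal$-stratification with the minimal number of stratum pieces (your ``maximal element of the finite coarsening order'' is the same finiteness idea, with Proposition~\ref{theorem:exists} supplying nonemptiness). So as a proof of existence the proposal is correct and takes the same route.

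Your constructive supplement also mirrors the paper's constructive proof in Section~\ref{ap:coarsest}: your maximal open set $U_i\subseteq X_i$ on which $\Fcal|_{X_i}$ is locally constant coincides with the paper's $S_{d_k}$, defined as the set of $x\in X_{d_k}$ such that every restriction map $\Fcal(B_w\subset B_y)$ with $x\le y\le w$ in $X_{d_k}$ is an isomorphism (in a finite $T_0$-space, local constancy at $x$ amounts to constancy on $B_x$, which is exactly this condition); termination uses that each $U_i$ is nonempty, since maximal elements of $X_i$ have singleton minimal neighborhoods. The one place where your sketch stops short --- the ``transport step'' showing a strictly coarser piece would enlarge $U_i$ --- is precisely where the paper does its work: it takes $x\in S'_j{}^\circ - S^\circ_i$, uses local constancy of $\Fcal|_{S'_j{}^\circ}$ to get constancy on $B_x\cap S'_j{}^\circ$, passes to $B_x\cap S^\circ_i=B_x\cap X_i$, and concludes from the defining condition of $S_i$ that $x\in S_i$; hence $S'_j{}^\circ\subseteq S_i$, and connectedness forces $S'_j{}^\circ=S^\circ_i$, so no strictly coarser $\Fcal$-stratification exists. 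Your worry about index misalignment (that $q$ need not lie in $X_i$ at all) is a genuine subtlety of this step, and invoking only the fact that minimal neighborhoods in a closed subspace $Z$ are $B_z\cap Z$ does not by itself resolve it. Since the finiteness argument already establishes the theorem, this gap affects only the algorithmic supplement, not the correctness of your proof of the statement.
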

 
\proofsketch
We can prove Theorem \ref{theorem:coarsest} easily as follows. There are only finitely many stratifications of our space $X$, which implies that there must be an $\Fcal$-stratification with a minimal number of strata pieces. Such a stratification must be a coarsest stratification, since any coarser stratification would have fewer strata pieces. 

However, the above proof is rather unenlightening if we are interested in computing the coarsest $\Fcal$-stratification. Therefore we include a constructive proof of the existence of a coarsest $\Fcal$-stratification which we sketch here. We can proceed iteratively, by defining the top-dimensional stratum to be the collection of points (i.e.~elements) so that the  sheaf is constant when restricted to the minimal open neighborhoods of the said points. 
Our process is a greedy algorithm as it seeks to make locally optimal choice at each stage.
Rephrased using the language of simplicial complexes, the top dimensional statum will consist of all simplices such that the sheaf is constant when restricted to the open star of the simplex. We then remove the top-dimensional stratum from our space, and pull back the sheaf to the remaining points/simplices. Notice that the points/simplices which are maximal with respect to the natural partial order (described in Section \ref{subsec:t0}) are guaranteed to be included in the top dimensional stratum. We proceed inductively until all the points in our space have been assigned to a stratum.
We can see that this is a coarsest $\Fcal$-stratification by arguing that this algorithm, in some sense, maximizes the size of each stratum piece, and thus any coarser $\Fcal$-stratification is actually equivalent to the one constructed above. We refer the reader to Section  \ref{ap:coarsest} for the details of the above argument. 
$\qed$

Notice that if $\Fcal$ is the trivial sheaf on $X$, then the coarsest $\Fcal$-stratification of $X$ will be the trivial stratification, consisting of a single stratum, even if the space $X$ has mixed dimensionality. For example, the coarsest $\Fcal$-stratification of the space $X$, as illustrated in Figure \ref{fig:coarsest_nonunique}, is the trivial stratification of $X$. To remedy this situation, we will introduce an extra constraint of homogeneity on a stratification, which will require each stratum piece to have ``pure'' dimension. 
\begin{definition}\label{def:homogeneous}
Suppose $\Fcal$ is a sheaf on a finite $T_0$-space $X$. A \emph{homogeneous $\Fcal$-stratification} is an $\Fcal$-stratification such that for each $i$, the closure of the stratum $S_i$ in $X_i$ is homogeneous of dimension $i$ (defined in Section \ref{subsec:t0}).\end{definition}
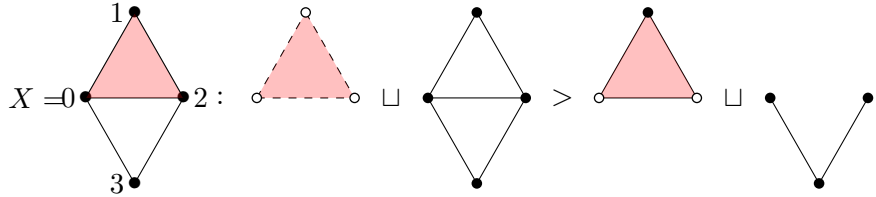
\begin{figure}[ht!]
\begin{center}
\begin{tikzpicture}[scale=.65,circ/.style={
    circle,
    fill=white,
    draw,
    outer sep=0pt,
    inner sep=1.5pt
  }]

\draw (0,0) node[left] {0};
\draw (-1,0) node {$X=$};
\draw (0,0) node {$\bullet$};
\draw (1,1.75) node[left] {1};
\draw (1,1.75) node{$\bullet$};
\draw (2,0) node[right] {2};
\draw (2,0) node{$\bullet$};
\draw (1,-1.75) node[left] {3};
\draw (1,-1.75) node{$\bullet$};
\draw (0,0)--(2,0);
\draw (0,0)--(1,1.75);
\draw (1,1.75)--(2,0);
\draw (0,0)--(1,-1.75);
\draw (1,-1.75)--(2,0);
\draw[draw=black,fill=red, nearly transparent] (0,0)--(1,1.75)--(2,0);

\draw (2.75,0) node{$:$};

\begin{scope}[shift={(3.5,0)}]

\draw[dashed] (0,0)--(2,0);
\draw[dashed] (0,0)--(1,1.75);
\draw[dashed] (1,1.75)--(2,0);
\fill[fill=red, nearly transparent] (0,0)--(1,1.75)--(2,0);
\node[circle,fill=black,inner sep=0pt,minimum size=4pt]  at (0,0) {};
\node[circle,fill=white,inner sep=0pt,minimum size=3pt]  at (0,0) {};
\node[circle,fill=black,inner sep=0pt,minimum size=4pt]  at (2,0) {};
\node[circle,fill=white,inner sep=0pt,minimum size=3pt]  at (2,0) {};
\node[circle,fill=black,inner sep=0pt,minimum size=4pt]  at (1,1.75) {};
\node[circle,fill=white,inner sep=0pt,minimum size=3pt]  at (1,1.75) {};

\draw (2.75,0) node{$\sqcup$};
\end{scope}

\begin{scope}[shift={(7,0)}]
\draw (0,0)--(2,0);
\draw (0,0)--(1,1.75);
\draw (1,1.75)--(2,0);
\draw (0,0)--(1,-1.75);
\draw (1,-1.75)--(2,0);
\node[circle,fill=black,inner sep=0pt,minimum size=4pt] at (0,0) {};
\node[circle,fill=black,inner sep=0pt,minimum size=3pt] at (0,0) {};
\node[circle,fill=black,inner sep=0pt,minimum size=4pt] at (2,0) {};
\node[circle,fill=black,inner sep=0pt,minimum size=3pt] at (2,0) {};
\node[circle,fill=black,inner sep=0pt,minimum size=4pt] at (1,1.75) {};
\node[circle,fill=black,inner sep=0pt,minimum size=4pt] at (1,-1.75) {};

\end{scope}
\draw (9.75,0) node{$>$};

\begin{scope}[shift={(10.5,0)}]

\draw (0,0)--(2,0);
\draw (0,0)--(1,1.75);
\draw (1,1.75)--(2,0);
\fill[fill=red, nearly transparent] (0,0)--(1,1.75)--(2,0);
\node[circle,fill=black,inner sep=0pt,minimum size=4pt]  at (0,0) {};
\node[circle,fill=white,inner sep=0pt,minimum size=3pt]  at (0,0) {};
\node[circle,fill=black,inner sep=0pt,minimum size=4pt]  at (2,0) {};
\node[circle,fill=white,inner sep=0pt,minimum size=3pt]  at (2,0) {};
\node[circle,fill=black,inner sep=0pt,minimum size=4pt] at (1,1.75) {};
\draw (2.75,0) node{$\sqcup$};
\end{scope}

\begin{scope}[shift={(14,0)}]
\draw (0,0)--(1,-1.75);
\draw (1,-1.75)--(2,0);
\node[circle,fill=black,inner sep=0pt,minimum size=4pt] at (0,0) {};
\node[circle,fill=black,inner sep=0pt,minimum size=4pt] at (2,0) {};
\node[circle,fill=black,inner sep=0pt,minimum size=4pt] at (1,-1.75) {};

\end{scope}

\end{tikzpicture}
\caption{An example of two inequivalent coarsest homogeneous $\mathcal{C}$-stratifications, where $\mathcal{C}$ is a constant sheaf on $X$. The stratification given on the right is the minimal homogeneous $\mathcal{C}$-stratification.} 
\label{fig:coarsest_nonunique}
\end{center}
\end{figure}
Observe that if $\Fcal$ is the constant sheaf on a simplicial complex $X$, as in Figure \ref{fig:coarsest_nonunique}, then the skeletal filtration of $X$ (defined by setting $X_i$ to be the set of $i$-simplices of the simplical complex $X$) is automatically a homogeneous $\Fcal$-stratification. However, the skeletal stratification is not guaranteed to be a coarsest $\Fcal$-stratification. 

 We will introduce a lexicographical preorder on the set of homogeneous stratifications of a finite $T_0$-space $\Xspace$. While this preorder can be extended to the set of all stratifications, we choose to define the preorder on homogeneous stratifications in order to take advantage of the required regularity the indexing of stratum. Specifically, the fact that the stratum $X_i-X_{i-1}$ is homogeneous of dimension $i$ will guarantee compatible indexing of each homogeneous stratification of $X$. Let $\Xstrata$ be a homogeneous stratification of $X$ given by
$$
\emptyset = X_{-1} \subset X_0\subset X_1\subset\cdots \subset X_n=X.
$$
We define a sequence $A^{\Xstrata}:=\{|X_n|,\cdots,|X_i|,\cdots, |X_0|\}$, where $|X_i|$ denotes the cardinality of the set $X_i$. Notice that the order of the sequence $A^\Xstrata$ is the reverse of the natural filtration order corresponding to $\Xstrata$. Given two homogeneous stratifications $\Xstrata$ and $\Xstrata'$ of $X$, let $m\in\{-1,\cdots,n\}$ be the smallest integer such that $|X'_j|-|X_j|=0$ for all $j>m$. We say that $\Xstrata<\Xstrata'$ if $|X'_m|-|X_m|$ is positive;  $\Xstrata\le\Xstrata'$ if $m=-1$. In other words, we say that $\Xstrata<\Xstrata'$ if the first nonzero term in the sequence $A^{\Xstrata'}-A^\Xstrata=\{|X'_n|-|X_n|,\cdots,|X'_i|-|X_i|,\cdots,|X'_0|-|X_0|\}$ is positive. For two stratifications $\Xstrata$ and $\Xstrata'$ of the same space $X$, the first term, $|X'_n|-|X_n|$, of the sequence $A^{\Xstrata'}-A^\Xstrata$, is automatically zero, since $X_n=X_n'=X$.

 Figure \ref{fig:coarsest_nonunique} illustrates two homogeneous stratifications of the simplicial complex $X$. The stratification on the left produces an integer sequence $A^{\Xstrata'}=\{10,9,0\}$, while the stratification on the right produces the sequence $A^{\Xstrata}=\{10,5,0\}$. The sequence of differences is then $A^{\Xstrata'}-A^\Xstrata=\{0,4,0\}$. The first nonzero term in the sequence is positive, so $\Xstrata<\Xstrata'$.

\begin{proposition}
If $\Xstrata$ and $\Xstrata'$ are homogeneous stratifications of $X$ such that $\Xstrata$ is coarser than $\Xstrata'$, then $\Xstrata\le \Xstrata'$. Moreover, if $\Xstrata$ is coarser than $\Xstrata'$, and $A^{\Xstrata}=A^{\Xstrata'}$, then $\Xstrata=\Xstrata'$. 
\end{proposition}

\begin{proof} 
By Definition \ref{def:coarser}, each stratum piece of $\Xstrata'$ is contained in a stratum piece of $\Xstrata$. Assume that a connected component of $X'_i-X'_{i-1}$ is contained in a connected component of $X_j-X_{j-1}$ for some $j$. Since both stratifications are assumed to be homogeneous, we have that each connected component of $X'_i-X'_{i-1}$ is homogeneous of dimension $i$ and each connected component of $X'_j-X'_{j-1}$ is homogeneous of dimension $j$. By the assumed containment of stratum pieces, we have that $i\le j$.

Assume that $X_j=X'_j$ for all $j\ge i$. Then $X_j-X_{j-1}=X'_j-X'_{j-1}$ for all $j>i$. Since $X_j-X_{j-1}=X_j'-X'_{j-1}$ for all $j>i $, and $(X'_j-X'_{j-1})\cap(X'_i-X'_{i-1})=\emptyset$ for all $j>i$, we must have that $(X_j-X_{j-1})\cap(X'_i-X'_{i-1})=\emptyset$ for all $j>i$. By the preceding argument, each connected component of $X'_i-X'_{i-1}$ is contained in $X_j-X_{j-1}$ for some $j\ge i$. Therefore, $X'_i-X'_{i-1}\subseteq X_i-X_{i-1}$. Combining the equality $X'_i=X_i$ with the containment $X'_i-X'_{i-1}\subseteq X_i-X_{i-1}$, we have that $X_{i-1}\subseteq X'_{i-1}$.

If $X_{i-1}\subsetneq X'_{i-1}$, then $|X_{i-1}'|-|X_{i-1}|>0$, and $\Xstrata <\Xstrata'$. Otherwise, $X_{i-1}=X'_{i-1}$, and the proof proceeds inductively until either $X_{k}\subsetneq X'_{k}$ for some $k$, or $X_k=X'_k$ for all $k$. If $X_{k}\subsetneq X'_{k}$ then $\Xstrata<\Xstrata'$. If $X_k=X'_k$ for all $k$, then $\Xstrata=\Xstrata'$. 
\end{proof}

We will use the preorder on the set of homogeneous $\Fcal$-stratifications to identify a unique stratification which is minimal in the preorder. 
\begin{definition}{\label{def:minimal}}
 We say that a stratification $\Xstrata$ is a \emph{minimal homogeneous $\Fcal$-stratification} if $\Xstrata\le \Xstrata'$ for every other homogeneous $\Fcal$-stratification $\Xstrata'$. In other words, a homogeneous $\Fcal$-stratification is \emph{minimal} if it is minimal with respect to the lexocographic order on homogeneous $\Fcal$-stratifications.
\end{definition}
There are several examples which illustrate the necessity of introducing \emph{minimal} homogeneous $\Fcal$-stratifications, rather than studying only coarsest homogeneous $\Fcal$-stratifications. Consider the simplicial complex $K$ illustrated in Figure \ref{fig:coarsest_nonunique}. Let $\mathcal{C}$ denote the constant sheaf on the corresponding $T_0$-space $X$ by assigning the one dimensional vector space $k$ to the star of each simplex $\sigma$, $\mathcal{C}(\str\sigma)=k$. For the restriction maps, $\mathcal{C}(\str(\tau)\subset\str(\sigma))$ is an isomorphism for each pair $\sigma<\tau$. Figure \ref{fig:coarsest_nonunique} shows that there are two coarsest homogeneous $\mathcal{C}$-stratifications of $X$. The stratification on the right side of Figure \ref{fig:coarsest_nonunique} is the minimal homogeneous $\mathcal{C}$-stratification.

 \begin{theorem}
 {\label{theorem:uniqueness}}
 Let $K$ be a finite simplicial complex, and $X$ be a finite $T_0$-space consisting of the simplices of $K$ endowed with the Alexandroff topology. Let $\Fcal$ be a sheaf on $X$. There exists a unique minimal homogeneous $\Fcal$-stratification of $X$.
\end{theorem}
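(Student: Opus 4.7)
I begin by producing an explicit homogeneous $\Fcal$-stratification of $X$, namely the \emph{skeletal filtration} in which $X_i$ consists of the simplices of $K$ of dimension at most $i$. The stratum $S_i$ is then exactly the set of $i$-dimensional simplices; these are maximal elements of the poset $X_i$, so their minimal open neighborhoods in the subspace topology on $S_i$ are singletons and $\Fcal|_{S_i}$ is trivially locally constant. The closure of $S_i$ in $X_i$ is the downward closure of these $i$-simplices, a pure $i$-subcomplex whose maximal chains all have length $i+1$, so it is homogeneous of dimension $i$. Hence the set $\mathcal{H}$ of homogeneous $\Fcal$-stratifications of $X$ is nonempty, and since $X$ is finite it is finite; the lex preorder on $\mathcal{H}$ therefore attains a minimum.

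\textbf{A canonical top stratum.} For uniqueness, I first observe that the depth of every homogeneous $\Fcal$-stratification is forced to be $n = \dim X$: a maximal chain in $X$ of length $\dim X + 1$ must terminate at an element in some stratum $S_j$, and since $X_j$ is downward-closed the entire chain lies in $\overline{S_j}$, so homogeneity forces $j = \dim X$. Having fixed $n$, I define the canonical top stratum as
\[
S_n^{\star} := \bigcup \Bigl\{\, T \subseteq X \,:\, T \text{ open},\; \Fcal|_T \text{ locally constant},\; \overline{T} \text{ homogeneous of dim } n \,\Bigr\}.
\]
The plan is to verify that $S_n^\star$ itself belongs to this family, hence is its unique maximum. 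Openness and local constancy are immediate from the union structure, since local constancy is a local property and each point of $S_n^\star$ inherits a constant model from some constituent $T$. Homogeneity of $\overline{S_n^\star} = \bigcup_T \overline{T}$ is to be argued by showing that every maximal chain in the union lies inside some single $\overline{T}$: any element of $S_n^\star$ is dominated by a maximal element of some $T$, and a maximal chain must extend up to such a maximal element, placing it inside that single $\overline{T}$ where homogeneity gives length $n+1$.

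\textbf{Forcing equality and induction.} Once $S_n^\star$ is established as the unique maximal such set, minimality of $\Xstrata$ forces its top stratum to equal $S_n^\star$: the top stratum $S_n$ of $\Xstrata$ is a member of the defining family, so $S_n \subseteq S_n^\star$; if the inclusion were strict, replacing $S_n$ by $S_n^\star$ would yield another homogeneous $\Fcal$-stratification with strictly smaller $|X_{n-1}|$ and identical higher coordinates of $A^{\Xstrata}$, contradicting lex-minimality. Setting $X_{n-1} := X \setminus S_n^\star$ (closed since $S_n^\star$ is open) and restricting $\Fcal$ along the inclusion $X_{n-1} \hookrightarrow X$ reduces the problem to a finite $T_0$-space of strictly smaller dimension. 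Induction on $\dim X$ then produces a unique minimal homogeneous $\Fcal|_{X_{n-1}}$-stratification of $X_{n-1}$, whose concatenation with the top stratum $S_n^\star$ yields the desired unique minimal homogeneous $\Fcal$-stratification of $X$.

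\textbf{Main obstacle.} The delicate step is the verification that $\overline{S_n^\star}$ is homogeneous of dimension $n$, i.e.\ that homogeneity is preserved under unions of valid top strata. Local constancy behaves well under unions because it is a local property, but homogeneity is a global combinatorial condition on maximal chains in the poset, and it is not automatic that a union of downward-closures fails to introduce chains of aberrant length. I expect the argument to rest on the fact that every element of $S_n^\star$ is dominated by a maximal element of some constituent $T$, so each maximal chain can be completed to length $n+1$ within one $\overline{T}$; a secondary subtlety is ensuring that the pullback of $\Fcal$ to $X_{n-1}$ is compatible with the Alexandroff topology and the dimension drop, which is immediate since the inclusion of a downward-closed subposet is a closed embedding.
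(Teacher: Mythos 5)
Your proposal is correct and follows essentially the same strategy as the paper's proof: both identify a canonical maximal valid top stratum, argue that lex-minimality forces any minimal homogeneous $\Fcal$-stratification to use exactly that top stratum, and then pass to the closed complement and induct on dimension. The only substantive difference is in how the top stratum is described: the paper defines it pointwise as $S_{d_0}=H_{d_0}\cap C_{d_0}$, where $H_{d_0}$ collects the elements $x$ whose closed star $\closure(B_x)$ is homogeneous of dimension $d_0$ and $C_{d_0}$ those whose restriction maps inside $B_x$ are all isomorphisms, whereas you define $S_n^\star$ abstractly as the union of all open $T$ with $\Fcal|_T$ locally constant and $\overline{T}$ homogeneous of dimension $n$; the two sets coincide, since any such $T$ satisfies both pointwise conditions ($B_x\subseteq T$ gives the isomorphisms, and downward-closedness of $\overline{T}$ together with homogeneity forces every maximal element of $\closure(B_x)$ to be a top-dimensional simplex), and conversely $H_n\cap C_n$ is itself one of the $T$'s. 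Your union formulation makes maximality tautological and concentrates the work in verifying that $S_n^\star$ is itself a valid $T$ (the homogeneity-of-closure step you flag as the main obstacle, whose resolution is exactly the observation that each $\overline{T}$ is downward-closed with all maximal elements $n$-dimensional, so a maximal chain in $\bigcup\overline T$ has a top element that is maximal in some single $\overline T$, hence is an $n$-simplex, and the whole chain then lies in that $\overline T$); the paper's pointwise formulation concentrates the work in the uniqueness step, where it must re-derive that the pointwise-defined $S_n$ is maximal among valid $T$'s, and it has the practical advantage of being directly implementable as the membership test used in the algorithm of Section 4.
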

\proofsketch
The idea for this proof is very similar to that of the Theorem \ref{theorem:coarsest}. We construct a stratification in a very similar way, with the only difference being that we must be careful to only construct homogeneous strata. The argument for the uniqueness of the resulting stratification uses the observation that this iterative process maximizes the size of the current stratum (starting with the top-dimensional stratum) before moving on to define lower-dimensional strata. Thus the resulting stratification is minimal in the lexocographic order. The top-dimensional stratum of any other minimal homogeneous $\Fcal$-stratification then must equal the top stratum constructed above, since these must both include the set of top-dimensional simplices, and have maximal size. An inductive argument then shows the stratifications are equivalent. Again, we refer readers to Section \ref{ap:uniqueness} for the remaining details. $\qed$

\section{A Sheaf-Theoretic Stratification Learning Algorithm}
\label{sec:clustering}

We outline an explicit algorithm for computing the coarsest $\Fcal$-stratification of a space $X$ given a particular sheaf $\Fcal$. 
We give two examples of stratification learning using the local homology sheaf (Section~\ref{sec:localhomology}) and the sheaf of maximal elements (Section~\ref{sec:maximal-element}).

Let $X$ be a finite $T_0$-space, equipped with the natural partial order defined in Section \ref{subsec:t0}.  
Instead of using the sheaf-theoretic language of Theorem \ref{theorem:uniqueness}, we frame the computation in terms of $X$ and an ``indicator function" $\delta$. 
For every $x,y\in X$ with a relation $x\le y$, $\delta$ assigns a binary value to the relation.
That is, $\delta(x\le y)=1$ if the restriction map $\Fcal(B_y\subset B_x):\Fcal(B_x)\rightarrow \Fcal(B_y)$ is an isomorphism, and $\delta(x\le y)=0$ otherwise. We say a pair $w\le y$ is \emph{adjacent} if $w\le z\le y$ implies $z=w$ or $z=y$ (in other words, there are no elements in between $w$ and $y$). Due to condition 2. in the definition of a sheaf (Section \ref{subsec:sheaves}), $\delta$ is fully determined by the values $\delta(w\le y)$ assigned to each adjacent pair $(w,y)$. If $a_1\le a_2\le \cdots\le a_k$ is a chain of adjacent elements ($a_i$ is adjacent to $a_{i+1}$ for each $i$), we have that $\delta(a_1 \le a_k)=\delta(a_1 \le a_2)\cdot \delta(a_2\le a_3)\cdots \delta(a_{k-1}\le a_k)$.
As $X$ is equipped with a finite partially ordering, computing $\delta$ can be interpreted as assigning a binary label to the edges of a Hasse diagram associated with the partial ordering (see Section~\ref{sec:localhomology} for an example).

For simplicity, we assume that $\delta$ is pre-computed, with a complexity of $O(m)$ where $m$ denotes the number of adjacent relations in $X$. When $X$ corresponds to a simplicial complex $K$, $m$ is the number of nonzero terms in the boundary matrices of $K$.  
$\delta$ can, of course, be processed on-the-fly, which may lead to more efficient algorithm. 
In addition, determining the value of $\delta$ is a local computation for each $x \in X$, therefore it is easily parallelizable. 

\para{Computing a coarsest $\Fcal$-stratification.}
 If we are only concerned with calculating a coarsest $\Fcal$-stratification as described in Theorem \ref{theorem:coarsest}, we may use the algorithm below.
 
  \begin{enumerate}
 \item Set $i=0$, $d_0=\text{dim}X$, $X_{d_0}=X$, and initialize $S_j = \emptyset$, for all $0\le j\le d_0$.
 \item While $d_i\ge 0$, do 
 \begin{enumerate}
 \item For each $x\in X_{d_i}$, set $S_{d_i}=S_{d_i}\cup x$ if $\delta(w\le y)=1$, $\forall$ adjacent pairs $w\le y$ in $ B_x\cap X_{d_i}$
 \item Set $d_{i+1}=\text{dim}(X_{d_i}-S_{d_i})$
 \item Define $X_{d_{i+1}}=X_{d_i}-S_{d_i}$
 \item Set $i=i+1$ 
 \end{enumerate}
 \item Return S
 \end{enumerate}
 
Here, $i$ is the step counter; $d_i$ is the dimension of the current  strata of interest; the set $S_{d_i}$ is the stratum of dimension $d_i$.
$d_i$ decreases from $\dime(X)$ to $0$.   
To include an element $x$ to the current stratum $S_{d_i}$, we need to check $\delta$ for adjacent relations among all $x$'s cofaces. It is worthwhile to note that in step 2a, we only want to consider adjacent pairs $w\le y$ in the minimal open neighborhood of $x$ (i.e. the open star of $x$) which have not been previously assigned to a stratum. In other words, instead of checking $\delta(w\le y)$ for every adjacent pair in the minimal open neighborhood of $x$, we only want to require $\delta(w\le y)=1$ for each adjacent pair which is contained in both the minimal open neighborhood of $x$ and contained in the $d_i$-level of the filtration. Moreover, each element of $X$ which is maximal with respect to the partial order will be automatically included in current strata defined by the algorithm. Hence the cardinality of the set of unassigned elements will decrease after each iteration of the algorithm.  
 
\para{Computing the unique minimal homogeneous $\Fcal$-stratification.}
If we would like to obtain the unique minimal homogeneous $\Fcal$-stratification, then we need to modify step 2a. Let $c(x,i)=1$ if all maximal chains in $X_{d_i}$ containing $x$ have cardinality $d_i$, and $c(x,i)=0$ otherwise. Then the modified version of 2.a. is: 
\begin{eqnarray*}
 \text{2.a'. For each}&&\text{ $x\in X_{d_i}$, set $S_{d_i}= S_{d_i}\cup x$ if} \\
 &&\delta(w\le y)=1 \forall \text{ adj. pairs }  w\le y \text{ in } B_x\cap X_{d_i}  \text{, and}\text{ $c(x,i)=1$}
\end{eqnarray*}
The only modification in the algorithm for computing the unique minimal homogeneous $\Fcal$-stratification is the additional requirement that $c(x,i)=1$ for $x$ to be included in the stratum $X_{d_i}$. This algorithm constructs a stratification which is minimal in the lexicographic order on homogeneous $\Fcal$-stratifications because the algorithm iteratively maximizes the cardinality of each stratum, starting with the top dimensional stratum. In Section \ref{ap:uniqueness}, we give a detailed proof that since the cardinality of the stratum is maximized at each iteration of the algorithm, the resulting stratification must be minimal in the lexicographic order.

\section{Stratification Learning with the Local Homology Sheaf}
\label{sec:localhomology}

\subsection{Local Homology Sheaf}
\label{subsec:LHS}

We begin with an example that uses homology to cluster simplices of a $T_0$-space into strata. In order to capture local structure, we study a version of relative homology (i.e.,~the local homology) involving minimal open neighborhoods. In order to define relative (and local) homology, we will start with a chain complex associated to a finite $T_0$-space. 


For a finite $T_0$-space $X$, consider the chain complex $C_\bullet(X)$, where $C_p(X)$ denotes the free $R$-module generated by $(p+1)$-chains in $X$, with chain maps $\bdr_p:C_p(X)\rightarrow C_{p-1}(X)$ given by 
$$
\bdr_p(a_0 \le \cdots \le a_p)=\sum (-1)^i (a_0\le \cdots \le \hat{a_i} \le \cdots \le a_p)
$$
where $\hat{a_i}$ means that the element $a_i$ is to be removed from the chain. Define the homology of $X$ (with coefficients in the ring $R$) to be the homology groups of $C_\bullet(X)$, $H_i(X)=\ker \bdr_i / \text{im } \bdr_{i+1}$. Since $X$ is a finite $T_0$-space, each subset $U$ of $X$ is a finite $T_0$-space. We therefore define the homology of $X$ relative to $U$, $H_\bullet(X,U)$, to be the homology groups of the quotient of chain complexes $C_\bullet(X)/C_\bullet(U)$.

If $X$ is a more general topological space (CW space, simplicial complex, manifold, etc), then the \emph{local homology} of $X$ at $x\in X$ is defined to be the direct limit of relative homology $H_\bullet(X, X-x):=\varinjlim H_\bullet (X,X-U)$ (where the direct limit is taken over all open neighborhoods $U$ of $x$ with the inclusion partial order)~\cite[page 196]{Munkres1984}. In our setting, the local homology of $X$ (a finite $T_0$-space) at a point $x\in X$ is given by $H_\bullet(X,X-B_x)$. Here we avoid using notions of direct limit by working with topological spaces that have minimal open neighborhoods. This motivates our decision to refer to the sheaf defined by relative homology $H_\bullet(X,X-U)$ for each open set $U$ (see Theorem \ref{theorem:LH-sheaf}), as the \emph{local homology sheaf} \footnote{See \cite{CianciOttina2017} for an interesting approach to the computation of homology groups of finite $T_0$-spaces using spectral sequences.}. 

The following theorem, though straightforward, provides justification for applying the results of Section~\ref{sec:clustering} to local homology computations.  
\begin{theorem}
\label{theorem:LH-sheaf}
The functor $\mathcal{L}$ from the category of open sets of a finite $T_0$-space to the category of graded $R$-modules, defined by
$$
 \mathcal{L}(U):=H_\bullet(X,X-U)
$$
 where $R$ is the ring of coefficients of the relative homology, is a sheaf on $X$. 
 \end{theorem}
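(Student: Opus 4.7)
The proof plan is to verify each of the four sheaf axioms in turn. Axioms 1--3 are routine consequences of the functoriality of relative homology: $\mathcal{L}(\emptyset) = H_\bullet(X, X) = 0$ since the relative chain complex vanishes; and for open $U \subseteq V$ we have $X - V \subseteq X - U$, so the identity on $X$ gives an inclusion of pairs $(X, X-V) \hookrightarrow (X, X-U)$ and hence a well-defined restriction map $\mathcal{L}(U \subset V)$, with the identity and composition axioms following immediately from functoriality.

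The substantive content is axiom 4 (gluing). My plan is to exploit the combinatorial structure of the Alexandroff topology: open sets in $X$ are precisely the upper sets of the poset, so $X - V$ is a lower (downward-closed) set for every open $V$. Consequently a chain $a_0 \leq \cdots \leq a_p$ lies in $C_\bullet(X - V)$ if and only if its top element $a_p$ does, since the remaining $a_i \leq a_p$ are automatically in any lower set containing $a_p$. Therefore $C_\bullet(X)/C_\bullet(X - U)$ is freely generated by those chains whose top element lies in $U$. For a two-element cover $U = V_1 \cup V_2$, any chain with top in $X - (V_1 \cap V_2) = (X - V_1) \cup (X - V_2)$ has its top in some $X - V_i$ and therefore lies entirely in that one lower set, giving the chain-level identity $C_\bullet(X - V_1) + C_\bullet(X - V_2) = C_\bullet(X - V_1 \cap V_2)$. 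From this I would assemble the short exact sequence of chain complexes
$$0 \to C_\bullet(X)/C_\bullet(X-U) \to C_\bullet(X)/C_\bullet(X-V_1) \oplus C_\bullet(X)/C_\bullet(X-V_2) \to C_\bullet(X)/C_\bullet(X-V_1 \cap V_2) \to 0,$$
where the first map sends a class $c$ to $(\bar c, \bar c)$ and the second sends $(a, b)$ to $\bar a - \bar b$. Passing to homology yields a Mayer--Vietoris long exact sequence, which is the tool I would use to extract the equalizer condition on $\mathcal{L}$ for the cover $\{V_1, V_2\}$. Arbitrary finite covers then follow by induction on the cover size.

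The hard part I anticipate is extracting the full equalizer condition $0 \to \mathcal{L}(U) \to \mathcal{L}(V_1) \oplus \mathcal{L}(V_2) \to \mathcal{L}(V_1 \cap V_2)$ cleanly from the Mayer--Vietoris long exact sequence, which amounts to controlling the connecting homomorphism so that $\mathcal{L}(U) \to \mathcal{L}(V_1) \oplus \mathcal{L}(V_2)$ is injective. This is the uniqueness half of the sheaf axiom and is where the finite $T_0$-structure must be used most carefully. I would approach it either by a direct chain-level argument, leveraging the fact that $C_p(X)/C_p(X - U)$ is freely generated in a way localized by top element (so that restriction to a cover is essentially a decomposition by top element of each generator), or by reducing to the canonical cover $\{B_x\}_{x \in U}$ by minimal open neighborhoods and verifying the limit description of $\mathcal{L}(U)$ stalk-by-stalk using the sheafification machinery of Section~\ref{subsec:sheaves}.
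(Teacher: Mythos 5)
Your treatment of axioms 1--3 matches the paper's, and your chain-level setup for axiom 4 is correct and in fact more explicit than anything in the paper: since open sets in the Alexandroff topology are upper sets, a chain lies in $C_\bullet(X-V)$ exactly when its top element does, so $C_\bullet(X-V_1)+C_\bullet(X-V_2)=C_\bullet(X-(V_1\cap V_2))$ and your short exact sequence of quotient complexes is valid. But the step you flag as the hard part --- showing the connecting homomorphism vanishes so that $\mathcal{L}(U)\to\mathcal{L}(V_1)\oplus\mathcal{L}(V_2)$ is injective --- is a genuine gap, and it cannot be closed, because the injectivity is false. Take $X=\{a,b,e\}$ with $a<e$ and $b<e$ (the face poset of a single edge). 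Then $B_a=\{a,e\}$ and $B_b=\{b,e\}$ form an open cover of $X$ with $B_a\cap B_b=B_e=\{e\}$, and one computes $\mathcal{L}(B_a)=H_\bullet(X,\{b\})=0$ and $\mathcal{L}(B_b)=H_\bullet(X,\{a\})=0$ (the order complex of $X$ is a contractible path), while $\mathcal{L}(X)=H_\bullet(X,\emptyset)=H_0(X)=R$. So $\mathcal{L}(X)\to\mathcal{L}(B_a)\oplus\mathcal{L}(B_b)$ kills a nonzero class and the uniqueness clause of axiom 4 fails; equivalently, in your long exact sequence the connecting map $H_1(X,X-B_e)\cong R\to H_0(X,\emptyset)\cong R$ is an isomorphism rather than zero. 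Your Mayer--Vietoris sequence does deliver the existence half of gluing (exactness at the middle term), but not uniqueness; this reflects the standard fact that local homology assembles into a cosheaf, while it is local cohomology that assembles into a sheaf.

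You should also know that the paper does not prove axiom 4 either: its proof verifies only conditions 1--3 (constructing the restriction maps from the inclusions of pairs, as you do) and then remarks that condition 4 ``could'' be handled by Mayer--Vietoris, explicitly declining to include the details because the algorithm only ever uses the presheaf structure. So your proposal reproduces the paper's argument where the paper gives one, and your plan for the remaining axiom founders exactly where the claimed statement breaks down. The defensible conclusion is that $\mathcal{L}$ is a presheaf --- which your axioms 1--3 argument establishes and which is all the rest of the paper requires --- and that the full sheaf assertion should be dropped, restated for the sheafification $\mathcal{L}^+$, or dualized to a cosheaf statement.
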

 \begin{proof}
We first show that conditions $1.$-$2.$ are satisfied in the definition of sheaf from Section~\ref{sec:background}. The inclusion of open sets $U\subset V$, and equivalently $X-V\subset X-U$, induce a morphism of graded $R$-modules,  
$$
 \mathcal{L}(U\subset V):H_\bullet(X,X-V)\rightarrow H_\bullet(X,X-U).
$$
We have the following commutative diagram of chain complexes
\begin{center}
\begin{tikzcd}%
        0 \arrow[r] & C_\bullet(X-V) \arrow[r] \arrow[d] & C_\bullet(X) \arrow{r}[above]{p_1}  \arrow{d}[right]{\text{id}} & C_\bullet(X)/C_\bullet(X-V) \arrow{r} \arrow[d]  & 0 \\  
        0 \arrow[r] & C_\bullet(X-U) \arrow[r]  & C_\bullet(X) \arrow{r}[above]{p_2}  & C_\bullet(X)/C_\bullet(X-U) \arrow[r]  & 0 
    \end{tikzcd}
    \end{center}
    where the map $C_\bullet(X)/C_\bullet(X-V) \rightarrow  C_\bullet(X)/C_\bullet(X-U)$ is defined by $p_2\circ p_1^{-1}$, and is well-defined since $X-V\subset X-U$. For a triple $U\subset V\subset W$, we have the restriction maps
$$
H_\bullet(X, X-W)\rightarrow H_\bullet(X,X-V)\rightarrow H_\bullet(X,X-U)
$$
whose composition is equal to $H_\bullet(X, X-W)\rightarrow H_\bullet(X,X-U)$. This can be seen by applying our construction of the restriction map above to three short exact sequences of chain complexes.  In order to prove that condition $3.$ in the definition of a sheaf is satisfied, we could apply Mayer-Vietoris sequences for relative homology groups. But considering that we only need to think of $\mathcal{L}$ as a presheaf in order to apply our algorithm, we will not include the details of this part of the proof. 
\end{proof}

\subsection{An Example Using the Local Homology Sheaf}
\label{subsec:localhomology}
If $X$ is a $T_0$-space corresponding to a simplicial complex $K$, then the local homology groups in Section~\ref{subsec:LHS} are isomorphic to the simplicial homology groups of $K$. 
We now give a detailed example of stratification learning using local homology sheaf for the sundial example from Figure \ref{fig:sundial}. 
We will abuse notation slightly, and use $K$ to denote the finite $T_0$-space consisting of elements which are open simplices corresponding to the triangulated sundial (Figure~\ref{fig:triangulated-sundial}). We choose this notation so that we can describe our $T_0$-space using the more familiar language of simplicial complexes. 
For a simplex $\sigma \in K$, its minimal open neighborhood $B_{\sigma}$ is its \emph{star} consisting of all cofaces of $\sigma$, 
$\str(\sigma) = \{\tau \in K \mid \sigma \leq \tau\}$.   
The \emph{closed star}, $\overline{\str}(\sigma)$, is the smallest subcomplex that contains the star. 
The \emph{link} consists of all simplices in the closed star that are disjoint from the star, $\lk(\sigma) = \{\tau \in \overline{\str}(\sigma) \mid \tau \cap \str(\sigma) = \emptyset\}$. $K$ is equipped with a partial order  based on face relations, where $x < y$ if $x$ is a proper face of $y$. 
This partial order gives rise to a Hasse diagram illustrated in Figure~\ref{fig:sundial-hasse}.
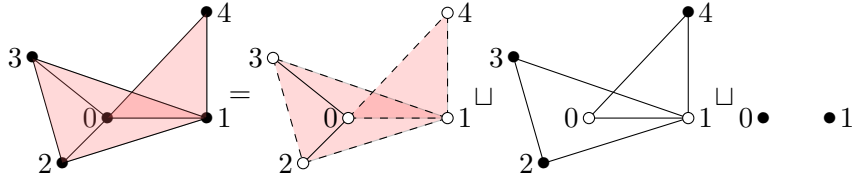
\begin{figure}[ht!]
\begin{center}
\begin{tikzpicture}[scale=.8,circ/.style={
    circle,
    fill=white,
    draw,
    outer sep=0pt,
    inner sep=1.5pt
  }]
\draw (0,0) node {$\bullet$};
\draw (0,0) node[left] {0};
\draw (1.65,0) node {$\bullet$};
\draw (1.65,0) node[right] {1};
\draw (1.65,1.75) node {$\bullet$};
\draw (1.65,1.75) node[right] {4};
\draw (-1.25,1) node {$\bullet$};
\draw (-1.25,1) node[left] {3};
\draw (-.75,-.75) node {$\bullet$};
\draw (-.75,-.75) node[left] {2};
\draw (0,0)--(1.65,0);
\draw (0,0)--(1.65,1.75);
\draw (1.65,0)--(1.65,1.75);
\draw (1.65,0)--(-1.25,1);
\draw (1.65,0)--(-.75,-.75);
\draw (-1.25,1)--(-.75,-.75);
\draw (0,0)--(-.75,-.75);
\draw (0,0)--(-1.25,1);
\draw[draw=black,fill=red!60!white, nearly transparent] (0,0)--(1.65,0)--(1.65,1.75);
\draw[draw=black,fill=red!60!white, nearly transparent] (-.75,-.75)--(1.65,0)--(-1.25,1);

\draw (2.2,.35) node{$=$};
\draw[draw=black,fill=red!60!white, nearly transparent] (4+0,0)--(4+1.65,0)--(4+1.65,1.75);
\draw[draw=black,fill=red!60!white, nearly transparent] (4+-.75,-.75)--(4+1.65,0)--(4+-1.25,1);
\draw (4+0,0) node[left] {0};
\draw (4+1.65,0) node[right] {1};
\draw (4+1.65,1.75) node[right] {4};
\draw (4+-1.25,1) node[left] {3};

\draw (4+-.75,-.75) node[left] {2};
\draw[dashed] (4+0,0)--(4+1.65,0);
\draw[dashed]  (4+0,0)--(4+1.65,1.75);
\draw[dashed] (4+1.65,0)--(4+1.65,1.75);
\draw[dashed] (4+1.65,0)--(4-1.25,1);
\draw[dashed]  (4+1.65,0)--(4+-.75,-.75);
\draw[dashed]  (4+-1.25,1)--(4+-.75,-.75);
\draw (4+0,0)--(4+-.75,-.75);
\draw (4+0,0)--(4+-1.25,1);

\draw (4+-.75,-.75) node[circ] {};
\draw (4+-1.25,1) node[circ] {};
\draw (4+1.65,1.75) node[circ] {};
\draw (4+1.65,0) node[circ] {};
\draw (4+0,0) node[circ]{};

\draw (6.25,.35) node{$\sqcup$};

\draw (8+0,0) node[left] {0};
\draw (8+1.65,0) node[right] {1};
\draw (8+1.65,1.75) node[right] {4};
\draw (8+-1.25,1) node[left] {3};
\draw (8+-.75,-.75) node[left] {2};
\draw (8+0,0)--(8+1.65,0);
\draw  (8+0,0)--(8+1.65,1.75);
\draw (8+1.65,0)--(8+1.65,1.75);
\draw (8+1.65,0)--(8+-1.25,1);
\draw  (8+1.65,0)--(8+-.75,-.75);
\draw (8+-1.25,1)--(8+-.75,-.75);

\draw (8+-.75,-.75) node[] {$\bullet$};
\draw (8+-1.25,1) node[] {$\bullet$};
\draw (8+1.65,1.75) node[] {$\bullet$};
\draw (8+1.65,0) node[circ] {};
\draw (8+0,0) node[circ]{};

\draw (10.25,.35) node{$\sqcup$};

\draw (10.9,0) node[left] {0};
\draw (11+1,0) node[right] {1};

\draw (11+1,0) node[] {$\bullet$};
\draw (10.9,0) node[]{$\bullet$};

\end{tikzpicture}
\caption{A triangulated sundial and its stratification based on the local homology sheaf.} 
\label{fig:triangulated-sundial}
\end{center}
\end{figure}

\begin{figure}[t!]
\begin{center}
\begin{tikzpicture}[scale=.75,circ/.style={
    circle,
    fill=red!30!white,
    draw,
    outer sep=0pt,
    inner sep=1.3pt
  }]
\draw (-5.5,0) node {$[0,1,3]$};

 \shade[ball color = red!40, opacity = 0.4] (-4.5,0) circle (.2cm);
  \draw (-4.5,0) circle (.2cm);
  \draw (-4.5-.2,0) arc (180:360:.2 and 0.06);
  \draw[dashed] (-4.5+.2,0) arc (0:180:.2 and 0.06);

\draw (-1.5,0) node {$[0,1,2]$};
\shade[ball color = red!40, opacity = 0.4] (-.5,0) circle (.2cm);
  \draw (-.5,0) circle (.2cm);
  \draw (-.5-.2,0) arc (180:360:.2 and 0.06);
  \draw[dashed] (-.5+.2,0) arc (0:180:.2 and 0.06);

\draw (2.5,0) node {$[0,2,3]$};
\shade[ball color = red!40, opacity = 0.4] (3.5,0) circle (.2cm);
  \draw (3.5,0) circle (.2cm);
  \draw (3.5-.2,0) arc (180:360:.2 and 0.06);
  \draw[dashed] (3.5+.2,0) arc (0:180:.2 and 0.06);

\draw (6,0) node {$[0,1,4]$};
\shade[ball color = red!40, opacity = 0.4] (7,0) circle (.2cm);
  \draw (7,0) circle (.2cm);
  \draw (7-.2,0) arc (180:360:.2 and 0.06);
  \draw[dashed] (7+.2,0) arc (0:180:.2 and 0.06);

\draw (-5.5,-2) node {$[1,3]$};
\draw (-5,-2) node[circ]{$$};
\draw (-3.5,-2) node {$[0,3]$ };
\shade[ball color = red!40, opacity = 0.4] (-2.9,-2) circle (.2cm);
  \draw (-2.9,-2) circle (.2cm);
  \draw (-2.9-.2,-2) arc (180:360:.2 and 0.06);
  \draw[dashed] (-2.9+.2,-2) arc (0:180:.2 and 0.06);

\draw (-1.5,-2) node {$[1,2]$};
\draw (-1,-2) node[circ]{$$};
\draw (.5,-2) node {$[0,1]$};
\shade[ball color = red!40, opacity = 0.4] (1.1,-2) circle (.2cm);
  \draw (1.1,-2) circle (.2cm);
  \draw (1.1-.2,-2) arc (180:360:.2 and 0.06);
  \draw[dashed] (1.1+.2,-2) arc (0:180:.2 and 0.06);
  \shade[ball color = red!40, opacity = 0.4] (1.5,-2) circle (.2cm);
  \draw (1.5,-2) circle (.2cm);
  \draw (1.5-.2,-2) arc (180:360:.2 and 0.06);
  \draw[dashed] (1.5+.2,-2) arc (0:180:.2 and 0.06);
  
\draw (2.5,-2) node {$[2,3]$};
\draw (3,-2) node[circ] {$$};
\draw (4.5,-2) node {$[0,2]$};
\shade[ball color = red!40, opacity = 0.4] (5.1,-2) circle (.2cm);
  \draw (5.1,-2) circle (.2cm);
  \draw (5.1-.2,-2) arc (180:360:.2 and 0.06);
  \draw[dashed] (5.1+.2,-2) arc (0:180:.2 and 0.06);
\draw (6.5,-2) node {$[1,4]$};
\draw (7,-2) node[circ]{$$};
\draw (8.5,-2) node {$[0,4]$};
\draw (9,-2) node[circ]{$$};

\draw (-4.5,-4) node {$[3]$};
\draw (-4.2,-4) node[circ]{$$};
\draw (-2,-4) node {$[2]$};
\draw (-1.7,-4) node[circ]{$$};
\draw (.5,-4) node {$[0]$};
\shade[ball color = red!40, opacity = 0.4] (1,-4) circle (.2cm);
  \draw (1,-4) circle (.2cm);
  \draw (1-.2,-4) arc (180:360:.2 and 0.06);
  \draw[dashed] (1+.2,-4) arc (0:180:.2 and 0.06);
\draw (3,-4) node {$[1]$};
\draw (3.3,-4) node[circ]{$$};
\draw (6.5,-4) node {$[4]$};
\draw (6.8,-4) node[circ]{$$};

\draw[blue,dashed]  (-5.5,-.2)--(-5.5,-1.8);
\draw[red]  (-5.5,-.2)--(-3.5,-1.8);
\draw[blue,dashed]  (-5.5,-.2)--(.5,-1.8);
\draw[blue,dashed]  (-1.5,-.2)--(-1.5,-1.8);
\draw[blue,dashed]  (-1.5,-.2)--(.5,-1.8);
\draw[red]  (-1.5,-.2)--(4.5,-1.8);
\draw[red] (2.5,-.2)--(-3.5,-1.8);
\draw[blue,dashed]   (2.5,-.2)--(2.5,-1.8);
\draw[red]  (2.5,-.2)--(4.5,-1.8);
\draw[blue,dashed]  (6,-.2)--(.5,-1.8);
\draw[blue,dashed]   (6,-.2)--(6.5,-1.8);
\draw[blue,dashed]  (6,-.2)--(8.5,-1.8);

\draw[red] (-5.5,-2.2)--(-4.5,-3.8);
\draw[red]  (-5.5,-2.2)--(3,-3.8);
\draw[blue,dashed]  (-3.5,-2.2)--(-4.5,-3.8);
\draw[red]  (-3.5,-2.2)--(.5,-3.8);
\draw[red]  (-1.5,-2.2)--(-2,-3.8);
\draw[red]  (-1.5,-2.2)--(3,-3.8);
\draw[blue,dashed]  (.5,-2.2)--(.5,-3.8);
\draw[blue,dashed]  (.5,-2.2)--(3,-3.8);
\draw[red]   (2.5,-2.2)--(-4.5,-3.8);
\draw[red]   (2.5,-2.2)--(-2,-3.8);
\draw[blue,dashed]  (4.5,-2.2)--(-2,-3.8);
\draw[red]   (4.5,-2.2)--(.5,-3.8);
\draw[red]   (6.5,-2.2)--(3,-3.8);
\draw[red] (6.5,-2.2)--(6.5,-3.8);
\draw[blue,dashed]  (8.5,-2.2)--(.5,-3.8);
\draw[red] (8.5,-2.2)--(6.5,-3.8);

\end{tikzpicture}
\end{center}
\caption{The Hasse diagram of the triangulated sundial. For any two adjacent simplices $\tau < \sigma$, an edge between $\tau$ and $\sigma$ in the diagram is solid red if $\Lcal(B_{\tau})\rightarrow \Lcal(B_{\sigma})$ is an isomorphism; otherwise it is in dotted blue. On the right of each simplex $\tau$ is either a point, a sphere, or the wedge of two spheres, chosen so that $\Lcal(B_\tau)$ is isomorphic to the reduced homology of the associated space.}
\label{fig:sundial-hasse}
\end{figure}
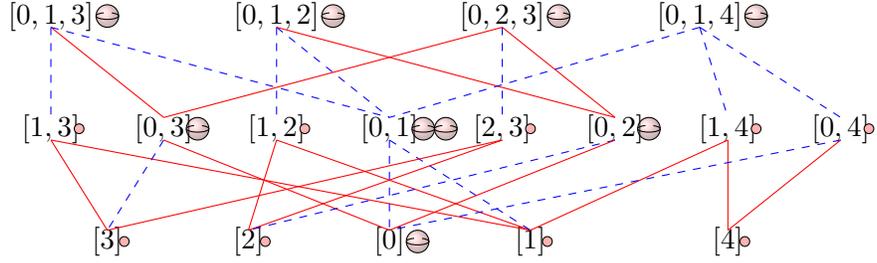

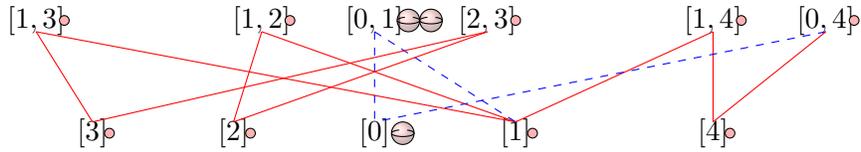
\begin{figure}[t!]
\begin{center}
\begin{tikzpicture}[scale=.75,circ/.style={
    circle,
    fill=red!30!white,
    draw,
    outer sep=0pt,
    inner sep=1.3pt
  }]

\draw (-5.5,-2) node {$[1,3]$};
\draw (-5,-2) node[circ]{$$};
\draw (-1.5,-2) node {$[1,2]$};
\draw (-1,-2) node[circ]{$$};
\draw (.5,-2) node {$[0,1]$};
\shade[ball color = red!40, opacity = 0.4] (1.1,-2) circle (.2cm);
  \draw (1.1,-2) circle (.2cm);
  \draw (1.1-.2,-2) arc (180:360:.2 and 0.06);
  \draw[dashed] (1.1+.2,-2) arc (0:180:.2 and 0.06);
  \shade[ball color = red!40, opacity = 0.4] (1.5,-2) circle (.2cm);
  \draw (1.5,-2) circle (.2cm);
  \draw (1.5-.2,-2) arc (180:360:.2 and 0.06);
  \draw[dashed] (1.5+.2,-2) arc (0:180:.2 and 0.06);
\draw (2.5,-2) node {$[2,3]$};
\draw (3,-2) node[circ]{$$};
\draw (6.5,-2) node {$[1,4]$};
\draw (7,-2) node[circ]{$$};
\draw (8.5,-2) node {$[0,4]$};
\draw (9,-2) node[circ]{$$};

\draw (-4.5,-4) node {$[3]$};
\draw (-4.2,-4) node[circ]{$$};
\draw (-2,-4) node {$[2]$};
\draw (-1.7,-4) node[circ]{$$};
\draw (.5,-4) node {$[0]$};
\shade[ball color = red!40, opacity = 0.4] (1,-4) circle (.2cm);
  \draw (1,-4) circle (.2cm);
  \draw (1-.2,-4) arc (180:360:.2 and 0.06);
  \draw[dashed] (1+.2,-4) arc (0:180:.2 and 0.06);
\draw (3,-4) node {$[1]$};
\draw (3.3,-4) node[circ]{$$};
\draw (6.5,-4) node {$[4]$};
\draw (6.8,-4) node[circ]{$$};

\draw[red] (-5.5,-2.2)--(-4.5,-3.8);
\draw[red]  (-5.5,-2.2)--(3,-3.8);
\draw[red]  (-1.5,-2.2)--(-2,-3.8);
\draw[red]  (-1.5,-2.2)--(3,-3.8);
\draw[blue,dashed]  (.5,-2.2)--(.5,-3.8);
\draw[blue,dashed]  (.5,-2.2)--(3,-3.8);
\draw[red]   (2.5,-2.2)--(-4.5,-3.8);
\draw[red]   (2.5,-2.2)--(-2,-3.8);
\draw[red]   (6.5,-2.2)--(3,-3.8);
\draw[red] (6.5,-2.2)--(6.5,-3.8);
\draw[blue,dashed]  (8.5,-2.2)--(.5,-3.8);
\draw[red] (8.5,-2.2)--(6.5,-3.8);

\end{tikzpicture}
\end{center}
\caption{The Hasse diagram after the top dimensional stratum has been removed. We can consider this the beginning of the second iteration of the algorithm in Section \ref{sec:clustering}.}
\label{fig:sundial-hasse2}
\end{figure}

A sheaf on $K$ can be considered as a labeling of each vertex in the Hasse diagram with a set and each edge with a morphism between the corresponding sets. 
Consider the local homology sheaf $\mathcal{L}$ on $K$ which takes each
open set \footnote{In the finite simplicial setting, $U$ is the support of a union of open simplices in $K$.} $U \subset K$ to $H_\bullet(K, K-U)\cong \tilde{H}_\bullet (K/(K-U))\cong \tilde{H}_\bullet (\closure(U)/\lk(U))\cong H_\bullet(\closure(U),\lk(U))$, where $\lk(U):=\closure(U)-U$. The local homology sheaf is naturally group-valued, but we make no use of the group structure here. So we will forget the group structure of local homology groups, and think of them purely as sets. The above isomorphisms follow from excision and the observation that $K-U$ (resp. $\lk(U)$) is a closed subcomplex of $K$ (resp. $\closure(U)$), and therefore ($K$,$K-U$) and ($\closure(U), \lk(U)$) form good pairs (see \cite[page 124]{Hatcher2000}).
Our algorithm described in Section~\ref{sec:clustering} can then be interpreted as computing local homology sheaf associated with each vertex in the Hasse diagram, and determining whether each edge in the diagram is an isomorphism. 
Our algorithm works by considering an element $\sigma$ in the Hasse diagram to be in the top-dimensional strata if all of the edges above $\sigma$ are isomorphisms, that is, if $\Lcal( \sigma<\tau)$ is an isomorphism for all pairs $ \sigma < \tau$. 

As illustrated in Figure~\ref{fig:sundial-hasse}, first, we start with the $2$-simplices. Automatically, we have that $\Lcal$ is constant when restricted to any $2$-simplex, and gives homology groups isomorphic to the reduced homology of a $2$-sphere. For instance, the local homology groups of the $2$-simplex $\sigma=[0,1,3]$ is isomorphic to the reduced homology of a 2-sphere, $H_\bullet(\overline{\str}(\sigma),\lk(\sigma)) \cong \tilde{H}_\bullet(\Sspace^2)$.  

Second, we consider the restriction of $\Lcal$ to the minimal open neighborhood of a $1$-simplex. 
For instance, consider the $1$-simplex $[1,3]$; 
$B_{[1,3]}=[1,3] \cup [0,1,3]$.
It can be seen that $\lk(B)_{[1,3]}=[0]\cup[3]\cup[1]\cup[0,3]\cup[0,1]$, and $H_\bullet(\closure(B_{[1,3]}),\lk(B_{[1,3]}))$ is isomorphic to the reduced homology of a single point space. Therefore the restriction map $\Lcal(B_{[1,3]})\rightarrow\Lcal(B_{[0,1,3]})$ is not an isomorphism (illustrated as a dotted blue line in Figure~\ref{fig:sundial-hasse}). 
On the other hand, let us consider the $1$-simplex $[0,3]$, where $B_{[0,3]}=[0,3]\cup[0,1,3]\cup [0,2,3]$. We have that 
$\lk(B_{[0,3]})=[0]\cup[1]\cup [2]\cup[3]\cup[0,1]\cup [0,2]\cup [1,3]\cup [2,3]\cup [1,3]$. 
Therefore $\Lcal(B_{[0,3]})$ is isomorphic to the reduced homology of a $2$-sphere. Moreover, both of the restriction maps corresponding to $B_{[0,1,3]}\subset B_{[0,3]}$ and $B_{[0,2,3]}\subset B_{[0,3]}$ are isomorphisms (illustrated as solid red lines in Figure~\ref{fig:sundial-hasse}). 
This implies that $[0,3] \in S_2 = X_2-X_1$. Alternatively, we can consider the simplex $[0,1]$ and see that $\Lcal(B_{[0,1]})\cong \tilde{H}_\bullet (\overline{\str}([0,1])/\lk([0,1]))\cong \tilde{H}_\bullet (S^2\vee S^2)$, the reduced homology of the wedge of two spheres. Therefore, for any 2-simplex $\tau$, the restriction map $\Lcal([0,1]<\tau)$ can not be an isomorphism. We conclude that $[0,1]$ is not contained in the top dimensional stratum.
If we continue, we see that the top dimensional stratum is given by 
$S_2 = [0,1,3]\cup[0,1,2]\cup[0,2,3]\cup[0,1,4]\cup[0,2]\cup[0,3]$, see Figure~\ref{fig:triangulated-sundial}.

Next, we can calculate the stratum $S_1 = X_1-X_0$ by only considering restriction maps whose codomain is not contained in $S_2$ (see Figure \ref{fig:sundial-hasse2}). We get 
$S_1=[0,1]\cup[1,3]\cup[1,2]\cup[2,3]\cup [0,4]\cup[1,4]\cup[2]\cup[3]\cup[4]$, which is visualized in Figure~\ref{fig:triangulated-sundial}. Finally, the stratum $S_0 = X_0$ consists of the vertices which have not been assigned to any strata. So $S_0=[0]\cup[1]$.

We think it prudent to point out several observations related to the above example. First, we see that the local homology groups assigned to a simplex are trivial if and only if the simplex belongs to the boundary of our space. In this sense, local homology can detect which simplices are on the boundary of the space without relying on any particular geometric realization of the abstract simplicial complex (embedding of the abstract simplicial complex into Euclidean space). Secondly, we observe that (for this example) the coarsest $\mathcal{L}$-stratification we calculated is actually the unique minimal homogeneous $\mathcal{L}$-stratification. We will investigate this coincidence for $\mathcal{L}$-stratifications elsewhere, in an attempt to say if a coarsest $\mathcal{L}$-stratification is automatically homogeneous or minimal.
For low-dimensional examples, we observe local homology based stratification we recover is actually a topological stratification. In general, local homology does not carry enough information to recover a stratification into manifold pieces, and examples exist in higher dimensions where $\mathcal{L}$-stratification are not topological stratifications.
One final remark related to the above example concerns the existence of restriction maps between isomorphic homology groups which are not isomorphisms. Suppose open sets $U\subset V$ in a finite $T_0$-space have the property that $\mathcal{L}(U)$ is isomorphic to $\mathcal{L}(V)$. A natural question to ask is if the restriction map $\mathcal{L}(U\subset V):\mathcal{L}(V)\rightarrow \mathcal{L}(U)$ is necessarily an isomorphism. This happens to be true for the sundial example above, but it is not true in general. See Figure~\ref{fig:pinchedtorus} for an illustration of a \emph{pinched torus}. The local homology of the pinched point and any point on the 1-dimensional strata are each isomorphic to the reduced homology of the wedge of two spheres. However, the restriction map from an open neighborhood of the pinched point to an open neighborhood of a 1-simplex in the one strata which is adjacent to the pinched point is not an isomorphism. The coarsest $\Lcal$-stratification we obtain from our algorithm coincides with the stratification given in Figure \ref{fig:pinchedtorus} (for a suitable triangulation of the pinched torus).


\section{Stratification Learning with Sheaf of Maximal Elements}
\label{sec:maximal-element}

We will now consider a stratification of the triangulated sundial given by the sheaf of maximal elements (defined below). Again, let $|K|$ be the polyhedron in Figure~\ref{fig:triangulated-sundial} with labeled vertices. 

\para{Sheaf of maximal elements.}
Consider the sheaf $\Fcal$ on the space $K$ which takes each open set $U\subset K$ to the free $\mathbb{Z}$-module generated by maximal elements of $U$. For $V\subset U$, $\Fcal(U)\rightarrow \Fcal(V)$ maps an element $u\in U$ to $u\in V$ if $u\in V$ and 0 otherwise. 

\para{Stratification learning using sheaf of maximal elements.}
Now for the triangulated sundial, $\Fcal$ is automatically constant when restricted to any of the 2-simplices. 

Let us consider the restriction of $\Fcal$ to the minimal open set containing $[1,3]$, that is, $B_{[1,3]}=[1,3]\cup [0,1,3]$.
The restriction map $\Fcal(B_{[1,3]})\rightarrow \Fcal(B_{[0,1,3]})$ sends the only maximal element in $B_{[1,3]}$ to the only maximal element in $B_{[0,1,3]}$, and therefore is an isomorphism from $\mathbb{Z}$ to $\mathbb{Z}$. 

Let us consider a more subtle example. The minimal open set containing $[0,3]$ is given by $B_{[0,3]}=[0,3]\cup[0,1,3]\cup [0,2,3]$. We see that there are two distinct maximal elements, $[0,1,3]$ and $ [0,2,3]$. Therefore $\Fcal(B_{[0,3]})=\mathbb{Z}^2$. This means that neither of the restriction maps can be isomorphisms, since $\Fcal(B_{[0,1,3]})$ and $\Fcal(B_{[0,2,3]})$ are each isomorphic to $\mathbb{Z}$. 

If we continue, we see that the top stratum is given by 
$$
S_2 = X_2-X_1=[0,1,3]\cup[0,1,2]\cup[0,2,3]\cup[0,1,4]\cup[1,3]\cup[1,2]\cup[2,3]\cup[1,4]\cup[0,4].
$$
For this example, the stratum $S_2$ is automatically homogeneous, hence the two algorithm variations in Section \ref{sec:clustering} (2a and 2a') produce the same 2-stratum $S_2$, illustrated in Figure~\ref{fig:triangulated-sundial-maximal-chain}. We construct the labeled Hasse diagram based on the sheaf of maximal elements, as shown in Figure~\ref{fig:sundial-hasse-maximal-chain}. 

\begin{figure}[ht!]

\begin{center}
\begin{tikzpicture}[scale=.8]
\draw (-5.5,0) node {$[0,1,3]$};
\draw (-1.5,0) node {$[0,1,2]$};
\draw (2.5,0) node {$[0,2,3]$};
\draw (6,0) node {$[0,1,4]$};

\draw (-5.5,-2) node {$[1,3]$};
\draw (-3.5,-2) node {$[0,3]$};
\draw (-1.5,-2) node {$[1,2]$};
\draw (.5,-2) node {$[0,1]$};
\draw (2.5,-2) node {$[2,3]$};
\draw (4.5,-2) node {$[0,2]$};
\draw (6.5,-2) node {$[1,4]$};
\draw (8.5,-2) node {$[0,4]$};

\draw (-4.5,-4) node {$[3]$};
\draw (-2,-4) node {$[2]$};
\draw (.5,-4) node {$[0]$};
\draw (3,-4) node {$[1]$};
\draw (6.5,-4) node {$[4]$};

\draw[red] (-5.5,-.2)--(-5.5,-1.8);
\draw[blue,dashed] (-5.5,-.2)--(-3.5,-1.8);
\draw[blue,dashed]  (-5.5,-.2)--(.5,-1.8);
\draw[red] (-1.5,-.2)--(-1.5,-1.8);
\draw[blue,dashed]  (-1.5,-.2)--(.5,-1.8);
\draw[blue,dashed]  (-1.5,-.2)--(4.5,-1.8);
\draw[blue,dashed]  (2.5,-.2)--(-3.5,-1.8);
\draw[red] (2.5,-.2)--(2.5,-1.8);
\draw[blue,dashed]  (2.5,-.2)--(4.5,-1.8);
\draw[blue,dashed]  (6,-.2)--(.5,-1.8);
\draw[red] (6,-.2)--(6.5,-1.8);
\draw[red] (6,-.2)--(8.5,-1.8);

\draw[blue,dashed] (-5.5,-2.2)--(-4.5,-3.8);
\draw[blue,dashed]  (-5.5,-2.2)--(3,-3.8);
\draw[red]  (-3.5,-2.2)--(-4.5,-3.8);
\draw[blue,dashed]  (-3.5,-2.2)--(.5,-3.8);
\draw[blue,dashed]  (-1.5,-2.2)--(-2,-3.8);
\draw[blue,dashed]  (-1.5,-2.2)--(3,-3.8);
\draw[blue,dashed]  (.5,-2.2)--(.5,-3.8);
\draw[red]  (.5,-2.2)--(3,-3.8);
\draw[blue,dashed]  (2.5,-2.2)--(-2,-3.8);
\draw[blue,dashed]  (2.5,-2.2)--(-4.5,-3.8);
\draw[blue,dashed]  (4.5,-2.2)--(.5,-3.8);
\draw[red]  (4.5,-2.2)--(-2,-3.8);
\draw[blue,dashed]  (6.5,-2.2)--(3,-3.8);
\draw[red] (6.5,-2.2)--(6.5,-3.8);
\draw[blue,dashed]  (8.5,-2.2)--(.5,-3.8);
\draw[red] (8.5,-2.2)--(6.5,-3.8);
\end{tikzpicture}
\end{center}
\caption{The labeled Hasse diagram for the sundial with the sheaf of maximal elements. Red solid lines denote isomorphisms; while blue dotted lines denote non-isomorphisms. }
\label{fig:sundial-hasse-maximal-chain}
\end{figure}
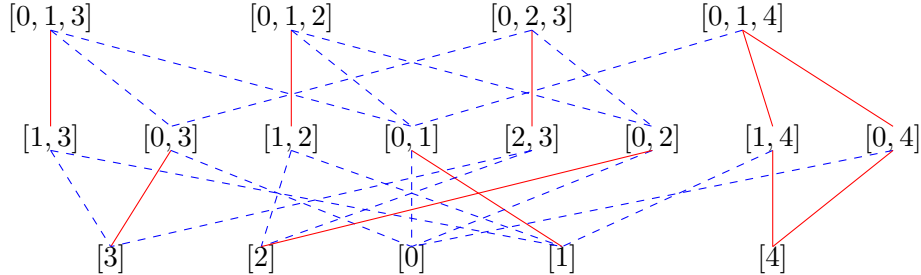

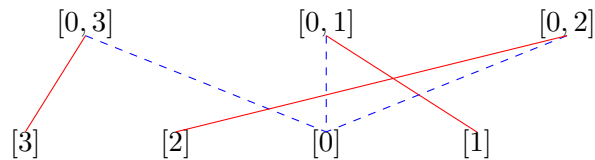
\begin{figure}[ht!]

\begin{center}
\begin{tikzpicture}[scale=.8]

\draw (-3.5,-2) node {$[0,3]$};

\draw (.5,-2) node {$[0,1]$};

\draw (4.5,-2) node {$[0,2]$};

\draw (-4.5,-4) node {$[3]$};
\draw (-2,-4) node {$[2]$};
\draw (.5,-4) node {$[0]$};
\draw (3,-4) node {$[1]$};

\draw[red]  (-3.5,-2.2)--(-4.5,-3.8);
\draw[blue,dashed]  (-3.5,-2.2)--(.5,-3.8);

\draw[blue,dashed]  (.5,-2.2)--(.5,-3.8);
\draw[red]  (.5,-2.2)--(3,-3.8);

\draw[blue,dashed]  (4.5,-2.2)--(.5,-3.8);
\draw[red]  (4.5,-2.2)--(-2,-3.8);

\end{tikzpicture}
\end{center}
\caption{The labeled Hasse diagram for the sundial with the sheaf of maximal elements after the top dimensional stratum has been removed. }
\label{fig:sundial-hasse-maximal-elmt2}
\end{figure}

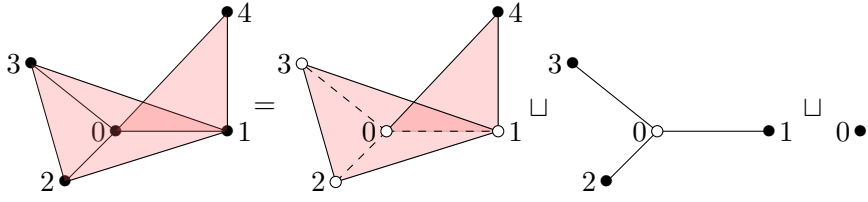
\begin{figure}[ht!]
\begin{center}
\begin{tikzpicture}[scale=.9,circ/.style={
    circle,
    fill=white,
    draw,
    outer sep=0pt,
    inner sep=1.5pt
  }]
\draw (0,0) node {$\bullet$};
\draw (0,0) node[left] {0};
\draw (1.65,0) node {$\bullet$};
\draw (1.65,0) node[right] {1};
\draw (1.65,1.75) node {$\bullet$};
\draw (1.65,1.75) node[right] {4};
\draw (-1.25,1) node {$\bullet$};
\draw (-1.25,1) node[left] {3};
\draw (-.75,-.75) node {$\bullet$};
\draw (-.75,-.75) node[left] {2};
\draw (0,0)--(1.65,0);
\draw (0,0)--(1.65,1.75);
\draw (1.65,0)--(1.65,1.75);
\draw (1.65,0)--(-1.25,1);
\draw (1.65,0)--(-.75,-.75);
\draw (-1.25,1)--(-.75,-.75);
\draw (0,0)--(-.75,-.75);
\draw (0,0)--(-1.25,1);
\draw[draw=black,fill=red!60!white, nearly transparent] (0,0)--(1.65,0)--(1.65,1.75);
\draw[draw=black,fill=red!60!white, nearly transparent] (-.75,-.75)--(1.65,0)--(-1.25,1);

\draw (2.2,.35) node{$=$};
\draw[draw=black,fill=red!60!white, nearly transparent] (4+0,0)--(4+1.65,0)--(4+1.65,1.75);
\draw[draw=black,fill=red!60!white, nearly transparent] (4+-.75,-.75)--(4+1.65,0)--(4+-1.25,1);
\draw (4+0,0) node[left] {0};
\draw (4+1.65,0) node[right] {1};
\draw (4+1.65,1.75) node[right] {4};
\draw (4+-1.25,1) node[left] {3};
\draw (4+-.75,-.75) node[left] {2};

\draw[dashed] (4+0,0)--(4+1.65,0);
\draw[]  (4+0,0)--(4+1.65,1.75);
\draw[] (4+1.65,0)--(4+1.65,1.75);
\draw[] (4+1.65,0)--(4-1.25,1);
\draw[]  (4+1.65,0)--(4+-.75,-.75);
\draw[]  (4+-1.25,1)--(4+-.75,-.75);
\draw[dashed] (4+0,0)--(4+-.75,-.75);
\draw[dashed] (4+0,0)--(4+-1.25,1);

\draw (4+-.75,-.75) node[circ] {};
\draw (4+-1.25,1) node[circ] {};
\draw (4+1.65,1.75) node[] {$\bullet$};
\draw (4+1.65,0) node[circ] {};
\draw (4+0,0) node[circ]{};

\draw (6.25,.35) node{$\sqcup$};

\draw (8+0,0) node[left] {0};
\draw (8+1.65,0) node[right] {1};
\draw (8+-1.25,1) node[left] {3};
\draw (8+-.75,-.75) node[left] {2};
\draw[] (8+0,0)--(8+1.65,0);
\draw[] (8+0,0)--(8+-.75,-.75);
\draw[] (8+0,0)--(8+-1.25,1);

\draw (8+-.75,-.75) node[] {$\bullet$};
\draw (8+-1.25,1) node[] {$\bullet$};
\draw (8+1.65,0) node[] {$\bullet$};
\draw (8+0,0) node[circ]{};

\draw (10.3,.35) node{$\sqcup$};

\draw (11,0) node[left] {0};
\draw (11,0) node[]{$\bullet$};

\end{tikzpicture}
\caption{A triangulated sundial and its stratification based on the sheaf of maximal elements.}
\label{fig:triangulated-sundial-maximal-chain}
\end{center}
\end{figure}

Next, we can calculate the strata $S_1$ by only considering restriction maps whose codomain is not contained in $S_2$ (see Figure \ref{fig:sundial-hasse-maximal-elmt2}). We get 
$$
 S_1 = X_1-X_0 = [0,1]\cup[0,2]\cup [0,3]\cup[1]\cup[2]\cup [3]
$$ 
 We can consider the Hasse diagram and corresponding visualization of $S_1$, as illustrated in Figure~\ref{fig:triangulated-sundial-maximal-chain}. Again, the stratum $S_1$ is automatically homogeneous, meaning that the algorithm variations 2a and 2a' produce the same output. 

Finally, the strata $X_0$ in the coarsest $\Fcal$-stratification consists of the points which have not been assigned to a strata yet. So 
$$S_0 = X_0=[0].$$ 
Intuitively, we are using this relatively simple sheaf to cluster the space $|K|$ into stratum pieces where small neighborhoods of points in the same stratum piece intersect the same set of 2-simplices. 

\section{Stratification Learning Using Geometric Techniques}
\label{sec:geometric-sheaf}

\subsection{Pre-Sheaf of Vanishing Polynomials}
In this section we will use \emph{Learning Algebraic Varieties from Samples}~\cite{BreidingKalisnikSturmfels2018} to stratify the nerve of an open cover of a point cloud data set. Suppose $X\subset \mathbb{R}^n$ is a finite set of points, and $\{U_i\}$ is a finite cover of $X$, such that $U_i\subset X$ for each $i$, and $X=\bigcup_iU_i$. We will proceed by outlining a geometric method for computing a stratification of the nerve of $\{U_i\}$, $\mathcal{N}$, viewed as a finite $T_0$-space.

We will begin by briefly reviewing the approach to learning algebraic varieties described in~\cite{BreidingKalisnikSturmfels2018}. To each algebraic set $S\subset\mathbb{R}^n$, defined to be the set of solutions to a system of polynomial equations, we can associate an ideal of polynomial functions 
$$I(S):=\{ \text{ polynomial function  } p \text{ on }\mathbb{R}^n: p(x)=0\quad\forall x\in S\}.$$
If $\Omega$ is a finite set of points sampled from $S$, then $I(\Omega)\supset I(S)$. The insight used in~\cite{BreidingKalisnikSturmfels2018}, is that certain finite dimensional subspaces of polynomial functions will not be able to distinguish $\Omega$ from $S$. More precisely, we will start with a finite set of linearly independent polynomial functions $\mathcal{M}$, and consider the subspace $R_\mathcal{M}$ consisting of $\mathbb{R}$-linear combinations of elements in $\mathcal{M}$. To a given set $V\subset\mathbb{R}^n$, we will associate the subspace of polynomial functions in $R_\mathcal{M}$ which vanish on $V$:
$$
I_\mathcal{M}(V):=\{p\in R_\mathcal{M}:p(x)=0\quad\forall x\in V\}.
$$
The goal is to carefully choose a finite set of polynomials $\mathcal{M}$ so that $I_\mathcal{M}(\Omega)=I_\mathcal{M}(S)$ and $I_\mathcal{M}(S)$ generates the ideal $I(S)$ in the ring of polynomial functions. 

The pre-sheaf of vanishing polynomials $\mathcal{I}_\mathcal{M}$ is defined using the above association of a finite dimensional vector space $I_\mathcal{M}(V)$ to various point sets $V\subset\mathbb{R}^n$. Given a collection of subsets $\{U_i\}_{i\in J}$, we define an abstract simplicial complex on the index set $J$ by declaring a subset $\tau\subseteq J$ to be a simplex if and only if the corresponding intersection
$$
V_\tau :=\bigcap_{i\in \tau}U_i\subset X\subset\mathbb{R}^n
$$
is non-empty. Finally, given an open set $W\subset \mathcal{N}$ (with respect to the Alexandroff topology defined in Section \ref{subsec:t0}), define 
$$
X_W=\bigcup_{\tau\in W}V_\tau.
$$
The pre-sheaf of vanishing polynomials $\mathcal{I}_\mathcal{M}$ is defined by
$$
\mathcal{I}_\mathcal{M}(W):=I_\mathcal{M}(X_W).
$$
{\bf{Remark.}} We could sheafify this pre-sheaf, and get a sheaf of vanishing, locally polynomial functions. However, since the algorithm outlined in this paper only requires a pre-sheaf as an input, we will continue without taking into account the larger space of locally polynomial functions.
\begin{proposition}
For every finite set of linearly independent polynomials $\mathcal{M}$, the contravariant functor $\mathcal{I}_\mathcal{M}$ from the category of open sets of $\mathcal{N}$ to the category of finite dimensional vector spaces is a pre-sheaf. 
\end{proposition}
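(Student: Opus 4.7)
The plan is to verify the three conditions in the definition of a presheaf from Section \ref{subsec:sheaves}. The core observation is that, for open sets $W'\subset W$ in $\mathcal{N}$ (with its Alexandroff topology), the corresponding subsets of $\mathbb{R}^n$ satisfy $X_{W'}\subseteq X_W$, since $X_{W'}$ is a union over a smaller index set of simplices than $X_W$. As polynomials vanishing on a larger set automatically vanish on any of its subsets, we obtain an inclusion of finite-dimensional subspaces $I_{\mathcal{M}}(X_W)\subseteq I_{\mathcal{M}}(X_{W'})$ inside $R_{\mathcal{M}}$. I would define the restriction map $\mathcal{I}_{\mathcal{M}}(W'\subset W):\mathcal{I}_{\mathcal{M}}(W)\to \mathcal{I}_{\mathcal{M}}(W')$ to be exactly this inclusion.

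With restrictions defined this way, conditions 2 and 3 of the presheaf definition are essentially tautological. For condition 2, $\mathcal{I}_{\mathcal{M}}(W\subset W)$ is the identity inclusion of $I_{\mathcal{M}}(X_W)$ into itself. For condition 3, given $W''\subset W'\subset W$, the composition $I_{\mathcal{M}}(X_W)\hookrightarrow I_{\mathcal{M}}(X_{W'})\hookrightarrow I_{\mathcal{M}}(X_{W''})$ coincides with the direct inclusion $I_{\mathcal{M}}(X_W)\hookrightarrow I_{\mathcal{M}}(X_{W''})$. The target category is finite-dimensional vector spaces because each $I_{\mathcal{M}}(X_W)$ is a linear subspace of the finite-dimensional space $R_{\mathcal{M}}$ by construction.

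The one subtle point that needs explicit attention is condition 1, $\mathcal{I}_{\mathcal{M}}(\emptyset)=0$. Since the empty union defining $X_\emptyset$ is the empty subset of $\mathbb{R}^n$, every polynomial in $R_{\mathcal{M}}$ vanishes vacuously on $X_\emptyset$, so the naive reading gives $I_{\mathcal{M}}(X_\emptyset)=R_{\mathcal{M}}$ rather than $0$. I would handle this by declaring $\mathcal{I}_{\mathcal{M}}(\emptyset):=0$ by convention, as is standard when a functor is defined via such an ``empty union'' construction; the resulting restriction maps remain compatible because the unique linear map into the zero vector space is automatically the zero map, and no other presheaf axioms are affected. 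The main obstacle here is therefore purely a matter of bookkeeping about the empty set rather than a substantive mathematical obstruction; once the restriction maps are identified with the canonical inclusions of vanishing ideals, functoriality is immediate.
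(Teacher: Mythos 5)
Your proposal is correct and follows essentially the same route as the paper: identify $X_{W'}\subseteq X_W$ for $W'\subset W$, observe that vanishing on the larger set implies vanishing on the smaller, and take the restriction maps to be the resulting inclusions of subspaces of $R_{\mathcal{M}}$, from which the identity and composition axioms are immediate. Your extra remark about the empty-set axiom is a reasonable piece of bookkeeping that the paper's own proof silently omits, but it does not change the substance of the argument.
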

\begin{proof}
Assume $W\subset Y\subset \mathcal{N}$ are two open sets. Then $X_W\subset X_Y$. If $p\in I_\mathcal{M}(X_Y)$, then by definition $p$ vanishes on the set $X_W$. Therefore $I_\mathcal{M}(X_Y)\subset I_\mathcal{M}(X_W)$. The restriction map induced by $\mathcal{I}_\mathcal{M}$ is the inclusion 
\begin{eqnarray*}
\mathcal{I}_\mathcal{M}(W\subset Y):I_\mathcal{M}(X_Y)&\longhookrightarrow& I_\mathcal{M}(X_W)\\
f&\mapsto& f
\end{eqnarray*}
To see that this map is well defined, we notice that if $f$ vanishes on $X_Y$, and if $X_W\subset X_Y$, then $f$ must vanish on $X_W$. It follows that $\mathcal{I}_\mathcal{M}(U\subset U)=\text{id}_U$ and $\mathcal{I}_\mathcal{M}(U\subset W)\circ\mathcal{I}_\mathcal{M}(W\subset Y)=\mathcal{I}_\mathcal{M}(U\subset Y)$. 

\end{proof}
\noindent{\bf{Remark.}} Since the restriction maps $\mathcal{I}_\mathcal{M}(W\subset Y)$ are necessarily injective, we can conclude that if $\mathcal{I}_\mathcal{M}(W)$ is isomorphic to $\mathcal{I}_\mathcal{M}(Y)$, then the restriction map $\mathcal{I}_\mathcal{M}(W\subset Y)$ is an isomorphism. Therefore, computing the $\mathcal{I}_\mathcal{M}$-stratification reduces to computing the stalks $\mathcal{I}_\mathcal{M}(\text{St}\tau)$ for each simplex $\tau$ (rather than computing the restriction maps).  

\subsection{Examples}
Here we will illustrate examples of geometric stratifications of open covers of finite point sets in $\mathbb{R}^2$ and $\mathbb{R}^3$. These examples aim to illustrate some of the features, as well as subtleties, of the $\mathcal{I}_\mathcal{M}$-stratifications described above.  

\para{The circle $S^1$.} 
 Intuitively, we expect that the $\mathcal{I}_\mathcal{M}$-stratification will be trivial (i.e., will consist of a single stratum) when our underlying geometric space is sufficiently well behaved (an analytic manifold, for example). We will begin by checking this intuition for the $\mathcal{I}_\mathcal{M}$-stratification of a circle. 
Consider the finite set $\Omega$ consisting of 100 points on the unit circle $S^1$ in $\mathbb{R}^2$ spaced at regular intervals, with an open cover $\mathcal{U}_\Omega=\{U_1,\cdots, U_6\}$ consisting of six open sets (see Figure \ref{fig:circle}).
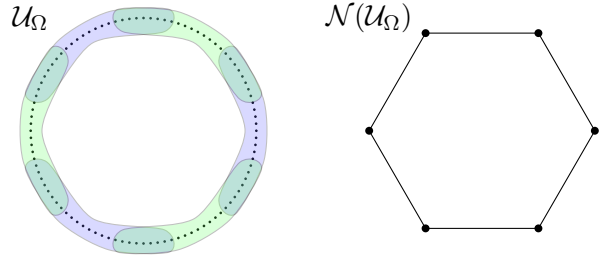
\begin{figure}[ht!]
\begin{center}
\begin{tikzpicture}[scale=1.5,circ/.style={
    circle,
    fill=white,
    draw,
    outer sep=0pt,
    inner sep=1.5pt
  }]
  \draw (-1,1) node {$\mathcal{U}_\Omega$};
  \draw (2,1) node {$\mathcal{N}(\mathcal{U}_\Omega)$};
\foreach \i in {1,...,100}
{ 
\node[circle,fill=black,inner sep=0pt,minimum size=1pt] (a) at ({cos(3.6*\i)},{sin(3.6*\i)}) {};}
\foreach \i in {2,4,6}
{ 
\draw[draw=black,fill=blue!60!white, nearly transparent] plot [smooth cycle] coordinates {({.9*cos(60*(\i-.7))},{.9*sin(60*(\i-.7))})({1.1*cos(60*(\i-.7))},{1.1*sin(60*(\i-.7))})({1.1*cos(60*(\i-.23))},{1.1*sin(60*(\i-.23))})({1.1*cos(60*(\i+.23))},{1.1*sin(60*(\i+.23))})({1.1*cos(60*(\i+.7))},{1.1*sin(60*(\i+.7))})({.9*cos(60*(\i+.7))},{.9*sin(60*(\i+.7))})({.9*cos(60*(\i))},{.9*sin(60*(\i))})};}
\foreach \i in {1,3,5}
{ 
\draw[draw=black,fill=green!60!white, nearly transparent] plot [smooth cycle] coordinates {({.9*cos(60*(\i-.7))},{.9*sin(60*(\i-.7))})({1.1*cos(60*(\i-.7))},{1.1*sin(60*(\i-.7))})({1.1*cos(60*(\i-.23))},{1.1*sin(60*(\i-.23))})({1.1*cos(60*(\i+.23))},{1.1*sin(60*(\i+.23))})({1.1*cos(60*(\i+.7))},{1.1*sin(60*(\i+.7))})({.9*cos(60*(\i+.7))},{.9*sin(60*(\i+.7))})({.9*cos(60*(\i))},{.9*sin(60*(\i))})};}
\foreach \i in {1,...,6}
{ 
\node[circle,fill=black,inner sep=0pt,minimum size=3pt] (a) at ({cos(60*\i)+3},{sin(60*\i)}) {};
\draw ({cos(60*\i)+3},{sin(60*\i)})--({cos(60*(\i+1))+3},{sin(60*(\i+1))});}
\end{tikzpicture}
\caption{An open cover $\mathcal{U}_\Omega$ of $\Omega$, with corresponding nerve $\mathcal{N}(\mathcal{U}_\Omega)$.} 
\label{fig:circle}
\end{center}
\end{figure}

Suppose $\mathcal{M} = \{1,x,y,x^2,y^2,xy\}$ and $U=\cap_{i\in I} U_i$ is a subset of $\Omega$ corresponding to a simplex of $\mathcal{N}(\mathcal{U}_\Omega)$. 
The subspace of polynomials in $R_\mathcal{M}$ which vanish on $U$ is equal to the kernel of the linear map 
\begin{eqnarray*}
R_\mathcal{M}&\rightarrow& \mathbb{R}^{|U|}\\
p&\mapsto& (p(x_1,y_1),\cdots,p(x_k,y_k))
\end{eqnarray*}
where $(x_i,y_i)$ are the elements of $U$ with some fixed order. See \cite[Section 5]{BreidingKalisnikSturmfels2018} for a discussion concerning how to optimize our choice of $\mathcal{M}$, possibly resulting in the above map being represented by a sparse matrix. For our example we will assume that the subspace of polynomials in $R_\mathcal{M}$ which vanish on $U$ is the \emph{$\mathbb{R}$-span} of $x^2+y^2-1$:
$$
I_\mathcal{M}(U)=\mathbb{R}\langle x^2+y^2-1\rangle=\{r(x^2+y^2-1):r\in\mathbb{R}\}. 
$$
Notice that this calculation follows for any open set $U\subset \mathcal{N}(\mathcal{U}_\Omega)$. If $V$ and $U$ are two such sets with $V\subset U$, then 
$$
\mathcal{I}_\mathcal{M}(V\subset U):I_\mathcal{M}(U)\rightarrow I_\mathcal{M}(V)
$$
is an isomorphism. Therefore, our pre-sheaf $\mathcal{I}_\mathcal{M}$ is constant. This implies that the $\mathcal{I}_\mathcal{M}$-stratification of $\mathcal{N}(\mathcal{U}_\Omega)$ is the trivial stratification (i.e., the stratification of $\mathcal{N}(\mathcal{U}_\Omega)$ consisting of a single stratum). It should be noted that for degree reasons, the trivial stratification would be produced even if fewer points were sampled from the space. Specifically, as long as each cover element $U_i$ contains at least four distinct points, then the resulting presheaf $I_\mathcal{M}$ will be constant, and the $I_\mathcal{M}$-stratification will be trivial. 
\begin{figure}[ht!]
\begin{center}
\begin{tikzpicture}[scale=1.3,circ/.style={
    circle,
    fill=white,
    draw,
    outer sep=0pt,
    inner sep=1.5pt
  }]
  
  \draw (2,1) node {$\mathcal{N}(\mathcal{U}_\Omega)$};

\foreach \i in {1,...,6}
{ 
\node[circle,fill=black,inner sep=0pt,minimum size=3pt] (a) at ({cos(60*\i)+3},{sin(60*\i)}) {};
\draw ({cos(60*\i)+3},{sin(60*\i)})--({cos(60*(\i+1))+3},{sin(60*(\i+1))});}
\begin{scope}[shift={(3,0)}]
 
\draw (1.5,0) node {$=$};
\foreach \i in {1,...,6}
{ 
\node[circle,fill=black,inner sep=0pt,minimum size=3pt] (a) at ({cos(60*\i)+3},{sin(60*\i)}) {};
\draw ({cos(60*\i)+3},{sin(60*\i)})--({cos(60*(\i+1))+3},{sin(60*(\i+1))});}
\end{scope}
\end{tikzpicture}
\caption{The minimal homogeneous $\mathcal{I}_\mathcal{M}$-stratification of $\mathcal{N}(\mathcal{U}_\Omega)$ is trivial.} 
\label{fig:circle-poly}
\end{center}
\end{figure}
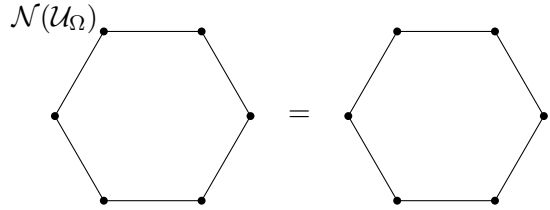

\para{A corner.} 
As a second example, we want to explore the extent to which the geometrically defined $\mathcal{I}_\mathcal{M}$-stratification is capable of detecting geometric singularities. Such a feature will provide an illustration of the key differences between $\mathcal{I}_\mathcal{M}$-stratifications and local homology stratifications. Consider the finite set $\Omega:=\{(0.1n,0):\text{ $n=0,\cdots,20$}\}\cup \{(0,0.1n):\text{ $n=0,\cdots,20$}\} \subset\mathbb{R}^2$, with the open cover $\mathcal{U}_\Omega$ depicted in Figure \ref{fig:axis}.  
Suppose (as in the previous example) that $\mathcal{M}=\{1,x,y,x^2,y^2,xy\}$. If $W\in \mathcal{U}_\Omega$ consists of elements of the form $(x,0)$, where $x\neq 0 $, then $\mathcal{I}_\mathcal{M}(W)=\mathbb{R}\langle y\rangle \oplus \mathbb{R}\langle y^2\rangle \oplus \mathbb{R}\langle xy\rangle $. Similarly, if $U\in \mathcal{U}_\Omega$ consists of elements of the form $(0,y)$, where $y\neq 0 $, then $\mathcal{I}_\mathcal{M}(U)=\mathbb{R}\langle x\rangle\oplus \mathbb{R}\langle x^2\rangle \oplus \mathbb{R}\langle xy\rangle  $. If $V\subset\mathcal{U}_\Omega$ contains $(0,0)$, then $\mathcal{I}_\mathcal{M}(V)=\mathbb{R}\langle xy\rangle$. The restriction map of vector spaces 
$$
\mathcal{I}(U\subset V):\mathbb{R}\langle xy\rangle \longhookrightarrow \mathbb{R}\langle x\rangle\oplus \mathbb{R}\langle x^2\rangle \oplus \mathbb{R}\langle xy\rangle  
$$
is not an isomorphism. The resulting $\mathcal{I}_\mathcal{M}$-stratification of $\mathcal{N}(\mathcal{U}_\Omega)$ is illustrated in Figure \ref{fig:axis}.
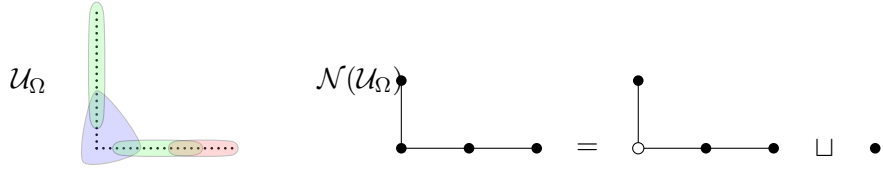
\begin{figure}[ht!]
\begin{center}
\begin{tikzpicture}[scale=.9,circ/.style={
    circle,
    fill=white,
    draw,
    outer sep=0pt,
    inner sep=1.5pt
  }]
    \draw (-1,1) node {$\mathcal{U}_\Omega$};
  \draw (3.9,1) node {$\mathcal{N}(\mathcal{U}_\Omega)$};
\foreach \i in {1,...,21}
{ 
\node[circle,fill=black,inner sep=0pt,minimum size=1pt] (a) at ({.1*\i-.1},{0}) {};
\node[circle,fill=black,inner sep=0pt,minimum size=1pt] (a) at ({0},{.1*\i-.1}) {};}

\draw[draw=black,fill=blue!60!white, nearly transparent] plot [smooth cycle] coordinates {(.65,0)(0,.85)(-.2,-.2)};
\draw[draw=black,fill=green!60!white, nearly transparent] plot [smooth cycle] coordinates {(.35,-.1)(.35,.1)(1.45,.1)(1.45,-.1)};
\draw[draw=black,fill=red!60!white, nearly transparent] plot [smooth cycle] coordinates {(1.15,-.1)(1.15,.1)(2,.1)(2,-.1)};
\draw[draw=black,fill=green!60!white, nearly transparent] plot [smooth cycle] coordinates {(-.1,.45)(.1,.45)(.1,2)(-.1,2)};
\begin{scope}[shift={(4.5,0)}]

\node[circle,fill=black,inner sep=0pt,minimum size=4pt] at (0,0) {};
\node[circle,fill=black,inner sep=0pt,minimum size=4pt] at (1,0) {};
\node[circle,fill=black,inner sep=0pt,minimum size=4pt] at (0,1) {};
\node[circle,fill=black,inner sep=0pt,minimum size=4pt] at (2,0) {};
\draw (0,0)--(2,0);
\draw (0,0)--(0,1);

\draw (2.75,0) node{$=$};
\begin{scope}[shift={(3.5,0)}]
\draw (0,0)--(2,0);
\draw (0,0)--(0,1);
\node[circ] at (0,0) {};
\node[circle,fill=black,inner sep=0pt,minimum size=4pt] at (1,0) {};
\node[circle,fill=black,inner sep=0pt,minimum size=4pt] at (0,1) {};
\node[circle,fill=black,inner sep=0pt,minimum size=4pt] at (2,0) {};
\end{scope}

\draw (6.25,0) node{$\sqcup$};

\node[circle,fill=black,inner sep=0pt,minimum size=4pt] at (7,0) {};
\end{scope}
\end{tikzpicture}
\caption{A stratification of $\mathcal{N}(\mathcal{U}_\Omega)$ using the pre-sheaf of vanishing polynomials.} 
\label{fig:axis}
\end{center}
\end{figure}

\para{The curve $y^2=x^3+x^2$.} 
Now we will give an example that aims to illustrate some of the more subtle properties of $\mathcal{I}_\mathcal{M}$-stratifications. Specifically, we will see a singularity that the local homology sheaf would detect, but the $\mathcal{I}_\mathcal{M}$ pre-sheaf does not.  
This example will consist of a finite set of solutions of $y^2=x^3+x^2$. Suppose $\mathcal{M}=\{1,x,y,x^2,y^2,xy,x^3,x^2y, xy^2,y^3\}$. The set of all real solutions of $y^2=x^3+x^2$, denoted by $X$, is parametrized by the map 
\begin{eqnarray*}
\phi:\mathbb{R}&\rightarrow& X\subset \mathbb{R}^2\\
t&\mapsto& \left(t^2-1,t^3-t\right).
\end{eqnarray*}
Suppose $f\in \mathcal{I}_\mathcal{M}(U)$, for a given open set $U\subset X$ (in the subspace topology). The function 
$$
g(t)=f(\phi(t))\in \mathbb{R}[t]
$$
is a polynomial function from $\mathbb{R}$ to $\mathbb{R}$. Let $V=\phi^{-1}(U)$. Since $f$ vanishes on $U$, $g$ must vanish on $V$. Since $V$ is an open subset of $\mathbb{R}$, we can conclude that $g$ is the zero polynomial. Therefore $f$ vanishes everywhere on $X$. Therefore, the $\mathcal{I}_\mathcal{M}$-stratification of $X$ is the trivial stratification, even though $X$ has a singular point at $(0,0)$.

Informed by the calculation above, we turn our attention to finite open cover of an $\epsilon$-net of $X$ (illustrated in Figure \ref{fig:elliptic}). For each $U\in\mathcal{U}_\Omega$, the vector space of vanishing polynomials is 
$$
\mathcal{I}_\mathcal{M}(U)= \mathbb{R}\langle x^3+x^2-y^2\rangle. 
$$
Moreover, the pre-sheaf $\mathcal{I}_\mathcal{M}$ is constant, and the resulting $\mathcal{I}_\mathcal{M}$-stratification is the trivial stratification, illustrated in Figure \ref{fig:elliptic}. Notice that for this example (as well as the circle example) the presheaf $\mathcal{I}_\mathcal{M}$ would be constant for most choices of cover $\mathcal{U}_\Omega$ of $X$. Specifically, if $\mathcal{U}_\Omega$ consists of two intersecting sets, then the corresponding nerve $\mathcal{N}(\mathcal{U}_\Omega)$ will consist of two vertices connected by an edge. The $\mathcal{I}_\mathcal{M}$-stratification of the 1-simplex $\mathcal{N}(\mathcal{U}_\Omega)$ will be the trivial stratification. 

\begin{figure}[ht!]
\begin{center}
\begin{tikzpicture}[scale=1.2,circ/.style={
    circle,
    fill=white,
    draw,
    outer sep=0pt,
    inner sep=1.5pt
  }]
    \draw (-1,1) node {$\mathcal{U}_\Omega$};
  \draw (2,1) node {$\mathcal{N}(\mathcal{U}_\Omega)$};
\foreach \i in {-26,...,26}
{ 
\node[circle,fill=black,inner sep=0pt,minimum size=1pt] (a) at ({2*(.03*\i)*(.03*\i)-2},{2*(.03*\i)*(.03*\i)*(.03*\i) -2*(.03*\i)}) {};}
\foreach \i in {-20,...,17}
{ 
\node[circle,fill=black,inner sep=0pt,minimum size=1pt] (a) at ({2*(.01*\i+1)*(.01*\i+1)-2},{2*(.01*\i+1)*(.01*\i+1)*(.01*\i+1) -2*(.01*\i+1)}) {};
\node[circle,fill=black,inner sep=0pt,minimum size=1pt] (a) at ({2*(-.01*\i-1)*(-.01*\i-1)-2},{2*(-.01*\i-1)*(-.01*\i-1)*(-.01*\i-1) -2*(-.01*\i-1)}) {};}

\draw[draw=black,fill=blue!60!white, nearly transparent] plot [smooth cycle] coordinates {(.5,.5)(0,.25)(-.5,.45)(-.25,0)(-.5,-.45)(0,-.25)(.5,-.5)(.25,0)};
\draw[draw=black,fill=green!60!white, nearly transparent] plot [smooth cycle] coordinates {(-.2,.1)(-.2,.3)(-1.25,.9)(-2,.5)(-2,.3)(-1.2,.6)};
\draw[draw=black,fill=green!60!white, nearly transparent] plot [smooth cycle] coordinates {(-.2,-.1)(-.2,-.3)(-1.25,-.9)(-2,-.5)(-2,-.3)(-1.2,-.6)};
\draw[draw=black,fill=red!60!white, nearly transparent] plot [smooth cycle] coordinates {(-1.7,.4)(-1.9,.6)(-2.2,0)(-1.9,-.6)(-1.7,-.4)(-1.8,0)};
\draw[draw=black,fill=green!60!white, nearly transparent] plot [smooth cycle] coordinates {(.2,.1)(.1,.3)(.7,1)(.8,.8)};
\draw[draw=black,fill=green!60!white, nearly transparent] plot [smooth cycle] coordinates {(.2,-.1)(.1,-.3)(.7,-1)(.8,-.8)};
\begin{scope}[scale = .7, shift={(5,0)}]

\node[circle,fill=black,inner sep=0pt,minimum size=4pt] at (0,0) {};
\node[circle,fill=black,inner sep=0pt,minimum size=4pt] at (-1,1) {};
\node[circle,fill=black,inner sep=0pt,minimum size=4pt] at (-2,0) {};
\node[circle,fill=black,inner sep=0pt,minimum size=4pt] at (-1,-1) {};
\node[circle,fill=black,inner sep=0pt,minimum size=4pt] at (1,-1) {};
\node[circle,fill=black,inner sep=0pt,minimum size=4pt] at (1,1) {};
\draw (0,0)--(-1,1);
\draw (0,0)--(1,1);
\draw (0,0)--(-1,-1);
\draw (0,0)--(1,-1);
\draw (-2,0)--(-1,1);
\draw (-2,0)--(-1,-1);

\draw (1.5,0) node{$=$};
\begin{scope}[shift={(4,0)}]
\node[circle,fill=black,inner sep=0pt,minimum size=4pt] at (0,0) {};
\node[circle,fill=black,inner sep=0pt,minimum size=4pt] at (-1,1) {};
\node[circle,fill=black,inner sep=0pt,minimum size=4pt] at (-2,0) {};
\node[circle,fill=black,inner sep=0pt,minimum size=4pt] at (-1,-1) {};
\node[circle,fill=black,inner sep=0pt,minimum size=4pt] at (1,-1) {};
\node[circle,fill=black,inner sep=0pt,minimum size=4pt] at (1,1) {};
\draw (0,0)--(-1,1);
\draw (0,0)--(1,1);
\draw (0,0)--(-1,-1);
\draw (0,0)--(1,-1);
\draw (-2,0)--(-1,1);
\draw (-2,0)--(-1,-1);
\end{scope}

\end{scope}
\end{tikzpicture}
\caption{The minimal homogeneous $\mathcal{I}_\mathcal{M}$-stratification of an open cover of a set of solutions to $y^2=x^3+x^2$. In this example, we obtain the trivial stratification.} 
\label{fig:elliptic}
\end{center}
\end{figure}
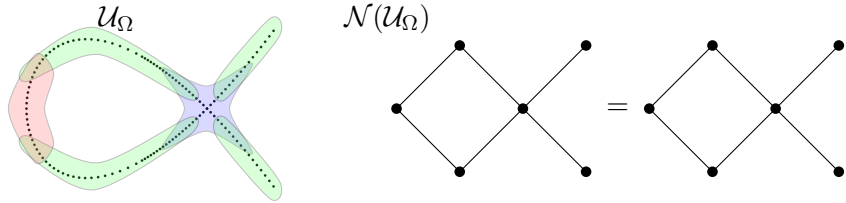

To contrast the $\mathcal{I}_\mathcal{M}$-stratification illustrated in Figure \ref{fig:elliptic}, we include (Figure \ref{fig:homology-elliptic}) a local homology stratification of the simplicial complex $\mathcal{N}(\mathcal{U}_\Omega)$. Unlike the polynomial based stratification, the local homology stratification distinguishes the singular point where the elliptic curve crosses over itself, as well as the boundary points of the simplicial complex.  

\begin{figure}[ht!]
\begin{center}
\begin{tikzpicture}[scale=1,circ/.style={
    circle,
    fill=white,
    draw,
    outer sep=0pt,
    inner sep=1.5pt
  }]

\draw (-1,1.5) node{$X$};
\node[circle,fill=black,inner sep=0pt,minimum size=4pt] at (0,0) {};
\node[circle,fill=black,inner sep=0pt,minimum size=4pt] at (-1,1) {};
\node[circle,fill=black,inner sep=0pt,minimum size=4pt] at (-2,0) {};
\node[circle,fill=black,inner sep=0pt,minimum size=4pt] at (-1,-1) {};
\node[circle,fill=black,inner sep=0pt,minimum size=4pt] at (1,-1) {};
\node[circle,fill=black,inner sep=0pt,minimum size=4pt] at (1,1) {};
\draw (0,0)--(-1,1);
\draw (0,0)--(1,1);
\draw (0,0)--(-1,-1);
\draw (0,0)--(1,-1);
\draw (-2,0)--(-1,1);
\draw (-2,0)--(-1,-1);

\draw (1.5,0) node{$=$};
\begin{scope}[shift={(4,0)}]
\draw (-.5,1.5) node{$X_1-X_0$};
\node[circle,fill=black,inner sep=0pt,minimum size=4pt] at (-1,1) {};
\node[circle,fill=black,inner sep=0pt,minimum size=4pt] at (-2,0) {};
\node[circle,fill=black,inner sep=0pt,minimum size=4pt] at (-1,-1) {};
\draw (0,0)--(-1,1);
\draw (0,0)--(1,1);
\draw (0,0)--(-1,-1);
\draw (0,0)--(1,-1);
\draw (-2,0)--(-1,1);
\draw (-2,0)--(-1,-1);
\node[circle,fill=black,inner sep=0pt,minimum size=4pt] at (0,0) {};
\node[circle,fill=white,inner sep=0pt,minimum size=3pt] at (0,0) {};
\node[circle,fill=black,inner sep=0pt,minimum size=4pt] at (1,-1) {};
\node[circle,fill=black,inner sep=0pt,minimum size=4pt] at (1,1) {};
\node[circle,fill=white,inner sep=0pt,minimum size=3pt] at (1,-1) {};
\node[circle,fill=white,inner sep=0pt,minimum size=3pt] at (1,1) {};
\draw (2,0) node{$\sqcup$};
\end{scope}
\begin{scope}[shift={(7,0)}]
\draw (0,1.5) node{$X_0$};
\node[circle,fill=black,inner sep=0pt,minimum size=4pt] at (0,0) {};
\node[circle,fill=black,inner sep=0pt,minimum size=4pt] at (1,-1) {};
\node[circle,fill=black,inner sep=0pt,minimum size=4pt] at (1,1) {};
\end{scope}
\end{tikzpicture}
\caption{The minimal homogeneous $\mathcal{L}$-stratification of a triangulation $X$ of a bounded connected set of real solutions to $y^2=x^3+x^2$.} 
\label{fig:homology-elliptic}
\end{center}
\end{figure}
\para{Points sampled from the sundial.}
In this example we study a higher dimensional corner, as a comparison to the second example in this section. We will assume that we have a point sample $\Omega$ of the sundial (see Figure \ref{fig:sundial}), with an open cover $\mathcal{U}$ of the sundial whose nerve is the triangulation given in Figure \ref{fig:triangulated-sundial-poly}. For example, for each vertex $\tau$ in the prescribed triangulation, define a cover element $U_\tau$ to be the image of the open star of $\tau$ under the triangulation homeomorphism of Section \ref{sec:background}. Assume that the point sample $\Omega$ and the set of polynomials $\mathcal{M}$ have the property that if $U$ is an intersection of open sets in the open cover, then $\mathcal{I}(U)=\mathcal{I}_\mathcal{M}(U\cap \Omega)$ (here $\mathcal{I}(U)$ denotes the set of all real valued polynomials which vanish on $U$). We will additionally assume that the base of the sundial is contained in a 2-dimensional plane in $\mathbb{R}^3$, with the complement of the base contained in a subspace perpendicular to the 2-dimensional plane. The resulting $\mathcal{I}_\mathcal{M}$-stratification is given in Figure \ref{fig:triangulated-sundial-poly}.
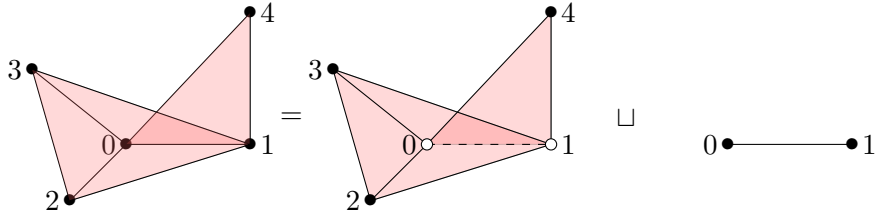
\begin{figure}[ht!]
\begin{center}
\begin{tikzpicture}[scale=1,circ/.style={
    circle,
    fill=white,
    draw,
    outer sep=0pt,
    inner sep=1.5pt
  }]
\draw (0,0) node {$\bullet$};
\draw (0,0) node[left] {0};
\draw (1.65,0) node {$\bullet$};
\draw (1.65,0) node[right] {1};
\draw (1.65,1.75) node {$\bullet$};
\draw (1.65,1.75) node[right] {4};
\draw (-1.25,1) node {$\bullet$};
\draw (-1.25,1) node[left] {3};
\draw (-.75,-.75) node {$\bullet$};
\draw (-.75,-.75) node[left] {2};
\draw (0,0)--(1.65,0);
\draw (0,0)--(1.65,1.75);
\draw (1.65,0)--(1.65,1.75);
\draw (1.65,0)--(-1.25,1);
\draw (1.65,0)--(-.75,-.75);
\draw (-1.25,1)--(-.75,-.75);
\draw (0,0)--(-.75,-.75);
\draw (0,0)--(-1.25,1);
\draw[draw=black,fill=red!60!white, nearly transparent] (0,0)--(1.65,0)--(1.65,1.75);
\draw[draw=black,fill=red!60!white, nearly transparent] (-.75,-.75)--(1.65,0)--(-1.25,1);

\draw (2.2,.35) node{$=$};
\draw[draw=black,fill=red!60!white, nearly transparent] (4+0,0)--(4+1.65,0)--(4+1.65,1.75);
\draw[draw=black,fill=red!60!white, nearly transparent] (4+-.75,-.75)--(4+1.65,0)--(4+-1.25,1);
\draw (4+0,0) node[left] {0};
\draw (4+1.65,0) node[right] {1};
\draw (4+1.65,1.75) node[right] {4};
\draw (4+-1.25,1) node[left] {3};

\draw (4+-.75,-.75) node[left] {2};
\draw[dashed] (4+0,0)--(4+1.65,0);
\draw  (4+0,0)--(4+1.65,1.75);
\draw (4+1.65,0)--(4+1.65,1.75);
\draw (4+1.65,0)--(4-1.25,1);
\draw  (4+1.65,0)--(4+-.75,-.75);
\draw  (4+-1.25,1)--(4+-.75,-.75);
\draw (4+0,0)--(4+-.75,-.75);
\draw (4+0,0)--(4+-1.25,1);
\draw (4+-.75,-.75) node{$\bullet$};
\draw (4+-1.25,1) node{$\bullet$};
\draw (4+1.65,1.75) node{$\bullet$};
\draw (4+1.65,0) node[circ]{};
\draw (4+0,0) node[circ]{};

\draw (6.65,.35) node{$\sqcup$};

\draw (8+0,0) node[left] {0};
\draw (8+1.65,0) node[right] {1};
\draw (8+0,0)--(8+1.65,0);

\draw (8+1.65,0) node{$\bullet$};
\draw (8+0,0) node{$\bullet$};

\end{tikzpicture}
\caption{A triangulated sundial and its minimal homogeneous $\mathcal{I}_\mathcal{M}$-stratification.} 
\label{fig:triangulated-sundial-poly}
\end{center}
\end{figure}

\para{Relation to the mapper construction.}
As a final example, we will show how $\mathcal{I}_\mathcal{M}$-stratifications naturally apply to the mapper construction. 
The mapper algorithm, originally developed in~\cite{SinghMemoliCarlsson2007}, gives a topological description of the fibers of a continuous function. We will illustrate the fundamental concept of the algorithm through an example. Suppose $\Omega$ is an \emph{$\epsilon$-net} of points on the torus $\mathbb{T}$ in $\mathbb{R}^3$ (illustrated in Figure \ref{fig:mapper}). In other words, let $\Omega$ be any finite subset of $\mathbb{T}$ such that the Hausdorff distance between $\Omega$ and $\mathbb{T}$ is less than $\epsilon$. Let
 $$\Omega_\epsilon=\{x\in\mathbb{R}^3:\text{min}_{\omega\in\Omega}||x-\omega||<\epsilon\}$$ be the $\epsilon$-thickening of $\Omega$, and suppose $f$ is a continuous map from $\Omega_\epsilon$ to $\mathbb{R}$. Let $\mathcal{U}$ be an open cover of $f(\Omega_\epsilon)$. The mapper construction of $\mathcal{U}$ and $f$, denoted $\mathcal{M}(\mathcal{U},f)=\mathcal{N}(f^\ast(\mathcal{U}))$, is the nerve of the connected pull back $f^\ast(\mathcal{U})$ of the open cover $\mathcal{U}$, where 
$$f^\ast(\mathcal{U})=\{U\subset\Omega_\epsilon:\text{$U$ is a connected component of }f^{-1}(V)\text{ for some }V\in\mathcal{U}\}.$$

\begin{figure}[ht!]
\begin{center}
\includegraphics[width=0.90\linewidth]{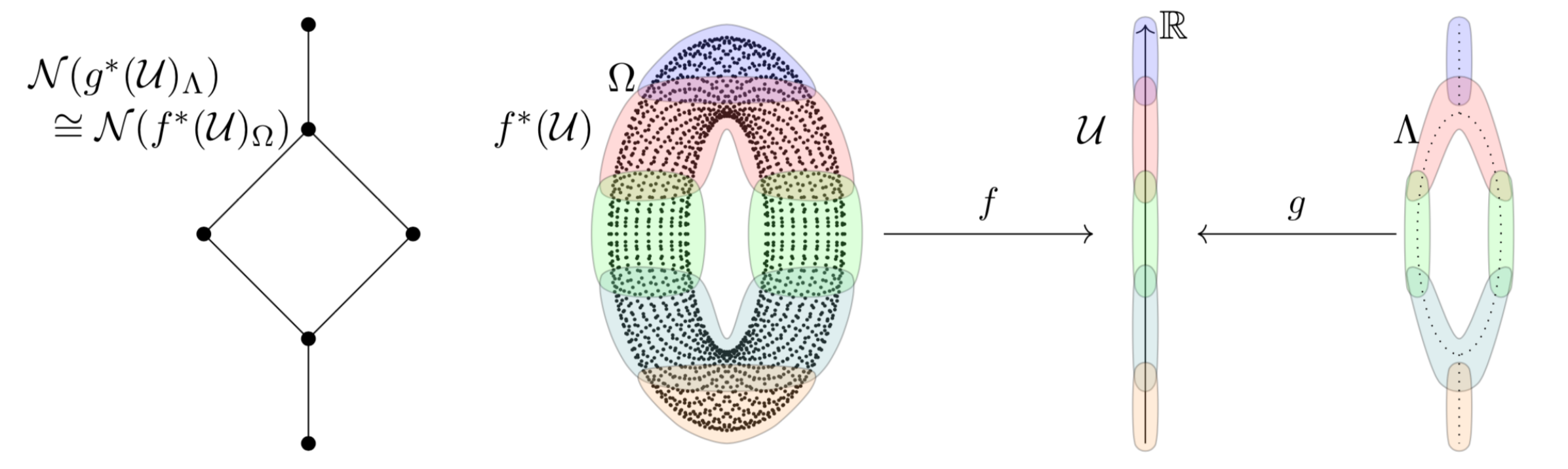} 
\caption{The mapper construction applied to the height function on the torus as well as the height function on an ellipse with two branching lines.} 
\label{fig:mapper}
\end{center}
\end{figure}

The $\mathcal{I}_\mathcal{M}$-stratification is naturally suited to work with the mapper construction studied in~\cite{SinghMemoliCarlsson2007,MunchWang16,CarriereOudot17}. Let $f^\ast(\mathcal{U})_\Omega = \{U\cap \Omega: U\in f^\ast(\mathcal{U})\}$ be the open cover of the point set $\Omega$ (in the discrete topology) induced by $f^\ast(\mathcal{U})$. Additionally, choose $\mathcal{M}$ so that $R_\mathcal{M}$ is the set of all polynomials in three variables with degree less than or equal to 4,
$$
R_\mathcal{M} = \{ax^{n_1}+by^{n_2}+cz^{n_3}:a,b,c\in\mathbb{R}\text{ and }0\le n_1,n_2,n_3\le 4\}.
$$
As an example, we will contrast the $\mathcal{I}_\mathcal{M}$-stratification obtained from $\Omega$, $f$, and $\mathcal{U}$ with the $\mathcal{I}_\mathcal{M}$-stratification obtained from $\Lambda$, $g$, and $\mathcal{U}$, where $\Lambda$ is an $\epsilon$-net of points sampled from the ellipse with two branching lines embedded in $\mathbb{R}^3$ (illustrated in Figure \ref{fig:mapper}), and $g$ is the height function on $\Lambda_\epsilon$. To distinguish the pre-sheaf of vanishing polynomials defined using $\Omega$ and $f$ from the pre-sheaf of vanishing polynomials defined using $\Lambda$ and $g$, we will use the notation $\mathcal{I}_\mathcal{M}^\Omega$ and $\mathcal{I}_\mathcal{M}^\Lambda$, respectively. We can consider the $\mathcal{I}_\mathcal{M}^\Omega$-stratification of $\mathcal{N}(f^\ast(\mathcal{U})_\Omega)$, by taking each open set $V\in f^\ast(\mathcal{U})_\Omega$ to the vector space $I_\mathcal{M}(V)$ of polynomials in $\mathcal{M}$ which vanish on $V$. The $\mathcal{I}^\Omega_\mathcal{M}$-stratification of $\mathcal{N}(f^\ast(\mathcal{U})_\Omega)$ will differ from the $\mathcal{I}^\Lambda_\mathcal{M}$-stratification of $\mathcal{N}(g^\ast(\mathcal{U})_\Lambda)$, even though the underlying topological spaces are homeomorphic. Suppose $R$ and $r$ are the radii of the torus from which the points in $\Omega$ are sampled. For each open set $V\in f^\ast(\mathcal{U})_\Omega$, $\mathcal{I}^\Omega_\mathcal{M}(V)=\mathbb{R}\langle (x^2+y^2+z^2+R^2-r^2)^2-4R^2(x^2+y^2)\rangle$, resulting in the trivial stratification. 
\begin{figure}[ht!]
\begin{center}
\begin{tikzpicture}[scale=.8,circ/.style={
    circle,
    fill=white,
    draw,
    outer sep=0pt,
    inner sep=1.5pt
  }]
  
\begin{scope}[scale = 1, shift={(-4,0)}]
\draw (-1.5,1) node {$\mathcal{N}(f^\ast(\mathcal{U})_\Omega)$};
\draw (1,0)--(0,1);
\draw (-1,0)--(0,1);
\draw (0,1)--(0,2);
\draw (1,0)--(0,-1);
\draw (-1,0)--(0,-1);
\draw (0,-1)--(0,-2);
\node[circle,fill=black,inner sep=0pt,minimum size=4pt] at (-1,0) {};
\node[circle,fill=black,inner sep=0pt,minimum size=4pt] at (1,0) {};
\node[circle,fill=black,inner sep=0pt,minimum size=4pt] at (0,1) {};
\node[circle,fill=black,inner sep=0pt,minimum size=4pt] at (0,2) {};
\node[circle,fill=black,inner sep=0pt,minimum size=4pt] at (0,-1) {};
\node[circle,fill=black,inner sep=0pt,minimum size=4pt] at (0,-2) {};
\end{scope}
\node at (-2,0) {$=$};

\draw (1,0)--(0,1);
\draw (-1,0)--(0,1);
\draw (0,1)--(0,2);
\draw (1,0)--(0,-1);
\draw (-1,0)--(0,-1);
\draw (0,-1)--(0,-2);
\node[circle,fill=black,inner sep=0pt,minimum size=4pt] at (-1,0) {};
\node[circle,fill=black,inner sep=0pt,minimum size=4pt] at (1,0) {};
\node[circle,fill=black,inner sep=0pt,minimum size=4pt] at (0,1) {};
\node[circle,fill=black,inner sep=0pt,minimum size=4pt] at (0,2) {};
\node[circle,fill=black,inner sep=0pt,minimum size=4pt] at (0,-1) {};
\node[circle,fill=black,inner sep=0pt,minimum size=4pt] at (0,-2) {};
\end{tikzpicture}
\caption{The minimal homogeneous $\mathcal{I}^\Omega_\mathcal{M}$-stratification of $\mathcal{N}(f^\ast(\mathcal{U})_\Omega)$.} 
\label{fig:torus-poly}
\end{center}
\end{figure}
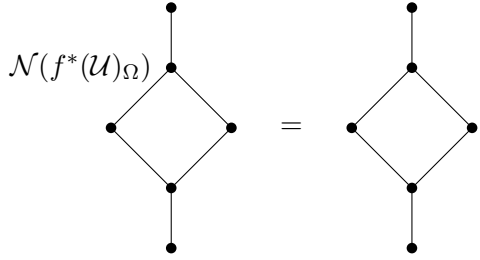
 However, open sets $V\in g^\ast(\mathcal{U})_\Lambda$ containing only points on the branching lines will result in vector spaces generated (as a vector space) by polynomials (in $R_\mathcal{M}$) of the form $ x\cdot p(x,y,z)$. Alternatively, open sets $V\in g^\ast(\mathcal{U})_\Lambda$ which contain only points on the ellipse will result in the vector space $\mathbb{R}\langle cx^2+dy^2-1\rangle$ for constants $c,d\in\mathbb{R}$. Therefore, the $\mathcal{I}^\Lambda_\mathcal{M}$-stratification (depicted in Figure~\ref{fig:reeb-poly}) is nontrivial.
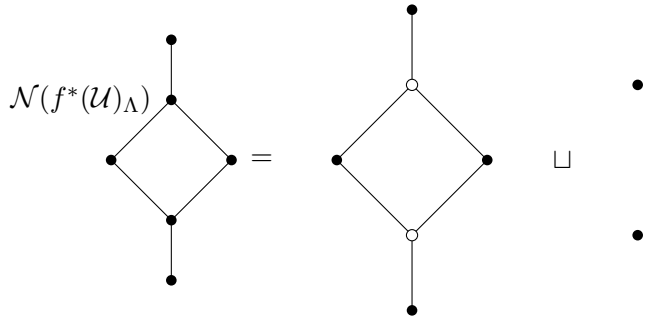
\begin{figure}[ht!]
\begin{center}
\begin{tikzpicture}[scale=1,circ/.style={
    circle,
    fill=white,
    draw,
    outer sep=0pt,
    inner sep=1.5pt
  }]
  
\begin{scope}[ shift={(-4,0)}]
\draw (-1.5,1) node {$\mathcal{N}(f^\ast(\mathcal{U})_\Lambda)$};
\node[circle,fill=black,inner sep=0pt,minimum size=4pt] at (-1,0) {};
\node[circle,fill=black,inner sep=0pt,minimum size=4pt] at (1,0) {};
\node[circle,fill=black,inner sep=0pt,minimum size=4pt] at (0,1) {};
\node[circle,fill=black,inner sep=0pt,minimum size=4pt] at (0,2) {};
\node[circle,fill=black,inner sep=0pt,minimum size=4pt] at (0,-1) {};
\node[circle,fill=black,inner sep=0pt,minimum size=4pt] at (0,-2) {};
\draw (1,0)--(0,1);
\draw (-1,0)--(0,1);
\draw (0,1)--(0,2);
\draw (1,0)--(0,-1);
\draw (-1,0)--(0,-1);
\draw (0,-1)--(0,-2);
\end{scope}
\node at (-2,0) {$=$};

\draw (1,0)--(0,1);
\draw (-1,0)--(0,1);
\draw (0,1)--(0,2);
\draw (1,0)--(0,-1);
\draw (-1,0)--(0,-1);
\draw (0,-1)--(0,-2);
\node[circle,fill=black,inner sep=0pt,minimum size=4pt] at (-1,0) {};
\node[circle,fill=black,inner sep=0pt,minimum size=4pt] at (1,0) {};
\node[circ] at (0,1) {};
\node[circle,fill=black,inner sep=0pt,minimum size=4pt] at (0,2) {};
\node[circ] at (0,-1) {};
\node[circle,fill=black,inner sep=0pt,minimum size=4pt] at (0,-2) {};
\node at (2,0) {$\sqcup$};
\node[circle,fill=black,inner sep=0pt,minimum size=4pt] at (3,1) {};
\node[circle,fill=black,inner sep=0pt,minimum size=4pt] at (3,-1) {};
\end{tikzpicture}
\caption{The minimal homogeneous $\mathcal{I}^\Lambda_\mathcal{M}$-stratification of $\mathcal{N}(g^\ast(\mathcal{U})_\Lambda)$.} 
\label{fig:reeb-poly}
\end{center}
\end{figure}

\section{Proofs of Our Main Results}
\label{sec:proofs}

We detail the proofs of our main theorems, that is, the existence of $\Fcal$-stratifications (Proposition~\ref{theorem:exists}), the existence of coarsest $\Fcal$-stratifications (Theorem~\ref{theorem:coarsest}), and the existence and uniqueness of minimal homogeneous $\Fcal$-stratifications (Theorem~\ref{theorem:uniqueness}). 

\subsection{Proof of Proposition~\ref{theorem:exists}}
\begin{proof}
We can take the finest filtration of $X$, so that each $X_i-X_{i-1}$ consists of a single point (i.e.~element) $S_i=X_i-X_{i-1}=x_i\in X$. Then $X=\coprod_{x_i\in X} x_i$. In order to insure that the corresponding filtration is a filtration by closed subsets, we need to order our points so that $x_i$ is minimal (with the poset ordering) in the complement of $X_{i-1}$. Now we wish to show that $\Fcal\vert_{x_i}$ is locally constant for each $x_i\in X$. This is trivial, and in fact $\Fcal\vert_{x_i}$ is a constant sheaf, since it is a sheaf defined on a topological space consisting of a single point. Therefore $\Fcal$ is constructible with respect to the single point decomposition $X=\coprod_{x_i\in X} x_i$.  
 
 \end{proof}

\subsection{Proof of Theorem~\ref{theorem:coarsest}}
 \label{ap:coarsest}
 \begin{proof}
This theorem can be proved immediately by noticing that there are only finitely many stratifications of $X$ ($X$ being a finite $T_0$-space with finite many points). Since the set of $\Fcal$-stratifications is nonempty, there must be an $\Fcal$-stratification with a minimal number of strata pieces, and such a stratification must be a coarsest $\Fcal$-stratification. However, for the purposes of developing an algorithm, we will prove this constructively by defining each $X_i$ in a coarsest $\Fcal$-stratification. Let $d_0$ be the dimension of $X$ and define $X_{d_0}:=X$. Define
$$
S_{d_0}:=\{x\in X_{d_0} : \Fcal(B_w\subset B_y)\text{ is an isomorphism for all chains }x\le y\le w\}
$$
Set $d_1$ to be the dimension of $X_{d_0}-S_{d_0}$. Then define $X_{d_1}$ to be the complement of $S_{d_0}$ in $X_{d_0}$:
$$X_{d_1}:=X_{d_0}-S_{d_0}
$$
Now each $d_0+1$ chain in $X_{d_0}$ terminates with an element $x$ of $S_{d_0}$ because $\Fcal\vert_{B_x}$ is automatically constant when $x$ is the terminal element of a maximal chain. The dimension of $X_{d_1}$ is strictly less than $d_0$, since each $d_0+1$ chain in $X$ ends with an element of $S_{d_0}$, and thus is not a chain in $X-S_{d_0}$. Define $X_i:=X$ for each $i$ such that $d_1<i< d_0$. Now $X_{d_1}$ is itself a finite $T_0$-space. Let $B^{d_1}_x$ denote the minimal open neighborhood of $x$ in $X_{d_1}$. Then we can use the same condition as above to define $S_{d_1}$:
$$
S_{d_1}:=\{x\in X_{d_1} : \Fcal(B_w\subset B_y)\text{ is an iso. for all chains }x\le y \le w\text{ in }X_{d_1}\}
$$
Again notice that $S_{d_1}$ is not empty since terminal elements of maximal chains are guaranteed to be elements of $S_{d_1}$. Continue to define $d_i$ to be the dimension of $X_{d_{i-1}}-S_{d_{i-1}}$ and $X_{d_i}:=X_{d_{i-1}}-S_{d_{i-1}}$ inductively until $d_i=0$. To fill out the missed indices, define $S_j$ to be empty if $d_i< j < d_{i-1}$ and $X_j:=X_{d_i}$ if $d_i\le j< d_{i-1}$. 

Notice that each $S_i$ is an open subset of $X_i$. Therefore $X_{i-1}$ is closed in $X_i$, and $S_i$ is an open set in $X_i$ (and therefore locally closed in $X$). So we have constructed a stratification of $X$.

Now we will focus on showing that $\Fcal\vert_{S_i}$ is locally constant. If $S_i$ is non-empty, then $S_i=S_{d_k}$ for some $k$. If we want to show that $\Fcal\vert_{S_i}$ is locally constant, we need to check that $(\Fcal\vert_{S_i})\vert_{B^i_x}$ is locally constant for each $x\in S_i$ (where $B^i_x=B_x\cap X_i$). Consider the presheaf $\mathcal{E}$ on $B^i_x$, which maps each open set $U\subset B^i_x$ to $\Fcal(B_x)$, and each morphism $U\subset V$ to the identity morphism. So we have $\mathcal{E}(U)=\Fcal(B_x)$ for all $U\subset B^i_x$, and $\mathcal{E}(U\subset V)=\text{id}:\Fcal(B_x)\rightarrow \Fcal(B_x)$ for all $U\subset V\subset B_x^i$. Notice that the sheafification of $\mathcal{E}$ is by definition a constant sheaf. Let $\mathcal{E}'$ be the presheaf on $B^i_x$ defined by $\mathcal{E}'(U)=\Fcal(\str(U))$ and $\mathcal{E}'(U\subset V)=\Fcal(\str(U)\subset \str(V))$. Notice that the sheafification of $\mathcal{E}'$ is by definition $(\Fcal\vert_{S_i})\vert_{B^i_x}$. We  want to show that the sheafification of $\mathcal{E}$ is isomorphic to the sheafification of $\mathcal{E}'$. Recall that it is enough to show that $\mathcal{E}$ and $\mathcal{E}'$ agree on minimal open sets $B_y^i$, and give the same restriction maps between minimal open sets. We have the equalities (as morphisms) $\mathcal{E}'(B^i_y\subset B^i_w)=\Fcal(B_y\subset B_w)=\Fcal(B_x\subset B_x)=\mathcal{E}(B^i_y\subset B^i_w)$, which we obtain by applying our definition of $\mathcal{E}'$, the assumption (made in our definition of $S_i$) that $\Fcal(B_y\subset B_w)$ is an isomorphism for all $x\le y\le w\in X_i$, and the definition of $\mathcal{E}$. These equalities further imply that $\mathcal{E}'(B^i_y)=\mathcal{E}(B_y^i)$. So we have shown that the sheafification of $\mathcal{E}$ is isomorphic to the sheafification of $\mathcal{E}'$, which is a constant sheaf. Therefore $(\Fcal\vert_{S_i})\vert_{B_x}$ is constant, which implies that $\Fcal\vert_{S_i}$ is locally constant, which implies that $\Fcal$ is constructible with respect to the decomposition $X=\coprod S_i$. So we have constructed an $\Fcal$-stratification. 
 
 Now suppose that there exists a coarser $\Fcal$-stratification
 $$
 \emptyset\subset X'_0\subset\cdots\subset X'_n=X
 $$
 We will continue by using the notation $S^\circ_i$ (respectively $S'_j{}^\circ $) to denote a connected component of $S_i$ (respectively $S_j'$). Suppose $S^\circ_i\subsetneq S'_j{}^\circ$. Let $x\in S'_j{}^\circ-S^\circ_i$. Since $\Fcal$ is locally constant when restricted to $S'_j{}^\circ$, we have that $\Fcal$ is constant when restricted to $B_x\cap S'_j{}^\circ$. Notice that $B_x\cap S^\circ_i\subset B_x\cap S'_j{}^\circ$. Therefore $\Fcal$ is constant when restricted to $B_x\cap S^\circ_i$. Since $S^\circ_i$ is an open subset of $X_i$, we have that $B_x\cap X_i=B_x\cap S^\circ_i$. So we can finally conclude that $\Fcal$ is constant when restricted to $B_x\cap X_i$. However, by the definition of $S_{i}$ above, we see that $x$ must be an element of $S_i$. Therefore $S_i^\circ\subset S'_j{}^\circ\subset S_i$, which implies that $S_i^\circ =S'_j{}^\circ$. Therefore each stratum piece of the stratification $\emptyset\subset X_0\subset\cdots\subset X_n=X$ is equal to a stratum piece of the stratification $\emptyset\subset X'_0\subset\cdots\subset X'_n=X$. So we can conclude that these two stratifications are equivalent, which concludes the proof.  
\end{proof}

\subsection{Proof of Theorem~\ref{theorem:uniqueness}}
\label{ap:uniqueness}
 \begin{proof}
We will prove this constructively by defining each $X_i$ in a minimal homogeneous $\Fcal$-stratification, and then showing that any minimal homogeneous $\Fcal$-stratification is necessarily equal to the stratification constructed below. In many ways, this proof is similar to the proof of Theorem \ref{theorem:coarsest}. Let $d_0$ be the dimension of $K$ and $X_{d_0}=X$. Define
$$
H_{d_0}:=\{x\in X_{d_0} : \text{Cl}(B_x)\text{ is homogeneous of dimension $d_0$}\}
$$ ($H$ for homogeneous) and 
$$
C_{d_0}:=\{x\in H_{d_0} : \Fcal(B_w\subset B_y)\text{ is an iso. for all }x\le y\le w\}
$$ 
($C$ for constant) where $\text{Cl}(B_x)=\{y\in X_{d_0}:y\le s\text{ for some }s\in B_x\}$ is the closure of $B_x$. Then define $S_{d_0}=H_{d_0}\cap C_{d_0}$. Set ${d_1}$ to be the dimension of $X_{d_0}-S_{d_0}$. Then define $X_{d_1}$ to be $X_{d_0}-S_{d_0}$. Now each ${d_0}+1$ chain in $X_{d_0}$ terminates with an element $x$ of $S_{d_0}$ because Cl$(x)$ is homogeneous of dimension ${d_0}$ by our assumption that $X_{d_0}$ consists of simplices of a simplicial complex. We have that ${d_1}$ is strictly less than ${d_0}$, since each ${d_0}+1$ chain in $X_{d_0}$ ends with an element of $S_{d_0}$, and thus is not a chain in $X_{d_1}$. Define $X_i:=X_{d_1}$ for each $i$ such that ${d_1}<i< {d_0}$. Now $X_{d_1}$ is itself a finite $T_0$-space. Let $B^{d_1}_x$ denote the minimal open neighborhood of $x$ in $X_{d_1}$. Then we can use the same condition as above to define $$
H_{d_1}:=\{x\in X_{d_1} : \text{Cl}(B^{d_1}_x)\text{ is homogeneous of dimension ${d_1}$}\}
$$
and
$$
C_{d_1}:=\{x\in H_{d_1} : \Fcal(B_w\subset B_y)\text{ is an iso. for all chains }x\le y\le w\text{ in }H_{d_1}\}
$$ 
As before, let $S_{d_1}=H_{d_1}\cap C_{d_1}$, and notice that $S_{d_1}$ is not empty since $X_{d_1}$ corresponds to a sub-simplicial complex of $K$. Continue to define $H_{d_k}$, $C_{d_k}$, $S_{d_k}$, and $X_{d_{k+1}}$ inductively until $d_k=0$. To fill out the missed indices, define $S_i$ to be empty if $d_j< i < d_{j-1}$ and $X_i:=X_{d_j}$ if $d_j<i<d_{j-1}$. 
 
 Notice that each $S_i$ is an open subset of $X_i$. Therefore $X_{i-1}$ is closed in $X_i$, and $X_i-X_{i-1}$ is an open set in $X_i$ (and therefore locally closed in $X$). Additionally, $\text{Cl}(X_i-X_{i-1})=\text{Cl}(S_i)$ is either empty or is homogeneous of dimension $i$. So we have constructed a homogeneous stratification of $X$. Now we wish to show that this is a homogeneous $\Fcal$-stratification. It remains to show that $\Fcal\vert_{S_i}$ is locally constant. This follows the same argument as in the proof of Theorem \ref{theorem:coarsest}. So $\Fcal$ is constructible with respect to the stratification given by the filtration $0\subset X_0\subset\cdots\subset X_{d_0}=X$. We will denote this stratification by $\mathfrak{X}$. \par
 Suppose that there exists a minimal homogeneous $\Fcal$-stratification
 $$
 \emptyset\subset X'_0\subset\cdots\subset X'_{d_0}=X
 $$
 denoted by $\mathfrak{X}'$. Now $S'_i$ must contain all of the elements of $X'_i$ which correspond to $i$-simplices in $K$. Moreover, for each element $x\in S'_i$, there exists $y\in X'_i$ corresponding to an $i$-simplex in $K$, such that $x\le y$ (due to the homgeneity of $X'_i-X'_{i-1}$). Suppose $a\in S_n$ and $b\in S'_n$ such that $a\le b$ (an analogous argument follows for $b\le a$). Since $b$ is necessarily a face of an $n$-simplex $\tau\in X'_{n}=X_n=X$, we have $a\le b \le \tau$. Since $\tau$ is a an $n$-simplex, we have that $\tau\in S_n$. Since $\Fcal$ is assumed to be locally constant when restricted to $S_i$ and $S'_i$, we have that $\rho_{ B_a, B_\tau}$ and $\rho_{B_b, B_\tau}$ are isomorphisms. By the third part of the sheaf axiom (which is also true for presheaves), we have that $\rho_{B_b,B_\tau }\circ\rho_{ B_a,B_b}=\rho_{B_a,B_\tau} $. Therefore, $\Fcal(B_b\subset B_a)$ is an isomorphism. So if we set $S''_n:=S_n\cup S'_n$, then $\text{Cl}(S''_n)$ is homogeneous of dimension $n$ and $\Fcal\vert_{S''_n}$ is locally constant. However, by our construction of $S_n$, we can see that $S_n$ is the maximal set with these properties. So $S_n\subset S''_n$ implies that $S_n=S''_n$. This implies that $S'_n\subset S_n=X-X_{n-1}$. If $S'_n\subsetneq S_n$, then we would have that $\mathfrak{X}<\mathfrak{X}'$, which would contradict the minimality of $\mathfrak{X}'$. So we must have that $S_n=S_n'$, which implies that $X_{n-1}'=X_{n-1}$. This allows us to inductive use the same argument above to show that $X_i'=X_i$ for all $i$. Therefore the two stratifications are equal, which concludes the proof.  
\end{proof}

\section{Discussion}
\label{sec:discussion}

Many problems in computational geometry and topology are solved by finding suitable combinatorial models which reflect the geometric or topological properties of a particular space of interest. One way to study properties of a space such as the pinched torus in Figure \ref{fig:pinchedtorus} is to begin by finding a triangulation of the space, which in some sense provides a combinatorial model which is amenable to computation. The corresponding triangulation can be thought of as a stratification of the space, by defining the $d$-dimensional stratum to be the collection of $d$-dimensional simplices. However, this stratification is too ``fine'' in a sense, as it breaks up the underlying space into too many pieces, resulting in each stratum piece retaining relatively little information about the underlying geometry of the total space. The results of this paper can be interpreted as a method for computing a coarsening of the stratification obtained from the triangulation of our underlying space, using homological techniques (or more generally, sheaf-theoretic techniques) to determine when two simplices should belong to the same coarser stratum. 

There are two key features of the sheaf-theoretic stratification learning algorithm which should be highlighted. The first feature is that we avoid computations which require the sheafification process. At first glance this may be surprising to those not familiar with cellular sheaves, since constructible sheaves can not be defined without referencing sheafification, and our algorithm builds a stratification for which a given sheaf is constructible. 
In other words, each time we want to determine the restriction of a sheaf to a subspace, we need to compute the sheafification of the presheaf referenced in the definition of the pull back of a sheaf (Section \ref{subsec:sheaves}). 
We can avoid this by noticing two facts. Suppose $\mathcal{E}$ is a presheaf and $\mathcal{E}^+$ is the sheafification of $\mathcal{E}$. First, in the setting of finite $T_0$-spaces, we can deduce if $\mathcal{E}^+$ is constant by considering how it behaves on minimal open neighborhoods. Second, the behavior of $\mathcal{E}^+$ will agree with the behavior of the presheaf $\mathcal{E}$ on minimal open neighborhoods. Symbolically, this is represented by the equalities 
$\mathcal{E}^+(B_x)= \mathcal{E}(B_x)$ and $ \mathcal{E}^+(B_w\subset B_x)=\mathcal{E}(B_w\subset B_x)$ for all pairs of minimal open neighborhoods $B_w\subset B_x$ (where $B_x$ is a minimal open neighborhood of $x$, and $B_w$ is a minimal open neighborhood of an element $w\in B_x$). Therefore, we can determine if $\mathcal{E}^+$ is constant, locally constant, or constructible, while only using computations involving the presheaf $\mathcal{E}$ applied to minimal open neighborhoods. 

The second feature of our algorithm (which is made possible by the first) is that the only sheaf-theoretic computation required is checking if $\Fcal(B_w\subset B_x)$ is an isomorphism for each pair $B_w\subset B_x$ in our space. This is extremely relevant for implementations of the algorithm, as it minimizes the number of expensive computations required to build an  $\Fcal$-stratification. For example, if our sheaf is the local homology sheaf, we will only need to compute the restriction maps between local homology groups of minimal open neighborhoods. The computation of local homology groups can therefore be distributed and computed independently. Additionally, once we have determined whether the local homology restriction maps are isomorphisms, we can quickly compute a coarsest $\Lcal$-stratification, or a minimal homogeneous $\Lcal$-stratification, without requiring any local homology groups to be recomputed. As we saw with the pre-sheaf of vanishing polynomials $\mathcal{I}_\mathcal{M}$, there are often cases of stratifications which rely only on the image of the sheaf applied to each minimal open set, making computations even more streamlined. 

There are several interesting questions related to $\Fcal$-stratifications that we will investigate in the future. 
We are interested in the stability of local homology based stratifications under refinements of triangulations of polyhedra. In this direction, it would be useful to view $\Lcal$-stratifications from the perspective of persistent homology. If we are given a point cloud sampled from a compact polyhedron, it would be natural to ask about the asymptotics of persistent local homology based stratifications. 

\section*{Acknowledgements}
This work was partially supported by NSF IIS-1513616 and NSF ABI-1661375. 

\bibliographystyle{alpha}
\bibliography{sheaf}

\begin{thebibliography}{10}

\bibitem{Alexandroff1937}
P.~S. Alexandroff.
\newblock {Diskrete R\"{a}ume}.
\newblock {\em Mathematiceskii Sbornik}, 2:501--518, 1937.

\bibitem{Bendich2008}
P.~Bendich.
\newblock {\em Analyzing Stratified Spaces Using Persistent Versions of
  Intersection and Local Homology}.
\newblock PhD thesis, Duke University, 2008.

\bibitem{BendichCohen-SteinerEdelsbrunner2007}
P.~Bendich, D.~Cohen-Steiner, H.~Edelsbrunner, J.~Harer, and D.~Morozov.
\newblock Inferring local homology from sampled stratified spaces.
\newblock {\em IEEE Symposium on Foundations of Computer Science}, pages
  536--546, 2007.

\bibitem{BendichHarer2011}
P.~Bendich and J.~Harer.
\newblock Persistent intersection homology.
\newblock {\em Foundations of Computational Mathematics}, 11:305--336, 2011.

\bibitem{BendichWangMukherjee2012}
P.~Bendich, B.~Wang, and S.~Mukherjee.
\newblock Local homology transfer and stratification learning.
\newblock {\em ACM-SIAM Symposium on Discrete Algorithms (SODA)}, pages
  1355--1370, 2012.

\bibitem{BreidingKalisnikSturmfels2018}
P.~Breiding, S.~Kalisnik, B.~Sturmfels, and M.~Weinstein.
\newblock Learning algebraic varieties from samples.
\newblock {\em Revista Matematica Complutense}, 31:545--593, 2018.

\bibitem{BrownWang2018}
A.~Brown and B.~Wang.
\newblock {Sheaf-Theoretic Stratification Learning}.
\newblock In {\em 34th International Symposium on Computational Geometry},
  volume~99 of {\em Leibniz International Proceedings in Informatics (LIPIcs)},
  pages 14:1--14:14, Dagstuhl, Germany, 2018. Schloss Dagstuhl--Leibniz-Zentrum
  fuer Informatik.

\bibitem{CarriereOudot17}
M.~Carri{\`e}re and S.~Oudot.
\newblock Structure and stability of the one-dimensional mapper.
\newblock {\em Foundations of Computational Mathematics}, Oct 2017.

\bibitem{CianciOttina2017}
N.~Cianci and M.~Ottina.
\newblock A new spectral sequence for homology of posets.
\newblock {\em Topology and its Applications}, 217:1 -- 19, 2017.

\bibitem{Curry2014}
J.~Curry.
\newblock {\em Sheaves, Cosheaves and Applications}.
\newblock PhD thesis, University of Pennsylvania, 2014.

\bibitem{EdelsbrunnerHarer2010}
H.~Edelsbrunner and J.~Harer.
\newblock {\em Computational Topology: An Introduction}.
\newblock American Mathematical Society, 2010.

\bibitem{GoreskyMacPherson1982}
M.~Goresky and R.~MacPherson.
\newblock Intersection homology theory.
\newblock {\em Topology}, 19(2):135--162, 1980.

\bibitem{GoreskyMacPherson1983}
M.~Goresky and R.~Mac{P}herson.
\newblock Intersection homology {II}.
\newblock {\em Inventiones Mathematicae}, 71:77--129, 1983.

\bibitem{GoreskyMacPherson1988}
M.~Goresky and R.~Mac{P}herson.
\newblock {\em Stratified Morse Theory}.
\newblock Springer-Verlag, 1988.

\bibitem{HabeggerSaper1991}
N.~Habegger and L.~Saper.
\newblock Intersection cohomology of {cs-spaces} and {Zeeman's} filtration.
\newblock {\em Inventiones mathematicae}, 105(1):247--272, 1991.

\bibitem{HaroRandallSapiro2005}
G.~Haro, G.~Randall, and G.~Sapiro.
\newblock Stratification learning: Detecting mixed density and dimensionality
  in high dimensional point clouds.
\newblock {\em Advances in Neural Information Processing Systems (NIPS)}, 17,
  2005.

\bibitem{Hatcher2000}
A.~Hatcher.
\newblock {\em Algebraic Topology}.
\newblock Cambridge University Press, 2002.

\bibitem{Kirwan2006}
F.~C. Kirwan and J.~Woolf.
\newblock {\em An introduction to intersection homology theory}.
\newblock Chapman \& Hall/CRC, 2006.

\bibitem{LermanZhang2010}
G.~Lerman and T.~Zhang.
\newblock Probabilistic recovery of multiple subspaces in point clouds by
  geometric lp minimization.
\newblock {\em Annals of Statistics}, 39(5):2686--2715, 2010.

\bibitem{McCord1978}
M.~C. McCord.
\newblock Singular homology groups and homotopy groups of finite topological
  spaces.
\newblock {\em Duke Mathematical Journal}, 33:465--474, 1978.

\bibitem{Mio2000}
W.~Mio.
\newblock Homology manifolds.
\newblock {\em Annals of Mathematics Studies (AM-145)}, 1:323--343, 2000.

\bibitem{MunchWang16}
E.~Munch and B.~Wang.
\newblock Convergence between categorical representations of {R}eeb space and
  {M}apper.
\newblock In S.~Fekete and A.~Lubiw, editors, {\em 32nd International Symposium
  on Computational Geometry (SoCG 2016)}, volume~51 of {\em Leibniz
  International Proceedings in Informatics (LIPIcs)}, pages 53:1--53:16,
  Dagstuhl, Germany, 2016. Schloss Dagstuhl--Leibniz-Zentrum fuer Informatik.

\bibitem{Munkres1984}
J.~R. Munkres.
\newblock {\em Elements of algebraic topology}.
\newblock Addison-Wesley, Redwood City, CA, USA, 1984.

\bibitem{Nanda2017}
V.~Nanda.
\newblock Local cohomology and stratification.
\newblock ArXiv:1707.00354, 2017.

\bibitem{RourkeSanderson1999}
C.~Rourke and B.~Sanderson.
\newblock Homology stratifications and intersection homology.
\newblock {\em Geometry and Topology Monographs}, 2:455--472, 1999.

\bibitem{Shepard1985}
A.~D. Shepard.
\newblock {\em A Cellular Description Of The Derived Category Of A Stratified
  Space}.
\newblock PhD thesis, Brown University, 1985.

\bibitem{SinghMemoliCarlsson2007}
G.~Singh, F.~Memoli, and G.~Carlsson.
\newblock Topological methods for the analysis of high dimensional data sets
  and {3D} object recognition.
\newblock In M.~Botsch, R.~Pajarola, B.~Chen, and M.~Zwicker, editors, {\em
  Eurographics Symposium on Point-Based Graphics}. The Eurographics
  Association, 2007.

\bibitem{SkrabaWang2014}
P.~Skraba and B.~Wang.
\newblock Approximating local homology from samples.
\newblock {\em ACM-SIAM Symposium on Discrete Algorithms (SODA)}, pages
  174--192, 2014.

\bibitem{VidalMaSastry2005}
R.~Vidal, Y.~Ma, and S.~Sastry.
\newblock Generalized principal component analysis ({GPCA}).
\newblock {\em IEEE Transactions on Pattern Analysis and Machine Intelligence},
  27:1945--1959, 2005.

\bibitem{Weinberger1994}
S.~Weinberger.
\newblock {\em The topological classification of stratified spaces}.
\newblock University of Chicago Press, Chicago, IL, 1994.

\end{thebibliography}


\end{document}